\let\footnote=\endnote
\definecolor{strcolor}{rgb}{0.6, 0.2, 0.6}
\definecolor{commentcolor}{rgb}{0.3125, 0.5, 0.3125}
\definecolor{keycol}{rgb}{0, 0, 1}
\pgfplotsset{compat=1.18}
\newcommand{\V}{\mathcal{V}}
\newcommand{\E}{\mathcal{E}}
\newcommand{\C}{\mathcal{C}}
\newcommand{\I}{\mathcal{I}}
\newcommand{\R}{\mathbb{R}}
\newcommand{\Z}{\mathbb{Z}}
\newcommand{\F}{\mathcal{F}}
\newcommand{\Fh}{\mathcal{F}_{\textnormal{high}}}
\newcommand{\Fl}{\mathcal{F}_{\textnormal{low}}}
\newcommand{\Fe}{\mathcal{F}_{\textnormal{eq}}}
\newcommand{\RMP}[1]{\textnormal{(LP(#1))}}
\DeclareMathOperator{\Supp}{Supp}
\DeclareMathOperator*{\dprime}{\prime \prime}
\DeclarePairedDelimiter{\ceil}{\lceil}{\rceil}
\DeclarePairedDelimiter{\floor}{\lfloor}{\rfloor}
\newcommand{\DRE}[2]{ \textnormal{D}_{\textnormal{RE}}\left( #1 \, \middle\| \, #2 \right)} 
\newcommand{\A}{\textnormal{A}}
\newcommand{\D}{\textnormal{D}}
\renewcommand{\AA}{\mathcal{A}}
\newcommand{\AD}{\mathcal{D}}
\newcommand{\rA}{r_{\textnormal{A}}}
\newcommand{\rD}{r_{\textnormal{D}}}
\newcommand{\sigmaA}{\sigma^{\textnormal{A}}}
\newcommand{\sigmaD}{\sigma^{\textnormal{D}}}
\newcommand{\sigmaAEq}{\sigma^{\textnormal{A}^{*}}}
\newcommand{\sigmaDEq}{\sigma^{\textnormal{D}^{*}}}
\newcommand{\sigmaDHat}{\widehat{\sigma\,}^{\textnormal{D}}}
\newcommand{\rhoAHat}{\widehat{\rho\,}^{\textnormal{A}}}
\newcommand{\DeltaA}{\Delta_{\textnormal{A}}}
\newcommand{\DeltaD}{\Delta_{\textnormal{D}}}
\newcommand{\rhoA}{\rho^{\textnormal{A}}}
\newcommand{\rhoAEq}{\rho^{\textnormal{A}^{*}}}
\newcommand{\ThetaA}{\Theta_{\textnormal{A}}}
\newcommand{\LI}{\textnormal{I}}    
\newcommand{\LII}{\textnormal{II}}    
\newcommand{\LIII}{\textnormal{III}}    
\newcommand{\LIV}{\textnormal{IV}}    
\definecolor{fred}{RGB}{237,41,57}
\newcommand{\eg}{\textit{e.g.}}
\newcommand{\ie}{\textit{i.e.}}
\def\blot{\quad \mbox{$\vcenter{ \vbox{ \hrule height.4pt
				\hbox{\vrule width.4pt height.9ex \kern.9ex \vrule width.4pt}
				\hrule height.4pt}}$}}
\gdef\AQ#1{}
\gdef\CQ#1{}
\begin{document}
	
\def\COPYRIGHTHOLDER{INFORMS}%
\def\COPYRIGHTYEAR{2017}%
\def\DOI{\fontsize{7.5}{9.5}\selectfont\sf\bfseries\noindent https://doi.org/10.1287/opre.2017.1714\CQ{Word count = 9740}}

	\RUNAUTHOR{Bahamondes and Dahan} %

	\RUNTITLE{Inspection Game with Location-Specific Detection Capabilities}

\TITLE{Inspection Game with Location-Specific Detection Capabilities: Exact and Approximate Algorithms for Strategic Resource Coordination}


\ARTICLEAUTHORS{

\AUTHOR{Basti\'an Bahamondes, Mathieu Dahan}
\AFF{H. Milton Stewart  School of Industrial and Systems Engineering, Georgia Institute of Technology, Atlanta, Georgia 30332, \{\EMAIL{bbahamondes3@gatech.edu}, \EMAIL{mathieu.dahan@isye.gatech.edu}\}}




}

\ABSTRACT{We consider a zero-sum inspection game, in which a defender positions detectors across a critical system to detect multiple attacks caused by an attacker. We assume that detection is imperfect, and each detector location is associated with a probability of detecting attacks within its set of monitored system components. The objective of the defender (resp. attacker) is to minimize (resp. maximize) the expected number of undetected attacks. To compute Nash equilibria for this large-scale zero-sum game, we formulate a linear program with a small number of constraints that can be solved via column generation. We provide an exact mixed-integer program for the pricing problem and leverage its supermodular structure to design two efficient approaches for computing approximate Nash equilibria with theoretical guarantees: A column generation and a multiplicative weights update algorithm, both with approximate best responses. To address the computational challenges posed by combinatorial attacker strategies, each iteration of our multiplicative weights update algorithm computes a projection onto the polytope of marginal attack probabilities under the unnormalized relative entropy, for which we derive a closed-form expression and a linear-time algorithm. Computational results on real-world gas distribution networks illustrate the performance and scalability of our solution approaches.}

\SUBJECTCLASS{Military: search/surveillance; Games/group decisions: noncooperative; Programming: linear: large-scale systems}


\KEYWORDS{Strategic inspection; zero-sum game; resource coordination; imperfect detection; column generation; multiplicative weights update}

\maketitle

\section{Introduction}\label{sec:introduction}

\subsection{Motivation}

Modern societies' welfare significantly relies on critical infrastructure systems, including power grids, transportation and telecommunication systems, and water, oil or gas distribution pipelines. However, these systems have increasingly become targets for malicious cyber-physical attacks, as evidenced by past incidents that have significantly disrupted their operations. In 2015, Ukraine experienced several foreign cyberattacks that compromised the supervisory control and data acquisition (SCADA) systems managing its power grid. These attacks resulted in widespread power outages, affecting 225,000 customers (\citet{lee2016analysis}). In 2021, the Colonial Pipeline ransomware attack led to substantial disruptions in fuel supply along the East Coast of the United States, causing losses of $\$4{.}4$ million USD (\citet{beerman2023review}). 
Such targeted attacks pose a severe threat to the reliability and functionality of critical infrastructures, highlighting the urgent need for robust inspection systems capable of detecting both random anomalies and adversarial attacks. Developing effective inspection mechanisms has become a paramount requirement to ensure the continuous operation, safety, and resilience of these networks (\citet{mo2011cyber,sandberg2015cyberphysical,weerakkody2019challenges}).

Numerous efforts have been deployed in the development of inspection technologies that monitor critical parameters such as temperature, pressure, flow rates, and voltage levels to assess the operational status and performance of various networks. These technological advances, often relying on online data acquisition and continual monitoring through the flexible allocation of sensors within the system enable operators to promptly identify anomalies or potential threats and perform proactive maintenance tasks and timely mitigation responses (\citet{vidacs2015wireless}).

In the context of large-scale systems and costly detection technology, the limited availability of sensing resources can pose significant limitations for the continuous monitoring of all the system components. Moreover, sensing reliability can be locally undermined by several factors, such as obstacles and harsh environmental conditions. For instance, detection systems based on infrared thermography (IRT) can identify irregular temperature patterns around gas pipelines, signaling potential leaks. Yet, high-resolution infrared cameras are expensive, and IRT's efficiency might significantly decrease for buried pipelines or within the presence of covering materials such as concrete (\citet{adegboye2019recent}). 


Hence, in this article we aim to investigate the following research question: \emph{How to effectively coordinate limited inspection resources within a critical system with location-specific detection capabilities, in order to optimize the detection of multiple adversarial attacks on its components?}

\subsection{Contributions}


%

To address the research question, we investigate a two-person zero-sum inspection game between a defender and an attacker who strategically coordinate multiple resources within a system. Specifically, the defender randomizes the placement of detectors across a set of locations, while the attacker targets multiple system components. Placing a detector at a location enables the defender to monitor a subset of components, which we refer to as \emph{monitoring set}. To capture the local effects undermining detection capabilities, we assume that upon inspection, the defender identifies each attack within a monitoring set with a location-specific probability. Our model allows monitoring sets to overlap, enabling the defender to simultaneously monitor components from multiple locations, which is particularly relevant when inspecting critical infrastructure systems.
The defender (resp. attacker) aims to minimize (resp. maximize) the expected number of undetected attacks.


Our model extends previous works in the literature, which typically involve a single unit of resource for one or both players (\citet{von1953certain,washburn1995two,karlin2016game}), assume perfect detection capabilities (\citet{dahan2022network}), or restrict detection to nonoverlapping monitoring sets (\citet{bahamondes2024hide}). More generally, our contribution expands the literature on combinatorial games and randomized robust optimization with \emph{supermodular} structure, which has been less explored compared to similar settings with submodular structure, enhancing our understanding of strategic interactions in complex systems.


Computing Nash equilibrium (NE) strategies for this large-scale zero-sum game is challenging, as the number of players' strategies grows combinatorially with the size of the system and the number of available resources. Instead, we leverage the structure of our game to compactly represent the attacker's strategies in terms of marginal probabilities of targeting individual components. Using this representation, we formulate a linear program (LP) with a small number of constraints that can be solved via column generation to obtain exact NE strategies (Proposition~\ref{prop:LP1 reformulation}). The pricing problem, corresponding to the defender's pure best response problem, consists in finding a detector positioning that minimizes the supermodular expected number of undetected attacks, subject to a cardinality constraint. We show that this problem is NP-hard (Proposition~\ref{prop:NP_hardness_Defender_BR_2}) and provide a compact mixed-integer programming (MIP) formulation (Proposition~\ref{prop:MIP_optimality}).

Next, leveraging the supermodular structure of the defender's best response problem, we rely on both the forward and reverse greedy algorithms to compute approximate best responses, whose guarantees depend on the curvature of the defender's payoff function given the attacker's strategy (\citet{ilev2001approximation,sviridenko2017optimal}). This enables us to compute approximate equilibrium strategies for our game by devising a column generation algorithm with approximate best responses for the pricing problem. This method generates approximate equilibria by inheriting the approximation guarantees for the pricing problem.
%
%
%
%
%
%
%
%
We extend a result by \citet{guo2016optimal} by allowing early termination of column generation at the cost of a small additive error, and by simplifying the convergence criterion (Theorem~\ref{thm:column_generation_with_approx_best_response}).

We derive another approach for computing approximate equilibria, by integrating the Multiplicative Weights Update (MWU) algorithm with an approximation algorithm for the defender's best response problem, as in \citet{krause2011randomized}.
%
%
%
This integration yields similar approximation guarantees to those of column generation, but in polynomial time (Theorem~\ref{thm:MWU_epsilon_NE}). However, the combinatorial number of attacker's strategies in our game renders the update step in \citeauthor{krause2011randomized}'s implementation of the MWU algorithm computationally prohibitive. We address this issue by implementing the MWU algorithm on the attacker's marginal probabilities, instead of the original mixed strategies used in the standard implementation. This adaptation requires solving an additional projection problem at each iteration to recover feasibility of the attacker's marginal attack probabilities after the update step.

We analytically solve the projection problem arising in our implementation of the MWU algorithm by characterizing the projection under the unnormalized relative entropy onto the \emph{full-dimensional} capped simplex polytope (Theorem~\ref{thm:projection_closed_form}). To achieve this result, we identify a precise condition that determines whether the resource constraint is either tight or not at optimality. To handle the former case, we employ a Lagrangian formulation of the projection problem and circumvent the nonsmoothness of the unnormalized relative entropy by dualizing the equality constraint and then solving independent single-variable optimization problems for each entry of the projection. The resulting closed-form solution involves a key index parameter, which we compute using a linear-time algorithm (Theorem~\ref{thm:fast_projection}). Our algorithm's running time matches that of previous optimal algorithms designed for variants of the \emph{lower-dimensional} capped simplex and related polytopes (\citet{herbster2001tracking,krichene2015efficient,lim2016efficient,robinson2021improved}), and is faster than other more general-purpose algorithms intended for computing Bregman projections (\citet{suehiro2012online,gupta2016solving}). 

    
We conduct a computational study to evaluate the performance of our solution approaches in real-world instances from gas distribution networks across the United States and Europe. By considering various network configurations, we provide insights into the performance of our exact and approximate solution methods. Furthermore, we analyze the quality of approximations and scalability of our approximate solution methods with different detector allocation scenarios, providing insights into the practical applicability of each method. Our computational results validate the theoretical underpinnings of our proposed algorithms and offer practical guidelines for efficiently addressing real-world network security challenges.

\section{Literature Review}
\label{sec:literature_review}

Game theory has emerged as a powerful tool applied to security related problems (see, for example, \citet{pita2008deployed,kiekintveld2011computing,zhu2015game,gupta2016dynamic,hota2016optimal,bertsimas2016power,miao2018hybrid,pirani2021strategic}). One of such practical applications arises in problems of strategic sensor placement for infrastructure inspection, in which a defender positions sensors in locations of a critical infrastructure system to detect attacks caused by a strategic attacker. Such models typically account for the detection range of the sensors: Positioning a sensor allows the defender to monitor a subset of components---often referred to as a monitoring set---and potentially detect attacks occurring within it. As a result, attacks may be detected from multiple locations. This overlapping feature renders such games challenging to solve.


A body of work on two-person zero-sum games has investigated this problem, assuming that neither player can observe the opponent's strategy before committing to their own action. \citet{dahan2022network} studied a restricted version of our model under the assumption of perfect detection and obtained approximate equilibria by means of minimum set covers and maximum set packings. \citet{milosevic2023strategic} extended this model by considering components with heterogeneous criticality values. Our model generalizes the classical hide-and-seek game of \citet{von1953certain}, as well as the network interdiction game of \citet{washburn1995two}, handling multiple checkpoints and coordinated evaders in networks with a reasonable number of paths (which typically occurs in illicit supply chains).

\citet{bahamondes2022network,bahamondes2024hide} introduced imperfect, location-specific detection capabilities into the inspection model and solved the game restricted to nonoverlapping (\ie, mutually disjoint) monitoring sets. This assumption renders the model akin to a hide-and-seek game with imperfect detection and heterogeneous hiding capacities, allowing closed-form equilibria in terms of marginal inspection probabilities and expected number of items hidden within each location. Our preliminary analysis of the more general setting with overlapping monitoring sets and imperfect detection appeared in \citet{bahamondes2022network}, where we proposed heuristic solutions based on minimum weighted set covers. In contrast, we now derive both exact and approximate solution approaches with proven approximation guarantees, which require novel structural results.

Another related stream of literature stems from the study of Stackelberg security games (\citet{10.5555/3304652.3304789}), where a defender first deploys resources to protect a set of targets, and an attacker subsequently launches an attack after potentially observing the defender's strategy. These models typically assume that the attacker targets a single location and that detection is perfect (\citet{10.5555/2898607.2898735,10.5555/2030470.2030518,kiekintveld2011computing,bucarey2021coordinating,bustamante2024playing,paruchuri08bayesian}), both assumptions that our model relaxes. More recently, this literature has incorporated uncertainty regarding the attacker's type (\citet{bucarey2021coordinating,bustamante2024playing,paruchuri08bayesian}) and behavior (\citet{10.5555/2891460.2891560,10.5555/2772879.2773329,10.5555/2540128.2540187}). Closely related to our work is the study of \citet{wang2017security}, which extends the security game of \citet{kiekintveld2011computing} by introducing multiple attacker resources and nonadditive utilities. Unlike our model, they assume an unconstrained number of defender resources but a more general nonzero-sum utility structure. 


Our work also relates to randomized robust optimization of submodular functions. \citet{krause2011randomized} and \citet{chen2017robust} studied robust sensor positioning problems with bounded submodular functions and used the MWU algorithm to compute approximate equilibrium strategies. \citet{kawase2019randomized} extended these results for payoffs taking values within $\R_{\geq 0}$, while \citet{gupta2016solving} employed similar techniques to solve combinatorial zero-sum games with bilinear payoffs, where each player's strategy space corresponds to the vertices of a polytope. 
\citet{wilder2018equilibrium} developed pseudopolynomial-time algorithms for submodular settings with exponentially many adversary strategies. \citet{adibi2022minimax} addressed minmax problems with submodular functions, showing hardness results and introducing approximation algorithms that combine nonlinear optimization techniques with greedy submodular maximization. Our approximate solution approach follows that of \citet{krause2011randomized}, but the supermodularity in our game's payoff prevents constant-factor approximations. Instead, we use instance-dependent approximations and an additional projection step in the MWU algorithm for updating the attacker's marginal distributions with respect to the unnormalized relative entropy.


The MWU algorithm (\citet{freund1999adaptive}) can be viewed as a special case of both the Mirror Descent algorithm (\citet{nemirovski1983problem}) and the Follow the Perturbed Leader algorithm (\citet{kalai2005efficient}), and versions of it introducing projection steps have been proposed by \citet{robinson2021improved,hembold2009learning,koolen2010hedging}, and \citet{warmuth2008randomized}, but primarily in the context of online learning over structured concept classes, where the focus is on minimizing the learner's regret in the adversarial setting. \citet{hellerstein2019solving} used approximate best responses for zero-sum search games, but their approach does not apply directly to our model, as it requires one player to have a linear number of strategies, whereas in our game \emph{both} players face combinatorially many strategies. Other approaches simulate the MWU algorithm by maintaining product distributions, which enable efficient updates provided that the players’ pure strategies correspond to the vertices of a 0/1 polytope of polynomial dimension and that an associated polyhedral kernel function can be evaluated in polynomial time (\citet{gupta2016solving,farina2022kernelized}). However, these methods are not applicable to our setting, since the overlapping structure of the monitoring sets and the non-additivity of the defender's payoff function in our game precludes the existence of such a low-dimensional 0/1 polytope representation of the defender's strategies.

Efficient algorithms for computing projections with respect to the unnormalized relative entropy, as well as more general Bregman divergences, onto the simplex and related polytopes---including variants of the capped simplex and the permutahedron---have been studied by \citet{herbster2001tracking,warmuth2008randomized,yasutake2011online,suehiro2012online,krichene2015efficient,gupta2016solving,lim2016efficient,robinson2021improved}. Unlike these prior works, our game allows players to utilize fewer resources than their respective budgets. Hence, our contribution focuses on the study of projections onto the \emph{full-dimensional} capped simplex. This polytope encompasses vectors within the unit cube, with the constraint that the sum of their entries is \emph{at most} a given resource budget. This contrasts with the \emph{lower-dimensional} capped simplex polytope, which requires the sum of their entries to be \emph{equal} to that budget. We derive a closed-form solution to our projection problem, extending the technique of \citet{ang2021fast} used for analytically solving the projection problem under the Euclidean distance onto the capped simplex. Our solutions lead us to derive a linear-time algorithm that is faster than more general-purpose algorithms for minimizing separable convex functions on base polytopes of polymatroids (\citet{suehiro2012online,gupta2016solving}), and that does not require adding perturbations to the relative entropy function to handle its nonsmoothness (\citet{krichene2015efficient}).

{\bf Outline. }This article is organized as follows. In Section \ref{sec:model description}, we introduce the inspection game. We then review preliminary results that help us investigate our game in Section~\ref{sec:preliminaries}. In Sections \ref{sec:exact equilibrium strategies} and \ref{sec:approximate equilibrium strategies}, we derive algorithmic approaches for computing exact and approximate equilibrium strategies, respectively. In Section \ref{sec:computational study}, we present our computational results. We then discuss our conclusions and plans for future work in Section \ref{sec:conclusion}. Finally, the proofs of our results are detailed in the appendix.

\section{Model Description}\label{sec:model description}


We consider a critical system comprising a set of $m$ \emph{components} $\E$ that can be potential targets for an attacker, and a set of $n$ \emph{locations} $\V$ that can receive detectors from a defender for monitoring the system and detecting the attacks. By placing a detector at a location $v \in \V$, the defender can monitor a subset of components $\C_v \subseteq \E$, which we refer to as the \emph{monitoring set} of location $v$. Without loss of generality, we assume that each monitoring set is nonempty and each component can be monitored from at least one location. An instance of the detection model is shown in Figure~\ref{fig:network model general monitoring sets}.

%

\begin{figure}[htbp]
    \centering
        \definecolor{network_yellow}{RGB}{253,179,056}
    \definecolor{network_blue}{RGB}{047,103,177}
    \begin{tikzpicture}[node/.style={circle,fill=network_yellow,draw,minimum size=1em,inner sep=2pt]}, scale=0.8]
        \tikzmath{
            \m = 1; 
            \n{1} = 2; \n{2} = 1; \n{3} = 3; \n{4} = 1;
            \x{1,1} = -1.5; \y{1,1} = 0;  
            \x{1,2} = 1.5 - 0.25; \y{1,2} = 0;  
            \x{2,1} = 4.5 -0.75; \y{2,1} = 0;  
            \x{3,1} = 5.3; \y{3,1} = -0.7;  
            \x{3,2} = 6; \y{3,2} = 0.7;  
            \x{3,3} = 6.7; \y{3,3} = -0.7;  
            \x{4,1} = 8.5; \y{4,1} = -0.5;  
        }
        
        \draw[thick, fill=network_blue, fill opacity=0.2] (0, 0) circle [x radius=2cm, y radius=1cm] {};
        
        \node[] (v1) at (0, 0) {$v_{1}$};
        
        \draw[thick, fill=network_blue, fill opacity=0.2] (3 - 0.5, 0) circle [x radius=2cm, y radius=1cm] {};
        
        \node[] (v2) at (3 -0.5, 0) {$v_{2}$};
        
        \draw[thick, fill=network_blue, fill opacity=0.2] (7-0.5, 0) circle [x radius=3.5cm, y radius=1.8cm] {};
        
        \node[] (v3) at (7 -0.5, 0) {$v_{3}$};
        
        \draw[thick, fill=network_blue, fill opacity=0.2] (8.5, 0) circle [x radius=1.2cm, y radius=1.1cm] {};
        
        \node[] (v4) at (8.5, 0.25) {$v_{4}$};

        \node[node] (e1) at (\x{1,1}, \y{1,1}) {$e_{1}$};
        
        \node[node] (e2) at (\x{1,2}, \y{1,2}) {$e_{2}$};
        
        \node[node] (e3) at (\x{2,1}, \y{2,1}) {$e_{3}$};
        
        \node[node] (e4) at (\x{3,1}, \y{3,1}) {$e_{4}$};
        
        \node[node] (e5) at (\x{4,1}, \y{4,1}) {$e_{5}$};
        
        \node[node] (e6) at (\x{3,2}, \y{3,2}) {$e_{6}$};
        
        \node[node] (e7) at (\x{3,3}, \y{3,3}) {$e_{7}$};
        
\end{tikzpicture}
    \caption{Detection model instance. The set of detector locations is $\V = \{v_{1},\dots, v_{4}\}$ and the set of components is $\E =\{e_{1}, \dots, e_{7}\}$. The monitoring sets are $\C_{v_1}=\{e_1,e_2\}$, $\C_{v_2}=\{e_{2},e_{3}\}$, $\C_{v_3} = \{e_{3},\dots,e_7\}$ and $\C_{v_4}=\{e_{5}\}$.}
    \label{fig:network model general monitoring sets}
\end{figure}


Due to inherent characteristics undermining the detection capabilities at each location, we assume imperfect detection of attacks. Specifically, by placing a detector at a location $v$, the defender identifies an attack on each targeted component in $\C_v$ with a probability $p_v$, independent of other targeted components and detectors positioned. We refer to $p_v$ as the \emph{detection probability} of location $v$. Therefore, an attack on a component will only be detected if the defender successfully identifies it through at least one of the detectors positioned at locations monitoring such component.

We assume that both the attacker and the defender are strategic, and hence adopt a game-theoretic framework to study their behaviors. We define a simultaneous-move two-person zero-sum game $\Gamma$ between a defender $\D$ and an attacker $\A$. In this game, both players face exogenous resource constraints: $\D$ can select up to $\rD \in \Z_{>0}$ locations from $\V$ to place detectors and $\A$ can select up to $\rA \in \Z_{>0}$ components of $\E$ to target. Without loss of generality, we assume that each location can host at most one detector and each component can be the target of at most one attack. Hence, we consider that $\rD \leq n$ and $\rA\leq m$, and we let $\AD \coloneqq\{ S\subseteq \V: |S| \leq \rD\}$ and $\AA \coloneqq\{ T\subseteq \E: |T| \leq \rA\}$ be the action sets for $\D$ and $\A$, respectively. We assume that both players have complete information regarding the critical system (detector locations, components, monitoring sets), detection probabilities, and their own and their opponent's action sets. Such information may be inferred from public sources and regulatory requirements.





We consider that in the event of a successful detection of an attack, $\D$ can start a response mechanism to mitigate its damage. Thus, in our model, an attack is successful if and only if it remains undetected by $\D$. Let $u: \AD \times \AA \rightarrow \mathbb{R}$ be the \emph{undetection function}, defined as the average number of undetected attacks:
\begin{align*}
    u(S,T) \coloneqq  \sum_{e\in T} \prod_{v \in S:\, e\in \C_v} (1-p_v), \quad \forall \, (S,T)\in \AD\times\AA.
\end{align*}

We use $u(S,e)$ (resp. $u(v, T)$) to denote the case when $T=\{e\}$ (resp. $S=\{v\}$), for some $e\in \E$ (resp. $v\in \V$). We note that the undetection function satisfies $u(S,T) = \sum_{e\in T} u(S,e)$ for every $(S,T) \in \AD \times \AA$. The term $u(S,e)=\prod_{v\in S:\, e\in \C_v} (1-p_v)$ represents the undetection probability of an attack on $e$ when $\D$ selects the detector positioning $S\in \AD$. 

We allow the players to use mixed strategies, defined as probability distributions over their sets of pure actions. To this aim, we let $\DeltaD  \coloneqq \ \{ \sigmaD \in [0,1]^{\AD}:\, \sum_{S\in \AD}\sigmaD_S = 1 \}$ and $\DeltaA  \coloneqq  \{ \sigmaA \in [0,1]^{\AA}:\, \sum_{T\in \AA}\sigmaA_T = 1 \}$
be the sets of mixed strategies of $\D$ and $\A$, respectively. For each $\sigmaD \in \DeltaD$ (resp. $\sigmaA \in \DeltaA$), $\sigmaD_S$ (resp. $\sigmaA_T$) represents the probability that action $S\in \AD$ (resp. $T\in \AA$) is executed.

Given a strategy profile $(\sigmaD, \sigmaA) \in \DeltaD \times \DeltaA$, the expected number of undetected attacks is then defined as $U(\sigmaD, \sigmaA)\coloneqq \sum_{S\in \AD} \sum_{T\in \AA} \sigmaD_S\, \sigmaA_T\, u(S,T)$. We assume that $\D$ (resp. $\A$) aims to minimize (resp. maximize) $U$.
 For ease of notation, we use $U(S,\sigmaA)$ (resp. $U(\sigmaD, T)$) to denote when $\sigmaD_S=1$ (resp. $\sigmaA_T=1$) for some $S\in \AD$ (resp. $T\in \AA$), and omit the brackets when $S=\{v\}$ (resp. $T=\{e\}$). 

Our subject of interest is the study of strategy profiles $(\sigmaDEq, \sigmaAEq)$ that represent \emph{Nash equilibria} (NE) of the game $\Gamma$, \ie, that satisfy
\begin{align*}
   U(\sigmaDEq, \sigmaA) \leq  U(\sigmaDEq, \sigmaAEq) \leq U(\sigmaD, \sigmaAEq), \quad \forall\, \sigmaD \in \DeltaD,\; \forall\, \sigmaA\in \DeltaA.
\end{align*}

NE are relevant in settings where neither player can observe the opponent’s actual allocation before committing to their own action. This assumption reflects conditions in critical infrastructure defense---such as water, gas, and power distribution networks---which span geographically wide and often restricted areas and operate in challenging environments. In these settings, defensive allocations---such as drone patrols or mobile inspection teams---are deliberately concealed, frequently redeployed, and at best only partially observable to an adversary. Finally, we remark that because we model the interaction as a zero-sum game, every Nash equilibrium is also a (strong) Stackelberg equilibrium (\citet{korzhyk2011stackelberg}).

We refer to $U(\sigmaDEq,\sigmaAEq)$ as the \emph{value of the game} $\Gamma$. Because $\Gamma$ is a two-person zero-sum game with a finite number of player actions, an NE is guaranteed to exist and the value of the game is unique. Alternatively, NE can be characterized by strategy profiles $(\sigmaDEq, \sigmaAEq)$ where both $\sigmaDEq$ and $\sigmaAEq$ are respectively optimal solutions of the following pair of minmax and maxmin problems:
\begin{align*}
    \underset{\sigmaD\in \DeltaD}{\min}\,\underset{\sigmaA\in \DeltaA}{\max} \, U(\sigmaD, \sigmaA), \qquad
    \underset{\sigmaA\in \DeltaA}{\max}\,\underset{\sigmaD\in \DeltaD}{\min} \, U(\sigmaD, \sigmaA).
\end{align*}

Furthermore, the optimal values of both these problems are identical and coincide with the value of the game (\citet{von1928theorie}; see also \citet{von2004theory}). From this equivalence, $\sigmaDEq$ and $\sigmaAEq$ are also referred to as \emph{optimal strategies} of the game $\Gamma$ and can be computed by solving the following linear program (LP):
\begin{align}
    \label{LP: LP1}
    \tag{LP}
    \begin{alignedat}{4}
        & \underset{\sigmaD \in \R^{\AD},\, z \in \R}{\min} & & z \\
        & \textnormal{subject to}& \quad & \begin{aligned}[t]
                                            \sum_{S\in \AD} \sigmaD_S\,u(S, T) & \leq z  & \forall\, & T\in \AA,\\
                                            \sum_{S \in \AD}\sigmaD_S & = 1 & \quad  &  \\
                                            \sigmaD_S & \geq 0 & \forall\, & S\in \AD.
                                        \end{aligned}
    \end{alignedat}
\end{align}


\eqref{LP: LP1} and its dual are reformulations of the minmax and maxmin problems, respectively. Thus, equilibirum inspection (resp. attack) strategies and the value of the game $\Gamma$ are given by the optimal primal (resp. dual) solutions and optimal value of \eqref{LP: LP1}, respectively. Nonetheless, because the cardinality of $\AD$ (resp. $\AA$) grows combinatorially with $\rD$ (resp. $\rA$), this LP becomes computationally challenging to solve, even for small-sized instances. Hence, we derive solution approaches that leverage the structure of the undetection function, allowing us to compute exact and approximate equilibria of the game $\Gamma$.

\section{Preliminaries}
\label{sec:preliminaries}

In our analysis, it will be useful to characterize the attacker's strategies in terms of their marginal probabilities of targeting each component. For every mixed attack strategy $\sigmaA\in \DeltaA$, we denote as $\rho(\sigmaA) = (\rho_e(\sigmaA))_{e \in \E}$ the vector of marginal probabilities of targeting each component when $\A$ plays $\sigmaA$, given by $\rho_e(\sigmaA) \coloneqq  \sum_{T\in \AA:\, e\in T}  \sigmaA_T$ for every $e \in \E$.
The additive structure of the undetection function $u(S,T)$ with respect to $\A$'s pure actions permits us to represent the expected number of undetected attacks in terms of these marginal attack probabilities and the undetection probabilities resulting from $\D$'s mixed inspection strategy.
\begin{lemma}
    \label{lem:undetection_function}
    For every $(\sigmaD,\sigmaA) \in \DeltaD \times \DeltaA$, $U(\sigmaD,\sigmaA) = \sum_{e\in \E} \rho_e(\sigmaA)\, U(\sigmaD,e)$.
\end{lemma}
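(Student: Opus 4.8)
The plan is to simply unwind the definitions and rearrange a finite sum, so no deep idea is needed; the whole content is a careful interchange of summation orders. I would start from the definition $U(\sigmaD,\sigmaA)=\sum_{S\in\AD}\sum_{T\in\AA}\sigmaD_S\,\sigmaA_T\,u(S,T)$ and immediately substitute the additive decomposition $u(S,T)=\sum_{e\in T}u(S,e)$ noted right after the undetection function is introduced. This produces a triple sum
\[
U(\sigmaD,\sigmaA)=\sum_{S\in\AD}\sum_{T\in\AA}\sum_{e\in T}\sigmaD_S\,\sigmaA_T\,u(S,e).
\]

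The key step is then to swap the order of summation so that the sum over components $e\in\E$ is pulled outside. The only point that requires any care is that the indices $T$ and $e$ are coupled through the constraint $e\in T$, so I cannot naively separate them. Instead I would observe that $\sum_{T\in\AA}\sum_{e\in T}$ ranges exactly over all pairs $(T,e)$ with $T\in\AA$ and $e\in T$, and that this same set of pairs is enumerated by $\sum_{e\in\E}\sum_{T\in\AA:\,e\in T}$. Rewriting the double sum in the latter form and using that $\sigmaD_S$ and $u(S,e)$ do not depend on $T$, I obtain
\[
U(\sigmaD,\sigmaA)=\sum_{S\in\AD}\sum_{e\in\E}\sigmaD_S\,u(S,e)\!\!\sum_{T\in\AA:\,e\in T}\!\!\sigmaA_T .
\]

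The innermost sum is precisely the definition of the marginal targeting probability, $\sum_{T\in\AA:\,e\in T}\sigmaA_T=\rho_e(\sigmaA)$, so after substituting it I would interchange the two remaining (finite) sums over $S$ and $e$ and regroup to isolate the factor depending only on $S$. Recognizing that $\sum_{S\in\AD}\sigmaD_S\,u(S,e)=U(\sigmaD,e)$ by the notational convention $U(\sigmaD,T)$ with $T=\{e\}$, this yields
\[
U(\sigmaD,\sigmaA)=\sum_{e\in\E}\rho_e(\sigmaA)\sum_{S\in\AD}\sigmaD_S\,u(S,e)=\sum_{e\in\E}\rho_e(\sigmaA)\,U(\sigmaD,e),
\]
which is the claimed identity. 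I do not expect a genuine obstacle here: every sum is finite, so all interchanges are unconditionally valid, and the only thing to get right is the re-indexing of the coupled pair $(T,e)$ described above, together with matching the intermediate expressions to the defined shorthands $\rho_e(\sigmaA)$ and $U(\sigmaD,e)$.
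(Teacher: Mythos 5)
Your proof is correct and follows essentially the same route as the paper's: expand $u(S,T)=\sum_{e\in T}u(S,e)$, re-index the coupled sum over $(T,e)$ to pull out $\rho_e(\sigmaA)=\sum_{T\in\AA:\,e\in T}\sigmaA_T$, and identify $\sum_{S\in\AD}\sigmaD_S\,u(S,e)$ with $U(\sigmaD,e)$. No discrepancies to report.
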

From Lemma~\ref{lem:undetection_function}, the expected number of undetected attacks is given by the sum, over each component, of the probability that such component is targeted by $\A$ and the attack remains undetected by $\D$. While every mixed attack strategy induces a vector of marginal attack probabilities in $\R^{\E}$, we are also interested in the inverse question of identifying the vectors in $\R^{\E}$ that represent marginal attack probabilities resulting from some mixed attack strategy. This bidirectional mapping will facilitate transitioning between the spaces of mixed strategies and marginal attack probabilities. The following lemma provides necessary and sufficient conditions that such vectors must satisfy.

\begin{lemma}[\citet{bahamondes2024hide}]
    \label{lem:convex_hull}
    Consider a resource budget $\rA \in \Z_{>0}$, and a vector $\rho \in \R^{\E}$. Then, there exists a probability distribution $\sigmaA \in \DeltaA$ satisfying $\rho_e(\sigmaA) = \rho_{e}$ for all $e\in \E$ if and only if $\rho\in [0,1]^{\E}$ and $\sum_{e\in \E}\rho_e \leq \rA$. 
    Furthermore, such $\sigmaA$ with a support of size at most $m+1$ can be computed in time $O(m^2)$.
\end{lemma}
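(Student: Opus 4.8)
The statement is an ``if and only if'' together with a constructive claim, so I would split the proof into the easy necessity direction (direct computation) and the substantive sufficiency direction, handled by a single greedy ``peeling'' algorithm that simultaneously establishes existence, the support bound $m+1$, and the $O(m^2)$ running time. For necessity, suppose $\rho = \rho(\sigmaA)$ for some $\sigmaA \in \DeltaA$. Each $\rho_e = \sum_{T\in\AA:\,e\in T}\sigmaA_T$ is a partial sum of the probabilities $(\sigmaA_T)$, hence lies in $[0,1]$; and exchanging the order of summation gives $\sum_{e\in\E}\rho_e = \sum_{T\in\AA}|T|\,\sigmaA_T \le \rA\sum_{T\in\AA}\sigmaA_T = \rA$, using $|T|\le\rA$ for all $T\in\AA$.

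\textbf{The construction.} For sufficiency I would design the following greedy algorithm. Given $\rho\in[0,1]^{\E}$ with $\sum_e\rho_e\le\rA$, maintain a residual vector $r\in\R^{\E}_{\ge 0}$ and a remaining mass $w$, initialized to $r=\rho$ and $w=1$, and preserve two invariants: (i) $0\le r_e\le w$ for all $e$, and (ii) $\sum_{e}r_e\le\rA w$, both of which hold initially. At each iteration let $A=\{e:r_e>0\}$ (the support) and $B=\{e:r_e=w\}$ (the \emph{saturated} coordinates); note (ii) forces $|B|\le\rA$. If $A=\emptyset$, assign the remaining mass $w$ to $T=\emptyset$ and stop. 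Otherwise choose $T\supseteq B$ of size $\min(|A|,\rA)$ by adjoining to $B$ the coordinates with the largest residuals, set $\lambda=\min\{\min_{e\in T}r_e,\; w-\max_{e\notin T}r_e,\; w\}$ (dropping the middle term when $T=A$), record the pair $(T,\lambda)$, and update $r\leftarrow r-\lambda\mathbf{1}_{T}$, $w\leftarrow w-\lambda$.

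\textbf{Correctness, support, and running time.} I would first verify the invariants are maintained: (i) holds because coordinates in $T$ drop from at most $w$ to at most $w-\lambda$ while coordinates outside $T$ are capped at $w-\lambda$ by the choice of $\lambda$; (ii) is preserved because when $|T|=\rA$ the bound scales exactly, $\sum_e r_e-\lambda\rA\le\rA(w-\lambda)$, and when $T=A$ (so $|A|\le\rA$) the inequality $\sum_e r_e\le|A|\,w$ already forces $\sum_e r_e-\lambda|A|\le|A|(w-\lambda)\le\rA(w-\lambda)$. The step is always positive since coordinates outside $T$ are unsaturated ($B\subseteq T$). For the support bound the key observation is that \emph{saturation is permanent}: once $r_e=w$, the coordinate lies in $B\subseteq T$ forever after, so $r_e$ and $w$ decrease in lockstep and $e$ stays saturated until $w=0$. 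Consequently, any coordinate zeroed out by a step (where $\lambda=\min_{e\in T}r_e$) was never saturated, so the set of ``zeroing'' coordinates and the set of ever-saturated coordinates are disjoint subsets of $\E$; the number of zeroing iterations plus the number of saturating iterations is therefore at most $m$, and adding the single terminal step gives at most $m+1$ recorded sets. The recorded weights are nonnegative and sum to $1$, and $r$ is reduced to $\mathbf{0}$, so $\rho=\sum_T\sigmaA_T\mathbf{1}_T=\rho(\sigmaA)$ with every recorded $T$ satisfying $|T|\le\rA$. Each iteration costs $O(m)$ (selecting $T$ and the three thresholds, updating $r$) and there are $O(m)$ iterations, for $O(m^2)$ total.

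\textbf{Main obstacle.} I expect the crux to be the termination/counting argument, not the feasibility checks: one must rule out cycling and cap the number of iterations at $m+1$. The disjointness argument above is the delicate part, and it rests entirely on the permanence of saturation, which in turn requires always forcing $B\subseteq T$ and the invariant $|B|\le\rA$. As a sanity check and an alternative, less constructive route, the feasible region $\{\rho:0\le\rho\le 1,\ \sum_e\rho_e\le\rA\}$ is the independence polytope of a uniform matroid and hence integral, with vertices exactly the indicators $\mathbf{1}_T$, $|T|\le\rA$; any point in it is then a convex combination of vertices, and taking a basic feasible solution of the resulting (feasible) linear system in $(\sigmaA_T)$---which has $m+1$ equality constraints---independently yields the support bound $m+1$. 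I would nonetheless keep the greedy peeling as the main argument, since it delivers the claimed $O(m^2)$ running time directly.
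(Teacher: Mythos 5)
The paper does not actually prove this lemma: it is imported verbatim from \citet{bahamondes2024hide} and no argument for it appears in the appendix, so there is no in-paper proof to compare against. Judged on its own terms, your proof is correct and complete. The necessity direction is the standard double-counting computation. For sufficiency, your greedy peeling is a valid decomposition of a point of the full-dimensional capped simplex into a convex combination of indicator vectors $\mathds{1}_T$ with $|T|\leq \rA$: the two invariants $0\leq r_e\leq w$ and $\sum_e r_e\leq \rA w$ are preserved exactly as you argue (the case split on $|T|=\rA$ versus $T=A$ is the right one), the constraint $B\subseteq T$ together with $|B|\leq \rA$ guarantees a strictly positive step, and the crux --- bounding the number of iterations by $m+1$ --- is handled correctly via the permanence-of-saturation observation, which makes the set of coordinates zeroed at non-terminal steps disjoint from the set of ever-saturated coordinates. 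One small point worth making explicit in a final write-up: every non-terminal iteration attains its minimum at the zeroing term or the saturating term (the case $\lambda=w$ forces all residuals to zero and is terminal), so the count of iterations really is dominated by the sum of the two disjoint coordinate sets. Your closing remark that the polytope is the independence polytope of a uniform matroid, so that existence and the $m+1$ support bound also follow from integrality plus a basic-feasible-solution argument, is a sound sanity check, and you are right that only the constructive peeling delivers the claimed $O(m^2)$ bound directly.
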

As a consequence of Lemma~\ref{lem:convex_hull}, the vectors in $\R^{\E}$ that represent marginal attack probabilities resulting from $\A$'s mixed strategies are precisely the feasible solutions of the \emph{full-dimensional capped simplex} polytope $\ThetaA \coloneqq \left\{ \rho \in [0,1]^{\E}:\, \sum_{e\in \E} \rho_e \leq \rA \right\}$.
This provides an alternative representation of $\A$'s strategies as vectors in $\ThetaA$, which offers a computational advantage, as the elements of $\ThetaA$ are vectors of size $m$, while $\A$'s mixed strategies require vectors of size $\sum_{k=0}^{\rA} \binom{m}{k}$. 

In view of Lemmas~\ref{lem:undetection_function} and \ref{lem:convex_hull}, for the remainder of this article we denote by $U(\sigmaD,\rhoA) \coloneqq \sum_{e\in \E} \rhoA_e\, U(\sigmaD,e)$ the expected number of undetected attacks when $\D$ plays the \emph{mixed} inspection strategy $\sigmaD\in \DeltaD$ and $\A$ plays the \emph{marginal} attack strategy $\rhoA \in \ThetaA$. The reader can verify that equilibria of the game $\Gamma$ can be equivalently represented  by strategy profiles $(\sigmaDEq,\rhoAEq)\in \DeltaD \times \ThetaA$ satsifying $U(\sigmaDEq,\rhoA) \leq U(\sigmaDEq,\rhoAEq) \leq U(\sigmaD,\rhoAEq)$ for every $\sigmaD \in \DeltaD$ and $\rhoA \in \ThetaA$, and that the value of the game is $U(\sigmaDEq, \rhoAEq)$.  Similarly, $(\sigmaDEq,\rhoAEq)$ is an equilibrium of $\Gamma$ if and only if $\sigmaDEq$ is an optimal solution of the minmax problem $\min_{\sigmaD \in \DeltaD} \max_{\rhoA \in \ThetaA} U(\sigmaD, \rhoA)$, and $\rhoAEq$ is an optimal solution of the maxmin problem $\max_{\rhoA \in \ThetaA} \min_{\sigmaD \in \DeltaD}  U(\sigmaD, \rhoA)$.



\section{Exact Solution Method}
\label{sec:exact equilibrium strategies}


 In this section, we derive an exact solution approach for solving the game $\Gamma$. We recall that equilibrium strategies of $\Gamma$ are optimal primal and dual solutions of \eqref{LP: LP1}, which is challenging to solve due to its combinatorial number of variables and constraints. However, by leveraging the compact representation of $\A$'s strategies described in Section~\ref{sec:preliminaries}, we can formulate an alternative LP with a significantly smaller number of constraints.
 
\begin{proposition}
\label{prop:LP1 reformulation}
\eqref{LP: LP1} is equivalent to:
\begin{align}
    \label{LP: LP1 reformulation}
    \tag{LP($\AD$)}
    \begin{alignedat}{4}
        & \underset{\sigmaD \in \R^{\AD},\, \lambda \in \R^{\E},\, \gamma \in \R}{\min} & & \rA \gamma + \sum_{e\in \E}\lambda_{e} \\
        & \textnormal{subject to}& \quad & \begin{aligned}[t]
                                            \gamma + \lambda_e & \geq \sum_{S\in \AD}\sigmaD_S\, u(S, e) & \forall\, & e\in \E,\\
                                            \sum_{S \in \AD}\sigmaD_S & = 1 & \quad  &  \\
                                            \sigmaD_S & \geq 0 & \forall\, & S\in \AD,\\
                                            \lambda_e & \geq 0 & \forall\, & e\in \E,\\
                                            \gamma & \geq 0. &  &
                                        \end{aligned}
    \end{alignedat}
\end{align}


In particular, let $(\sigmaDEq, \lambda^*,\gamma^*) \in \R^{\AD}\times \R^{\E}\times \R$ be an optimal solution of \eqref{LP: LP1 reformulation}, and let $\rhoAEq \in \R^{\E}$ be the vector of optimal dual variables associated with its first set of constraints. Then, $(\sigmaDEq, \rhoAEq)$ is an equilibrium of the game $\Gamma$, and the value of the game is given by $\rA \gamma^* + \sum_{e\in \E} \lambda^*_e$.

%

\end{proposition}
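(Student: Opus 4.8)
The plan is to recognize that \eqref{LP: LP1} is nothing but the minmax problem $\min_{\sigmaD\in\DeltaD}\max_{\rho\in\ThetaA}U(\sigmaD,\rho)$ written out explicitly, and that \eqref{LP: LP1 reformulation} is obtained from it by replacing the combinatorially many attacker constraints with the linear-programming dual of the inner maximization over the full-dimensional capped simplex $\ThetaA$. First, I would observe that for a fixed $\sigmaD\in\DeltaD$ the first family of constraints in \eqref{LP: LP1} forces $z\geq\max_{T\in\AA}\sum_{S\in\AD}\sigmaD_S\,u(S,T)=\max_{\sigmaA\in\DeltaA}U(\sigmaD,\sigmaA)$, with equality at an optimum since $z$ is minimized. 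By Lemma~\ref{lem:convex_hull} the marginal map $\sigmaA\mapsto\rho(\sigmaA)$ sends $\DeltaA$ onto $\ThetaA$, and by Lemma~\ref{lem:undetection_function} the quantity $U(\sigmaD,\sigmaA)$ depends on $\sigmaA$ only through $\rho(\sigmaA)$; hence $\max_{\sigmaA\in\DeltaA}U(\sigmaD,\sigmaA)=\max_{\rho\in\ThetaA}U(\sigmaD,\rho)$, so \eqref{LP: LP1} computes $\min_{\sigmaD\in\DeltaD}\max_{\rho\in\ThetaA}U(\sigmaD,\rho)$.

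Next, I would dualize this inner maximization. For fixed $\sigmaD$, the problem $\max_{\rho\in\ThetaA}\sum_{e\in\E}\rho_e\,U(\sigmaD,e)$ is a linear program in $\rho$; assigning a multiplier $\gamma\geq 0$ to $\sum_{e\in\E}\rho_e\leq\rA$ and multipliers $\lambda_e\geq 0$ to $\rho_e\leq 1$, its dual is $\min\{\rA\gamma+\sum_{e\in\E}\lambda_e:\ \gamma+\lambda_e\geq U(\sigmaD,e)\ \forall e,\ \gamma,\lambda\geq 0\}$. Strong duality applies because $\ThetaA$ is a nonempty compact polytope and the dual is feasible (e.g. $\gamma=0$, $\lambda_e=U(\sigmaD,e)\geq 0$), so the two optimal values coincide. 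Substituting $U(\sigmaD,e)=\sum_{S\in\AD}\sigmaD_S\,u(S,e)$ and minimizing over $\sigmaD\in\DeltaD$ reproduces \eqref{LP: LP1 reformulation} verbatim. Since the inner values agree for every $\sigmaD$, the two programs share the same optimal value and the same set of optimal inspection strategies $\sigmaDEq$; in particular $\sigmaDEq$ solves $\min_{\sigmaD\in\DeltaD}\max_{\rho\in\ThetaA}U(\sigmaD,\rho)$ and is thus an equilibrium inspection strategy.

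For the statement about $\rhoAEq$, I would instead take the linear-programming dual of \eqref{LP: LP1 reformulation} as a whole, assigning variables $\rhoA_e\geq 0$ to the first set of constraints and a free multiplier $w$ to $\sum_{S\in\AD}\sigmaD_S=1$. The dual constraints coming from $\gamma$, $\lambda_e$, and $\sigmaD_S$ read $\sum_{e\in\E}\rhoA_e\leq\rA$, $\rhoA_e\leq 1$, and $w\leq\sum_{e\in\E}\rhoA_e\,u(S,e)=U(S,\rhoA)$ for all $S\in\AD$, with objective $\max w$. Hence any dual-feasible $\rhoA$ lies in $\ThetaA$ and the optimal $w$ equals $\min_{S\in\AD}U(S,\rhoA)=\min_{\sigmaD\in\DeltaD}U(\sigmaD,\rhoA)$, so the dual is exactly the attacker's maxmin problem $\max_{\rho\in\ThetaA}\min_{\sigmaD\in\DeltaD}U(\sigmaD,\rho)$. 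Strong duality then yields that the optimal dual vector $\rhoAEq$ solves this maxmin problem and that its value equals the primal optimum $\rA\gamma^*+\sum_{e\in\E}\lambda^*_e$, which is therefore the value of the game. Combining the two parts, $\sigmaDEq$ and $\rhoAEq$ are respectively optimal for the minmax and maxmin problems attaining the common value, so by the equilibrium characterization recalled in Section~\ref{sec:preliminaries}, $(\sigmaDEq,\rhoAEq)$ is an equilibrium of $\Gamma$.

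I expect the main obstacle to be organizational rather than conceptual: the one step with genuine content is that the combinatorially many per-target constraints of \eqref{LP: LP1} collapse into a single maximization over $\ThetaA$ (via Lemmas~\ref{lem:undetection_function} and \ref{lem:convex_hull}), after which everything follows from two applications of LP strong duality. The step requiring care is the bookkeeping of the dual of \eqref{LP: LP1 reformulation}, in particular checking that the multiplier $\rhoAEq$ of the first constraint family is automatically feasible for $\ThetaA$ and that the dual objective reduces to the inner minimum $\min_{S\in\AD}U(S,\rhoA)$ with the correct orientation, so that $\rhoAEq$ is genuinely an equilibrium attack marginal and not merely a vector of Lagrange multipliers.
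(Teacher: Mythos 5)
Your proposal is correct and follows essentially the same route as the paper's proof: reformulating \eqref{LP: LP1} as the minmax problem over $\ThetaA$ via Lemmas~\ref{lem:undetection_function} and \ref{lem:convex_hull}, dualizing the inner maximization to obtain \eqref{LP: LP1 reformulation}, and then identifying the dual of \eqref{LP: LP1 reformulation} with the attacker's maxmin problem to conclude that $\rhoAEq$ is an equilibrium marginal attack strategy. The only difference is presentational, in that you spell out the dual bookkeeping slightly more explicitly.
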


From Proposition \ref{prop:LP1 reformulation}, we observe that mixed inspection strategies and marginal attack strategies in equilibrium of $\Gamma$ are respectively optimal primal and dual solutions of an LP with $|\AD| + m + 1$ variables and $m + 1$ constraints. Notably, the size of \eqref{LP: LP1 reformulation} is independent of the number of attack resources $\rA$. This is a consequence of the alternative representation of the attacker's strategies in terms of their marginal probabilities, which, as feasible solutions of $\ThetaA$, only require $m+1$ constraints for their description.

We can solve \eqref{LP: LP1 reformulation} using a column generation algorithm, provided that we can derive a relatively small formulation of the pricing problem. To this aim, we consider a subset of $\D$'s pure actions $\I \subseteq \AD$ and define the associated restricted master problem as {\RMP{$\I$}}.
Let $(\rhoA, \nu)\in \R^{\E} \times \R$ 
be optimal dual variables associated with the first set of constraints and second constraint of {\RMP{$\I$}}, respectively. One can easily check that $\rhoA\in \ThetaA$, and that it corresponds to the marginal attack probabilities of $\A$'s best response to the optimal solution of {\RMP{$\I$}}. Consequently, the reduced costs of the variables $\gamma$ and $\lambda_e$ are nonnegative. On the other hand, the reduced costs of the variables $\sigmaD_S$ are given by $\bar{c}_{S} = -\nu + \sum_{e \in \E}\rhoA_{e} \,u(S,e) = -\nu + U(S, \rhoA)$ for every $S \in \AD$.
Therefore, the pricing problem is the following:
\begin{align}
    \label{pb:DBR}
    \text{Find } S^* \in \underset{S\in \AD}{\argmin} \; U(S,\rhoA).    \tag{$\textnormal{DBR}$}
\end{align}
It aims to find a deterministic detector positioning that minimizes the expected number of undetected attacks against $\A$ playing the marginal attack strategy $\rhoA$. However, solving this pure best response problem for $\D$ is challenging due to its combinatorial nature and the nonlinearity of the undetection function $U$, even for instances where the monitoring sets exhibit minimal overlap:
%
%
%
%
%
\begin{proposition}
    \label{prop:NP_hardness_Defender_BR_2}
    Problem~\eqref{pb:DBR} is \textnormal{NP}-hard, even if $\rA=1$, the detection probabilities are homogeneous, and every component belongs to at most two monitoring sets. 
\end{proposition}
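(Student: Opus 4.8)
The plan is to prove NP-hardness of the decision version of \eqref{pb:DBR} — given a threshold $\tau$, decide whether some $S\in\AD$ satisfies $U(S,\rhoA)\le\tau$ — by a polynomial reduction from the Independent Set problem, which is NP-complete: given a graph $G=(V,E)$ and an integer $k$, decide whether $G$ admits an independent set of size $k$. First I would fix the correspondence. I take the locations to be the vertices, $\V=V$, and associate to each edge $\{u,w\}\in E$ one component monitored exactly by the two locations $u$ and $w$; I set $\rD=k$, use a homogeneous detection probability $p_v=p\in(0,1]$, and keep $\rA=1$. Every component then lies in at most two monitoring sets and detection is homogeneous, so the instance meets all three restrictions of the proposition. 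Under $\rA=1$ a feasible attack marginal is any $\rhoA\in[0,1]^{\E}$ with $\sum_e\rhoA_e\le 1$, and since $\argmin_{S} U(S,\rhoA)$ is invariant under positive scaling of $\rhoA$, I am free to fix the relative weights first and rescale at the end to land in $\ThetaA$.

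The crux is an exact algebraic reduction of the product-form objective to a quadratic one. Writing $x_v\in\{0,1\}$ for the indicator of $v\in S$ and using that each component lies in at most two monitoring sets, the number of selected detectors monitoring $e=\{u,w\}$ is $x_u+x_w\in\{0,1,2\}$, so the identity $1-(1-p)^{x_u+x_w}=p(x_u+x_w)-p^2 x_u x_w$ expresses the detection benefit $B(S):=\sum_e\rhoA_e-U(S,\rhoA)$ as a term linear in the $x_v$ plus a negative quadratic penalty $-p^2\sum_{\{u,w\}\in E}\rhoA_{\{u,w\}}x_u x_w$ counting the weighted edges induced by $S$. To turn the linear part into a constant multiple of $|S|$ — which is what makes the reduction exact — I would equalize the weighted degrees of the locations: setting every edge-component weight to a common value $\beta>0$ and adding, for each vertex $v$, one single-coverage component of weight $(\Delta-\deg_G v)\beta$, where $\Delta=\max_v\deg_G v$, each vertex then contributes the same linear coefficient $L=p\Delta\beta>0$. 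These added components belong to a single monitoring set, so the ``at most two'' hypothesis is preserved, and one obtains $B(S)=L|S|-p^2\beta\,|E(S)|$, where $E(S)$ is the set of edges of $G$ with both endpoints in $S$. (Alternatively, reducing from Independent Set on cubic graphs makes all degrees equal and removes the need for single-coverage components altogether.)

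Finally I would read off the equivalence. Since $B(S)\le L|S|\le Lk$ for every $S$ with $|S|\le\rD=k$, with the first inequality tight exactly when $E(S)=\emptyset$ and the second when $|S|=k$, any size-$k$ independent set attains $B=Lk$; conversely, if $G$ has no independent set of size $k$, then every $S$ with $|S|=k$ has $|E(S)|\ge 1$ while every smaller $S$ has $L|S|\le L(k-1)$, so $\max_{|S|\le k}B(S)\le Lk-\min(p^2\beta,L)<Lk$. Hence $G$ has an independent set of size $k$ if and only if $\min_{S\in\AD}U(S,\rhoA)=\sum_e\rhoA_e-Lk$, which supplies the threshold $\tau$ for the decision version; after rescaling $\rhoA$ into $\ThetaA$ this is a polynomial-size instance with $\rA=1$ and homogeneous detection.

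I expect the main obstacle to be precisely the structural recognition underlying the second paragraph rather than the final bookkeeping. The detection benefit $B$ is monotone and submodular, so minimizing $U$ looks deceptively like a tractable covering problem; the real content is that the negative quadratic term — the manifestation of the payoff's supermodularity — combined with the degree-equalization trick forces the optimum to isolate the minimum-induced-edges (i.e.\ independent set) structure. Establishing the algebraic identity and verifying that equalization makes the linear term a clean constant multiple of $|S|$ is the step that makes the correspondence exact; confirming the scaling into $\ThetaA$ and the polynomial size of the construction is then routine.
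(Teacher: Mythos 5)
Your proof is correct, but it takes a genuinely different route from the paper's. The paper reduces from VERTEX COVER with a \emph{gap} argument: it sets the homogeneous detection probability to the specific value $1-1/(m+1)$, so that any edge left uncovered contributes a full unit of undetection that dominates the total, and then shows $G$ has a vertex cover of size $k$ iff some $S\in\AD$ achieves $U(S,\rhoA)\le 1/(m+1)$. You instead reduce from INDEPENDENT SET via an \emph{exact algebraic} encoding: the identity $1-(1-p)^{x_u+x_w}=p(x_u+x_w)-p^2x_ux_w$, combined with the degree-equalizing single-coverage components, turns the detection benefit into exactly $L|S|-p^2\beta\,|E(S)|$, and the equivalence falls out of $L>0$ and $p^2\beta>0$. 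The paper's argument is shorter and requires no auxiliary components, but it hinges on a carefully tuned detection probability; yours establishes hardness for \emph{every} fixed homogeneous $p\in(0,1]$ and makes the supermodular (quadratic-penalty) structure of the objective explicit, which is the same structure the paper later exploits for the inapproximability result at $c=1$ (where, with $p=1$, your identity degenerates to the coverage function used there). The only loose ends in your write-up are degenerate instances (edgeless graphs, where $L=0$ and your normalization $\beta$ is undefined), which can be excluded since INDEPENDENT SET remains NP-complete on graphs with $\Delta\ge 1$; everything else, including the rescaling into $\ThetaA$ and the preservation of the ``at most two monitoring sets'' restriction by the single-coverage gadgets, checks out.
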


The proof of Proposition~\ref{prop:NP_hardness_Defender_BR_2} follows by reduction from the NP-hard problem VERTEX COVER.  Despite this negative result, an exact solution for Problem~\eqref{pb:DBR} can be computed using a mixed-integer program (MIP). For this purpose, we set an arbitrary enumeration of the detector locations, denoted as $\V = \{v_1,\ldots,v_n\}$. Then, for every location $v_i \in \V$ we define a binary variable $x_i$ as
\begin{align*}
    x_{i} = 
    \begin{cases}
        1 & \text{if a detector is positioned at location $v_i$}, \\
        0 & \text{otherwise}.
    \end{cases}
\end{align*}
Thus, if $S = \{ v_i \in \V:\, x_i=1 \}$ is a detector positioning determined by the variables $x$, we have $S\in \AD$ if and only if $\sum_{v_i \in \V} x_i \leq \rD$. Next, we must linearize the term $u(S,e)=\prod_{v_i\in S:\, e\in \C_{v_i}} \left( 1-p_{v_i} \right)$ in the undetection function. To this end, we define the following decision variables:
\begin{align*}
    u_{e,i} = \prod_{v_j\in \V:\, j\leq i} \hspace{-0.5em} \left( 1-p_{v_j} \mathds{1}_{ \C_{v_j} }(e) x_{j} \right) \quad \forall\, e\in \E, \, \forall\, v_i\in \V,
\end{align*}
where for every $e\in \E$ and $v_{j}\in \V$, $\mathds{1}_{ \C_{v_j} }(e) \in \{0,1\}$ is equal to 1 if and only if $e\in \C_{v_j}$. Then, $u(S,e) = u_{e,n}$ for all $e\in \E$. Finally, we can linearize the variables $u_{e,i}$ by leveraging the following relations derived from their definition:
\begin{align*}
    \left( 1-p_{v_1} \mathds{1}_{ \C_{v_1} }(e) x_{1}\right) &= u_{e,1}  &\forall\, e\in \E,\\
    u_{e,i} \left( 1-p_{v_{i+1}} \mathds{1}_{ \C_{v_{i+1}} }(e) x_{i+1}\right) &= u_{e,i+1}   &\forall\, e\in \E, \, \forall\, v_i \in \V \setminus \left\{v_n\right\}, \\
    0 &\leq u_{e,i} \leq 1  &\forall\, v_i \in \V.
\end{align*}


This results in the following reformulation of \eqref{pb:DBR}:

\begin{proposition}
    \label{prop:MIP_optimality}
    Problem~\eqref{pb:DBR} can be formulated as the following MIP:
    \begin{align}
        \label{MIP:pricing problem}
        \tag{MIP}
        \begin{alignedat}{3}
            & \min_{x\in \R^{n},\, u \in \R^{\E \times \{1,\ldots,n\} } } \quad && \sum_{e\in \E} \rhoA_e\, u_{e,n} \\
            & \text{subject to} \quad && \sum_{v_i\in \V} x_{i} \leq \rD  \\
            &  && 1-p_{v_1} \mathds{1}_{ \C_{v_1} }(e) x_{1} \leq u_{e,1} \quad & \forall\, e\in \E,\\
            &  && u_{e,i} \left(1-p_{v_{i+1}} \mathds{1}_{ \C_{v_{i+1}} }(e) \right) \leq u_{e,i+1} \quad & \forall\, e\in \E, \, \forall \,  v_i \in \V \setminus \left\{ v_n \right\},\\
            &  && u_{e,i} - x_{i+1} \leq u_{e,i+1} \quad & \forall\, e\in \E, \, \forall \, v_i \in \V \setminus \left\{ v_n \right\},\\
            &  && 0 \leq u_{e,i} \leq 1 \quad &  \forall \,  v_i  \in \V,\\
            &  && x_{i} \in \{0,1\} & \forall\, v_i \in \V.
        \end{alignedat}
    \end{align}
    In particular, let $(x^{*}, u^{*})\in \R^{n} \times \R^{\E \times \{1,\ldots,n\} }$ be an optimal solution of \eqref{MIP:pricing problem}. Then, $S^{*} \coloneqq \{ v_i \in \V:\, x^*_i=1\}$ is an optimal solution of \eqref{pb:DBR}, and $U(S^*, \rhoA) = \sum_{e\in \E} \rhoA_{e} \, u^{*}_{e,n}$. 
\end{proposition}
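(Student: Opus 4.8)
The plan is to read the MIP as a two-stage minimization: an outer minimization over the binary vector $x \in \{0,1\}^n$ and an inner minimization over the continuous auxiliary variables $u$. First I would observe that the constraint $\sum_{v_i \in \V} x_i \leq \rD$ makes the map $x \mapsto S(x) \coloneqq \{v_i \in \V : x_i = 1\}$ a bijection between binary feasible points and detector positionings $S \in \AD$. The crux is then to show that, for each fixed feasible $x$, the inner minimization recovers the undetection probabilities, namely that the smallest value of $u_{e,n}$ consistent with the $u$-constraints equals $u(S(x),e) = \prod_{v_j \in S(x):\, e \in \C_{v_j}}(1 - p_{v_j})$. Since the $u$-constraints never couple different components $e$, this inner problem decouples across $\E$, and for each $e$ it reduces to a one-dimensional chain of lower bounds running from $u_{e,1}$ up to $u_{e,n}$.

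To analyze this chain for fixed $x$ and fixed $e$, I would prove by induction on $i \in \{1,\ldots,n\}$ that the minimum feasible value of $u_{e,i}$ equals the partial product $q_{e,i} \coloneqq \prod_{v_j \in S(x):\, j \leq i,\, e \in \C_{v_j}}(1 - p_{v_j})$. The base case follows from $1 - p_{v_1}\mathds{1}_{\C_{v_1}}(e)x_1 \leq u_{e,1} \leq 1$: if $e \notin \C_{v_1}$ or $x_1 = 0$ the lower bound is $1$, forcing $u_{e,1} = 1 = q_{e,1}$, while otherwise the minimum is $1 - p_{v_1} = q_{e,1}$. The inductive step is where the linearization must be verified exactly, and I expect it to be the main obstacle: the two constraints $u_{e,i}(1 - p_{v_{i+1}}\mathds{1}_{\C_{v_{i+1}}}(e)) \leq u_{e,i+1}$ and $u_{e,i} - x_{i+1} \leq u_{e,i+1}$ together encode the product $u_{e,i}(1 - p_{v_{i+1}}\mathds{1}_{\C_{v_{i+1}}}(e)x_{i+1})$, and I must check that the correct one binds for each value $x_{i+1} \in \{0,1\}$.

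The case analysis I would carry out relies on $p_v \in [0,1]$ and $0 \leq u_{e,i} \leq 1$. When $e \notin \C_{v_{i+1}}$, both linearization constraints become lower bounds at most $u_{e,i}$, and the tighter one, $u_{e,i} \leq u_{e,i+1}$, forces the minimum $u_{e,i+1} = u_{e,i} = q_{e,i} = q_{e,i+1}$. When $e \in \C_{v_{i+1}}$ and $x_{i+1} = 0$, the bound $u_{e,i}(1 - p_{v_{i+1}}) \leq u_{e,i+1}$ is dominated (using $p_{v_{i+1}} \geq 0$) by $u_{e,i} \leq u_{e,i+1}$, giving again $u_{e,i+1} = u_{e,i} = q_{e,i+1}$. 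When $e \in \C_{v_{i+1}}$ and $x_{i+1} = 1$, the bound $u_{e,i} - 1 \leq u_{e,i+1}$ is vacuous since $u_{e,i} \leq 1$ makes its left-hand side nonpositive, so the binding constraint is $u_{e,i}(1 - p_{v_{i+1}}) \leq u_{e,i+1}$, giving $u_{e,i+1} = u_{e,i}(1 - p_{v_{i+1}}) = q_{e,i+1}$. In each case the partial-product value lies in $[0,1]$ and satisfies the non-binding constraint with slack, hence it is attained; this completes the induction and establishes $\min u_{e,n} = u(S(x),e)$.

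Finally I would assemble the two stages. Because $\rhoA \in \ThetaA \subseteq [0,1]^{\E}$ has nonnegative entries, the optimal value of the inner minimization over $u$ equals $\sum_{e \in \E} \rhoA_e\, u(S(x),e) = U(S(x),\rhoA)$ for every fixed feasible $x$. Minimizing over $x$ then shows that the optimal value of \eqref{MIP:pricing problem} equals $\min_{S \in \AD} U(S,\rhoA)$, the value of \eqref{pb:DBR}. Consequently, if $(x^*,u^*)$ is optimal for the MIP, then $S^* = \{v_i \in \V : x_i^* = 1\}$ attains this minimum and is optimal for \eqref{pb:DBR}; moreover, for every $e$ with $\rhoA_e > 0$ optimality forces $u^*_{e,n} = u(S^*,e)$, while components with $\rhoA_e = 0$ contribute nothing, so $\sum_{e \in \E} \rhoA_e\, u^*_{e,n} = U(S^*,\rhoA)$, as claimed.
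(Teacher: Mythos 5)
Your proposal is correct and follows essentially the same route as the paper: both arguments rest on the observation that, for $u_{e,i}\in[0,1]$ and $x_{i+1}\in\{0,1\}$, the two linear constraints jointly encode $\max\{u_{e,i}(1-p_{v_{i+1}}\mathds{1}_{\C_{v_{i+1}}}(e)),\,u_{e,i}-x_{i+1}\}=u_{e,i}(1-p_{v_{i+1}}\mathds{1}_{\C_{v_{i+1}}}(e)x_{i+1})$, propagated along the chain from $i=1$ to $n$. The only difference is presentational: you organize this as a two-stage minimization with an explicit per-component induction computing the inner minimum, whereas the paper shows feasibility of the intended $(x,u)$ for every $S\in\AD$ and then lower-bounds $u^{*}_{e,n}$ by repeated application of the constraints; the substance is the same.
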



Notably, \eqref{MIP:pricing problem} comprises $O(mn)$ variables and constraints, and can be solved to determine the variable $\sigmaD_S$ of the master problem \eqref{LP: LP1 reformulation} with lowest reduced cost. Thus, NE of $\Gamma$ can be computed by solving \eqref{LP: LP1 reformulation} with a column generation algorithm that solves \eqref{MIP:pricing problem} for the pricing problem.
%
%
%
%
%

\section{Approximate Solution Methods}
\label{sec:approximate equilibrium strategies}
We now delve into the computation of approximate equilibrium strategies for the game $\Gamma$. We begin by analyzing the structure of Problem \eqref{pb:DBR} and exploit the value of approximation algorithms. We then propose two solution approaches that leverage these approximations: one utilizing column generation, and the other employing the multiplicative weights update algorithm.

\subsection{Approximate Solutions for Defender's Best Response Problem}
\label{subsec:approximation_algorithms_for_DBR}
We first examine structural properties of Problem \eqref{pb:DBR} that will allow us to efficiently compute approximate best responses for a large class of instances. 
\begin{lemma}
    \label{prop:supermodularity}
    For every fixed $\rhoA \in \ThetaA$, the set function $U(S,\rhoA) \coloneqq \sum_{e\in \E} \rhoA_e \, u(S, e)$, defined for every $S\subseteq \V$, is nonincreasing and supermodular.
\end{lemma}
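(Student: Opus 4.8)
The plan is to reduce the statement to a per-component analysis and then recombine using the nonnegativity of the marginal attack probabilities. Since $U(S,\rhoA) = \sum_{e\in\E} \rhoA_e\, u(S,e)$ with each $\rhoA_e \geq 0$, and since both monotonicity and supermodularity of set functions are preserved under nonnegative linear combinations (the defining inequalities being linear in the function), it suffices to establish that each set function $S \mapsto u(S,e)$ is nonincreasing and supermodular on $\V$.

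For monotonicity, I would observe that adjoining a location $v$ to $S$ acts multiplicatively on $u(\cdot,e)$: if $e \notin \C_v$ then $u(S\cup\{v\},e) = u(S,e)$, whereas if $e \in \C_v$ then $u(S\cup\{v\},e) = (1-p_v)\,u(S,e)$. As $(1-p_v) \in [0,1]$ and $u(S,e) \geq 0$, in both cases $u(S\cup\{v\},e) \leq u(S,e)$, so $u(\cdot,e)$ is nonincreasing, and hence so is the weighted sum $U(\cdot,\rhoA)$.

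For supermodularity, the key step is to compute the marginal contribution of a location in closed form. Fix $S \subseteq T \subseteq \V$ and $v \notin T$. If $e \notin \C_v$, both marginals vanish and the supermodularity inequality holds with equality. If $e \in \C_v$, the multiplicative identity above yields $u(S\cup\{v\},e) - u(S,e) = -p_v\,u(S,e)$ and $u(T\cup\{v\},e) - u(T,e) = -p_v\,u(T,e)$, so the required inequality $u(S\cup\{v\},e) - u(S,e) \leq u(T\cup\{v\},e) - u(T,e)$ reduces to $p_v\,u(T,e) \leq p_v\,u(S,e)$. This holds because $p_v \geq 0$ and $u(T,e) \leq u(S,e)$ by the monotonicity just established (as $S \subseteq T$). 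Thus each $u(\cdot,e)$ is supermodular, and summing with weights $\rhoA_e \geq 0$ gives the claim.

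The subtlety—rather than a genuine obstacle—lies in correctly orienting the supermodularity inequality for a \emph{decreasing} function: because the undetection probability is a product of factors in $[0,1]$, the negative marginal effect of an extra detector scales with the current undetection probability $u(S,e)$, which is \emph{larger} for the \emph{smaller} set $S$. Hence the marginal decrease is more pronounced on $S$ than on $T$, which is exactly the content of supermodularity. The only additional care needed is to treat the $e \notin \C_v$ case separately so that the argument covers every location and component uniformly.
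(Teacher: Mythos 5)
Your proof is correct and follows essentially the same route as the paper's: both reduce to showing that each per-component function $u(\cdot,e)$ is nonincreasing and supermodular, compute the marginal effect of adding a location $v$ as $-p_v\,\mathds{1}_{\C_v}(e)\,u(S,e)$, compare it across nested sets via monotonicity, and conclude for $U(\cdot,\rhoA)$ by closure of these properties under nonnegative linear combinations. The only cosmetic difference is that you split the cases $e\in\C_v$ and $e\notin\C_v$ explicitly, while the paper absorbs both into the indicator $\mathds{1}_{\C_v}(e)$.
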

Therefore, for a fixed $\A$'s marginal attack strategy $\rhoA\in \ThetaA$, $\D$'s best response problem \eqref{pb:DBR} consists in minimizing the nonnegative and nonincreasing supermodular function $U(\cdot, \rhoA)$, subject to a cardinality constraint given by $S\in \AD$. Let $S^{*}\in \AD$ be an optimal solution of \eqref{pb:DBR}. For $\alpha\geq 1$, a detector positioning $\widehat{S} \in \AD$ is an $\alpha$-approximate best response for $\D$ if it satisfies $U(S^*,\rhoA) \leq  U(\widehat{S},\rhoA) \leq \alpha  U(S^*,\rhoA)$.
To approximate \eqref{pb:DBR}, \citet{ilev2001approximation} and \citet{sviridenko2017optimal} showed that a \emph{reverse} greedy algorithm---also referred to as greedy descent or stingy algorithm---finds a solution whose approximation guarantee depends on the \emph{curvature} of the function $U(\cdot,\rhoA)$, defined as $c \coloneqq 1 -  \min_{\{v \in \V:\,  U_{\emptyset}(v, \rhoA) > 0\}}\,  U_{\V \setminus \{v\}}(v, \rhoA)  / U_{\emptyset}(v, \rhoA) $,
where $U_{S}(v, \rhoA) \coloneqq  U(S, \rhoA) - U(S \cup \{v\}, \rhoA)$ is the marginal decrease in the expected number of undetected attacks when $v\in \V$ is added to the set of detector locations $S\subseteq \V$. We let $c\coloneqq 0$ if $U_{\emptyset}(v, \rhoA) = 0$ for every $v\in \V$. The curvature parameter $c$ lies in the interval $[0,1]$ and measures how far $U(\cdot, \rhoA)$ is from being an additive set function. In particular, when $c=0$, the function $U(\cdot, \rhoA)$ is additive.

In practical scenarios, it is often the case that each component can be inspected from a small number of detector locations. In such cases, we can provide the following upper bound on the curvature of $U(\cdot, \rhoA)$.
\begin{lemma}
    \label{lem:upper_bound_curvature}
    The curvature parameter $c$ of the function $U(\cdot, \rhoA)$ satisfies $c \leq 1 - (1- \max_{v \in \V} p_v  )^{d}$, where $d$ is the maximum number of locations that can monitor a component.
\end{lemma}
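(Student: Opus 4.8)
The plan is to work directly from the definition of the curvature. Since
$c = 1 - \min_{v} U_{\V\setminus\{v\}}(v,\rhoA)/U_{\emptyset}(v,\rhoA)$, where the minimum ranges over locations with $U_{\emptyset}(v,\rhoA) > 0$, proving the claimed bound $c \leq 1 - (1 - p_{\max})^d$ with $p_{\max} \coloneqq \max_{v\in\V} p_v$ is equivalent to proving the lower bound $U_{\V\setminus\{v\}}(v,\rhoA)/U_{\emptyset}(v,\rhoA) \geq (1 - p_{\max})^d$ for every such $v$. (In the degenerate case where $U_{\emptyset}(v,\rhoA) = 0$ for all $v$, we have $c = 0$ by convention and the inequality holds trivially since $(1-p_{\max})^d \leq 1$.) So the first step is to obtain a tractable expression for the two marginals appearing in this ratio.

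First I would compute the marginal decrease $U_S(v,\rhoA)$ for a set $S$ not containing $v$. Using that $u(S\cup\{v\},e) = u(S,e)(1-p_v)$ when $e \in \C_v$ and $u(S\cup\{v\},e) = u(S,e)$ otherwise, the difference $u(S,e) - u(S\cup\{v\},e)$ equals $p_v\, u(S,e)$ when $e \in \C_v$ and vanishes otherwise. Summing against the weights $\rhoA_e$ yields $U_S(v,\rhoA) = p_v \sum_{e\in\C_v} \rhoA_e\, u(S,e)$. Specializing to $S = \emptyset$ (where $u(\emptyset,e) = 1$) and to $S = \V\setminus\{v\}$, the factors of $p_v$ cancel in the ratio, leaving
\[
    \frac{U_{\V\setminus\{v\}}(v,\rhoA)}{U_{\emptyset}(v,\rhoA)} = \frac{\sum_{e\in\C_v} \rhoA_e\, u(\V\setminus\{v\},e)}{\sum_{e\in\C_v} \rhoA_e}.
\]
Because $U_{\emptyset}(v,\rhoA) > 0$ forces $\sum_{e\in\C_v}\rhoA_e > 0$, the right-hand side is a genuine convex combination of the quantities $u(\V\setminus\{v\},e)$ over those $e\in\C_v$ with $\rhoA_e > 0$, and is therefore at least $\min_{e\in\C_v:\,\rhoA_e>0} u(\V\setminus\{v\},e)$.

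It then remains to bound each $u(\V\setminus\{v\},e) = \prod_{w\in\V\setminus\{v\}:\,e\in\C_w}(1-p_w)$ from below. Replacing every factor $1-p_w$ by the smaller quantity $1-p_{\max}\in[0,1]$ gives $u(\V\setminus\{v\},e) \geq (1-p_{\max})^{k_e}$, where $k_e$ is the number of locations other than $v$ that monitor $e$. For $e\in\C_v$, the component $e$ is monitored by at most $d$ locations in total and $v$ is one of them, so $k_e \leq d-1$; since $1-p_{\max}\leq 1$, this yields $u(\V\setminus\{v\},e) \geq (1-p_{\max})^{d-1} \geq (1-p_{\max})^d$. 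Combining this with the convex-combination bound and taking the minimum over $v$ establishes $\min_v U_{\V\setminus\{v\}}(v,\rhoA)/U_{\emptyset}(v,\rhoA) \geq (1-p_{\max})^d$, hence $c \leq 1 - (1-p_{\max})^d$.

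The argument is essentially a direct computation, so I do not anticipate a serious obstacle; the delicate points are purely bookkeeping. The main one is correctly counting the exponent: one must use that $v$ itself is among the at most $d$ locations monitoring $e$ whenever $e\in\C_v$, which is exactly why the relevant exponent is $d-1$ and the stated bound (with exponent $d$) is in fact slightly loose. The second is to handle the degenerate cases—$U_{\emptyset}(v,\rhoA)=0$ and components with $\rhoA_e = 0$—so that the ratio is read as an average over the support of $\rhoA$ within $\C_v$, which is precisely where the restriction defining the curvature's minimum becomes essential.
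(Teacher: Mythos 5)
Your proposal is correct and follows essentially the same route as the paper: compute $U_{\emptyset}(v,\rhoA)=p_v\sum_{e\in\C_v}\rhoA_e$ and $U_{\V\setminus\{v\}}(v,\rhoA)=p_v\sum_{e\in\C_v}\rhoA_e\,u(\V\setminus\{v\},e)$, bound each factor $1-p_w$ below by $1-\max_{w}p_w$, and handle the degenerate case $\rhoA=\boldsymbol{0}$ via the convention $c=0$. Your observation that the exponent can be sharpened to $d-1$ (since $v$ itself is one of the at most $d$ monitors of each $e\in\C_v$) is a valid minor refinement that the paper forgoes, but it does not change the argument.
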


Starting from the full set of detector locations $\V$, each iteration of the reverse greedy algorithm selects a location that minimizes the marginal decrease in the expected number of undetected attacks that would result from its removal. Then, it removes such location from the current solution. This process is repeated until $\rD$ locations are left, resulting in a feasible detector positioning. We refer to Algorithm \ref{alg:reverse_greedy} for a pseudocode of the reverse greedy algorithm.

\begin{algorithm}[!htbp]
\DontPrintSemicolon
    Initialize $S \gets \V$ \;
    \While{$|S| > \rD$}{
        $v \in \underset{w \in \V }{\argmin} \, U_{S \setminus \{w\} }(w, \rhoA) $ \;
        $S \gets S \setminus\left\{ v\right\}$
    }
    \Return{S}
\caption{Reverse Greedy Algorithm for Approximate $\D$'s Best Response}
\label{alg:reverse_greedy}
\end{algorithm}

The reverse greedy algorithm exhibits the following approximation guarantee.
\begin{theorem}[\citet{sviridenko2017optimal}]
    \label{thm:reverse_greedy_guarantee}
    The reverse greedy algorithm (Algorithm~\ref{alg:reverse_greedy}) returns a detector positioning $\widehat{S} \in \AD$ satisfying $U(\widehat{S}, \rhoA) \leq \left( 1 / (1-c) \right) U(S^*, \rhoA)$. 
\end{theorem}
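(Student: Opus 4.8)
The plan is to recognize Problem~\eqref{pb:DBR} as cardinality-constrained minimization of a \emph{monotone supermodular} function and to invoke the curvature-based guarantee of \citet{sviridenko2017optimal}; the work on our end is to verify the hypotheses and to match the curvature parameter. For a fixed $\rhoA \in \ThetaA$, write $f(S) \coloneqq U(S,\rhoA)$. By Lemma~\ref{prop:supermodularity}, $f$ is nonincreasing and supermodular, and it is clearly nonnegative. Since $f$ is nonincreasing, an optimal solution $S^*$ of~\eqref{pb:DBR} may be taken with $|S^*| = \rD$ without changing the optimal value (padding an optimal solution with arbitrary locations preserves optimality), and the reverse greedy returns $\widehat{S}$ with $|\widehat{S}| = \rD$. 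It is convenient to pass to the complement $g(S) \coloneqq U(\emptyset,\rhoA) - U(S,\rhoA)$, which is nonnegative, nondecreasing, submodular, and normalized ($g(\emptyset)=0$); minimizing $f$ over $\AD$ is the same as maximizing $g$ over $\AD$, and the parameter $c$ is exactly the total curvature of $g$, since $g(\{v\}) = U_\emptyset(v,\rhoA)$ and $g(\V)-g(\V\setminus\{v\}) = U_{\V\setminus\{v\}}(v,\rhoA)$.

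The engine of the argument is the following curvature inequality: for every $S\subseteq\V$ and $v\notin S$,
\begin{align*}
    (1-c)\,U_\emptyset(v,\rhoA) \;\le\; U_S(v,\rhoA) \;\le\; U_\emptyset(v,\rhoA).
\end{align*}
The upper bound is immediate from supermodularity, which makes the marginal decrease $U_S(v,\rhoA)$ nonincreasing in $S$, so that $U_S(v,\rhoA)\le U_\emptyset(v,\rhoA)$. For the lower bound, the same monotonicity gives $U_S(v,\rhoA) \ge U_{\V\setminus\{v\}}(v,\rhoA)$ (as $S \subseteq \V\setminus\{v\}$), and the definition of the curvature yields $U_{\V\setminus\{v\}}(v,\rhoA) \ge (1-c)\,U_\emptyset(v,\rhoA)$ (the case $U_\emptyset(v,\rhoA)=0$ being trivial). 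Telescoping this inequality along any insertion order of the elements of a set $S$ sandwiches $g$ between a modular function and $(1-c)$ times it:
\begin{align*}
    (1-c)\sum_{v\in S}U_\emptyset(v,\rhoA) \;\le\; g(S) \;\le\; \sum_{v\in S}U_\emptyset(v,\rhoA), \qquad \forall\, S\subseteq\V.
\end{align*}

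With these structural facts in hand, the bound $U(\widehat{S},\rhoA)\le \frac{1}{1-c}U(S^*,\rhoA)$ follows from the greedy analysis of \citet{sviridenko2017optimal} (see also \citet{ilev2001approximation}). If reproving it, I would run the exchange argument on the reverse greedy directly: at the removal step from a current set $S_k$ of size $k>\rD$, the removed location minimizes the marginal decrease over $S_k$, so its removal cost is bounded by the average removal cost over the (at least) $k-\rD$ locations of $S_k$ lying outside $S^*$; telescoping these costs and controlling them with the curvature inequality bounds the accumulated increase $U(\widehat{S},\rhoA)-U(\V,\rhoA)$ against $U(S^*,\rhoA)$.

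The delicate point---and the step I expect to be the main obstacle---is that the guarantee is \emph{purely multiplicative} in $U(S^*,\rhoA)$. Indeed, routing the argument through the complementary submodular maximization only yields a $(1-c)$-approximation for $g$, i.e.\ $g(\widehat{S})\ge(1-c)\,g(S^*)$, which translates into the weaker \emph{additive} bound
\begin{align*}
    U(\widehat{S},\rhoA) \;\le\; c\,U(\emptyset,\rhoA) + (1-c)\,U(S^*,\rhoA),
\end{align*}
carrying the offset $U(\emptyset,\rhoA)=\sum_{e\in\E}\rhoA_e$, which can be arbitrarily larger than $U(S^*,\rhoA)$. Eliminating this offset to obtain the clean factor $\frac{1}{1-c}$ is precisely what the curvature-specific accounting of \citet{sviridenko2017optimal} accomplishes, and is the part of the proof that genuinely exploits the minimization (rather than the maximization) structure of the problem.
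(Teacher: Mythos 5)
The paper offers no proof of this theorem: it is stated as a cited result of \citet{sviridenko2017optimal}, and your proposal does essentially the same thing---verify that $U(\cdot,\rhoA)$ is nonnegative, nonincreasing, and supermodular (Lemma~\ref{prop:supermodularity}) and that $c$ is the right curvature parameter, then invoke the cited guarantee---so the approaches coincide. Your closing observation that the detour through complementary submodular maximization only yields an additive bound is correct and is exactly what the paper's Proposition~\ref{prop:forward_greedy_guarantee} exhibits for the forward greedy algorithm.
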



Therefore, by applying the bound provided in \mbox{Lemma~\ref{lem:upper_bound_curvature}} to \mbox{Theorem~\ref{thm:reverse_greedy_guarantee}}, we derive the following instance-dependent approximation result.



\begin{corollary}
    \label{cor:approximation_factor}
    The reverse greedy algorithm (Algorithm~\ref{alg:reverse_greedy}) returns a detector positioning $\widehat{S} \in \AD$ satisfying $U(\widehat{S}, \rhoA) \leq \left( 1 / \left(1 - \max_{v \in \V} p_v \right)^{d} \right) U(S^*, \rhoA)$, where $d$ represents the maximum number of locations that can monitor a component. 
\end{corollary}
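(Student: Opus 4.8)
The plan is to chain the two preceding results. Theorem~\ref{thm:reverse_greedy_guarantee} provides the curvature-dependent guarantee $U(\widehat{S}, \rhoA) \leq (1/(1-c))\, U(S^*, \rhoA)$, while Lemma~\ref{lem:upper_bound_curvature} bounds the curvature by $c \leq 1 - (1 - \max_{v \in \V} p_v)^{d}$. The corollary follows by converting this curvature bound into a bound on the multiplicative factor $1/(1-c)$ and substituting it into the theorem's inequality. The only subtlety is keeping track of positivity of the denominators so that the reciprocal step is valid.

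First I would dispose of the degenerate case. Since each $p_v \in [0,1]$, if $\max_{v \in \V} p_v = 1$ then $(1 - \max_{v \in \V} p_v)^{d} = 0$ and the right-hand side of the claimed inequality is $+\infty$, so the bound holds vacuously (recall that $U(\cdot, \rhoA)$ is a nonnegative set function, being a nonnegative combination of products of terms in $[0,1]$, so in particular $U(S^*, \rhoA) \geq 0$). Hence I may assume $\max_{v \in \V} p_v < 1$, which guarantees $(1 - \max_{v \in \V} p_v)^{d} > 0$. In this nondegenerate regime, the bound from Lemma~\ref{lem:upper_bound_curvature} rearranges to $1 - c \geq (1 - \max_{v \in \V} p_v)^{d} > 0$, and taking reciprocals reverses the inequality to give $1/(1-c) \leq 1/(1 - \max_{v \in \V} p_v)^{d}$. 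Combining this with the guarantee of Theorem~\ref{thm:reverse_greedy_guarantee} and using $U(S^*, \rhoA) \geq 0$ to preserve the inequality direction upon multiplication yields
\[
U(\widehat{S}, \rhoA) \leq \frac{1}{1-c}\, U(S^*, \rhoA) \leq \frac{1}{\left(1 - \max_{v \in \V} p_v\right)^{d}}\, U(S^*, \rhoA),
\]
which is exactly the desired bound.

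The step I expect to require the most care is not computational but definitional: ensuring the chain of inequalities remains consistent at the boundary of the curvature interval $[0,1]$. In particular, when $c = 1$ the theorem's factor $1/(1-c)$ is infinite and its guarantee is vacuous; here I would verify that Lemma~\ref{lem:upper_bound_curvature} forces $\max_{v \in \V} p_v = 1$ as well, since $c = 1$ together with $c \leq 1 - (1 - \max_{v \in \V} p_v)^{d}$ implies $(1 - \max_{v \in \V} p_v)^{d} \leq 0$, so that the corollary's right-hand side is likewise infinite and the statement remains internally consistent. Beyond this boundary bookkeeping, the argument is a direct substitution of Lemma~\ref{lem:upper_bound_curvature} into Theorem~\ref{thm:reverse_greedy_guarantee}.
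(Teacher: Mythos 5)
Your proof is correct and follows exactly the paper's (implicit) argument: substituting the curvature bound of Lemma~\ref{lem:upper_bound_curvature} into the guarantee of Theorem~\ref{thm:reverse_greedy_guarantee} via the monotonicity of $c \mapsto 1/(1-c)$. The extra care you take with the degenerate case $\max_{v\in\V} p_v = 1$ is a sensible addition but does not change the substance of the argument.
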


From Theorem~\ref{thm:reverse_greedy_guarantee}, the reverse greedy algorithm finds a detector positioning for which the expected number of undetected attacks against $\A$'s marginal attack strategy $\rhoA$ is within a factor of $1 / (1-c)$ of that of $\D$'s best response. This approximation factor is strictly increasing in the curvature parameter $c$, and it degrades as $c$ approaches $1$. In the limit $c=1$, we can show that Problem \eqref{pb:DBR} becomes notably challenging to approximate.

\begin{proposition}
    \label{prop:inapproximability}
    If $c=1$, then, unless $\textnormal{P} = \textnormal{NP}$, there is no polynomial time approximation algorithm for \eqref{pb:DBR}.
\end{proposition}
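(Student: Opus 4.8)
The plan is to show that, for a suitable family of \eqref{pb:DBR} instances all having curvature $c=1$, it is $\mathrm{NP}$-hard to decide whether the optimal value equals $0$ or is strictly positive. This dichotomy immediately rules out any approximation algorithm: a polynomial-time $\alpha$-approximation (for any factor $\alpha \geq 1$) returns a feasible $\widehat S \in \AD$ with $U(S^*,\rhoA) \leq U(\widehat S, \rhoA) \leq \alpha\, U(S^*,\rhoA)$, so it outputs value $0$ exactly when $U(S^*,\rhoA)=0$ and a strictly positive value whenever $U(S^*,\rhoA)>0$. Hence such an algorithm would resolve the decision question in polynomial time, and it suffices to exhibit a hard instance family with $c=1$ realizing this gap.

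First I would reduce from the $\mathrm{NP}$-complete problem VERTEX COVER. Given a graph $G=(N,F)$ with no isolated vertices and an integer $k$, I build a \eqref{pb:DBR} instance by setting $\V := N$, $\E := F$, and $\C_v := \{e\in F:\, v\in e\}$ (the edges incident to $v$), with \emph{perfect detection} $p_v := 1$ for all $v\in\V$, budget $\rD := k$, and any strictly positive marginal vector $\rhoA \in \ThetaA$, e.g.\ the uniform vector $\rhoA_e = 1/m$ with $\rA = 1$ (feasible since $\rhoA\in[0,1]^{\E}$ and $\sum_{e\in\E}\rhoA_e = 1 \leq \rA$). Because $p_v = 1$ gives $(1-p_v)=0$, the product $u(S,e)=\prod_{v\in S:\, e\in\C_v}(1-p_v)$ equals $1$ when no selected location monitors $e$ and $0$ otherwise; thus $U(S,\rhoA)=\sum_{e\notin\bigcup_{v\in S}\C_v}\rhoA_e$ is a positive multiple of the total weight of edges left uncovered by $S$. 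Since each $\rhoA_e>0$, we get $U(S,\rhoA)=0$ if and only if $S$ covers every edge, so $\min_{S\in\AD}U(S,\rhoA)=0$ if and only if $G$ admits a vertex cover of size at most $k$.

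Next I would verify that every instance produced by this reduction has curvature $c=1$, so that it falls under the hypothesis of the proposition. As each edge has exactly two endpoints, every component lies in exactly two monitoring sets. For any vertex $v$ of positive degree, $U_{\emptyset}(v,\rhoA) = U(\emptyset,\rhoA) - U(\{v\},\rhoA) = \sum_{e\in\C_v}\rhoA_e > 0$, while $U(\V,\rhoA)=0$ (the full set $N$ covers all edges) and $U(\V\setminus\{v\},\rhoA)=0$, because each edge incident to $v$ has its other endpoint in $\V\setminus\{v\}$ and is therefore still covered. Hence $U_{\V\setminus\{v\}}(v,\rhoA)=0$ for every such $v$, the curvature ratio $U_{\V\setminus\{v\}}(v,\rhoA)/U_{\emptyset}(v,\rhoA)$ equals $0$, and its minimum over $\{v:\, U_{\emptyset}(v,\rhoA)>0\}$ is $0$, giving $c = 1 - 0 = 1$.

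Finally I would assemble the argument: a polynomial-time approximation algorithm for \eqref{pb:DBR} valid on all instances with $c=1$ would, applied to the family above, return $\widehat S$ with $U(\widehat S,\rhoA)=0$ in the yes-instances and $U(\widehat S,\rhoA)\geq U(S^*,\rhoA)>0$ in the no-instances, thereby deciding VERTEX COVER in polynomial time and forcing $\mathrm{P}=\mathrm{NP}$. I expect the main obstacle to be the curvature computation—confirming that the reduced instances have curvature \emph{exactly} $1$ so they satisfy the hypothesis—together with the conceptual point that perfect detection $p_v=1$ is essential: it is precisely what allows the optimum to be genuinely $0$, without which the product structure forces $U(S,\rhoA)>0$ everywhere and no zero-versus-positive gap could rule out finite-factor approximations.
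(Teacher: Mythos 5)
Your proposal is correct and follows essentially the same route as the paper: the same VERTEX COVER reduction with perfect detection $p_v=1$ and uniform $\rhoA$, the same verification that $U_{\V\setminus\{v\}}(v,\rhoA)=0$ while $U_{\emptyset}(v,\rhoA)>0$ forces $c=1$, and the same zero-versus-positive gap argument showing that any finite-factor approximation would decide the covering question. The only cosmetic difference is that you spell out the no-isolated-vertices assumption and the role of $p_v=1$ more explicitly, which the paper leaves implicit.
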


In scenarios with expensive detection resources, the number of available detectors $\rD$ may be significantly smaller than the number of locations $n$. In such instances, the reverse greedy algorithm needs to remove a large number of locations from $\V$ before termination. This can potentially lower the performance of solution approaches that frequently utilize reverse greedy as a subroutine. To mitigate this issue, we also consider the \emph{forward} greedy algorithm as an approximation method for \eqref{pb:DBR}. This algorithm begins with an empty set of detector locations, and in each iteration, selects a location that maximizes the marginal decrease in the expected number of undetected attacks, subsequently adding it to the current solution, until $\rD$ locations are selected. We refer to Algorithm~\ref{alg:forward_greedy} for a pseudocode of the forward greedy algorithm. 
\begin{algorithm}[!htbp]
\DontPrintSemicolon
    Initialize $S \gets \emptyset$ \;
    \While{$|S| < \rD$}{
        $v \in \underset{w \in \V }{\argmax} \, U_{S}(w, \rhoA) $ \;
        $S \gets S \cup \left\{ v\right\}$
    }
    \Return{S}
\caption{Forward Greedy Algorithm for Approximate $\D$'s Best Response}
\label{alg:forward_greedy}
\end{algorithm}



By leveraging the approximation guarantee of \citet{conforti1984submodular} for submodular maximization with cardinality constraints, we can show that the forward greedy algorithm provides the following  guarantee:

\begin{proposition}
    \label{prop:forward_greedy_guarantee}
    The forward greedy algorithm (Algorithm~\ref{alg:forward_greedy}) returns a detector positioning $\widehat{S}\in \AD$ satisfying $U(\widehat{S}, \rhoA) \leq \left( \frac{1 - \exp(-c)}{c} \right) U(S^*, \rhoA) + \left(1 - \frac{1 - \exp(-c)}{c} \right)  \rA$.
\end{proposition}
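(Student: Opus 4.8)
The plan is to reduce the supermodular minimization at hand to a monotone submodular maximization, so that the classical greedy guarantee with curvature of \citet{conforti1984submodular} applies directly. First I would introduce the complementary \emph{coverage} function $f(S) \coloneqq U(\emptyset, \rhoA) - U(S, \rhoA)$ for $S \subseteq \V$. By Lemma~\ref{prop:supermodularity}, $U(\cdot, \rhoA)$ is nonincreasing and supermodular, so $f$ is nondecreasing and submodular, and clearly $f(\emptyset) = 0$; hence $f$ is a normalized monotone submodular set function. Since $U(S, \rhoA) = U(\emptyset, \rhoA) - f(S)$, minimizing $U(\cdot, \rhoA)$ over $\AD$ is equivalent to maximizing $f$ subject to the cardinality constraint $|S| \le \rD$, and the optimal set $S^*$ of \eqref{pb:DBR} is also an optimizer of the latter. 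Moreover, the marginal decrease $U_S(w, \rhoA)$ maximized in each iteration of Algorithm~\ref{alg:forward_greedy} equals the marginal gain $f(S \cup \{w\}) - f(S)$, so the forward greedy algorithm coincides exactly with the standard greedy algorithm for maximizing $f$.

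Next I would verify that the curvature of $f$ matches the curvature $c$ of $U(\cdot, \rhoA)$ defined in the text. Using $f(\emptyset) = 0$, one has $f(\{v\}) = U(\emptyset, \rhoA) - U(\{v\}, \rhoA) = U_\emptyset(v, \rhoA)$ and $f(\V) - f(\V \setminus \{v\}) = U(\V \setminus \{v\}, \rhoA) - U(\V, \rhoA) = U_{\V \setminus \{v\}}(v, \rhoA)$, so the standard curvature $1 - \min_{\{v \in \V:\, f(\{v\}) > 0\}} (f(\V) - f(\V \setminus \{v\}))/f(\{v\})$ of $f$ equals $c$. Applying the guarantee of \citet{conforti1984submodular} for greedy monotone submodular maximization under a cardinality constraint then yields $f(\widehat{S}) \ge \frac{1 - \exp(-c)}{c}\, f(S^*)$, where $\widehat{S}$ is the output of Algorithm~\ref{alg:forward_greedy}.

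It remains to translate this guarantee back to $U(\cdot, \rhoA)$. Writing $\beta \coloneqq \frac{1 - \exp(-c)}{c}$, substituting $f(S) = U(\emptyset, \rhoA) - U(S, \rhoA)$ and rearranging gives $U(\widehat{S}, \rhoA) \le \beta\, U(S^*, \rhoA) + (1 - \beta)\, U(\emptyset, \rhoA)$. The final step---which I expect to be the crux beyond the mechanical reduction---is to bound the empty-set term. Since $u(\emptyset, e) = 1$ for every $e \in \E$ (empty product), we have $U(\emptyset, \rhoA) = \sum_{e \in \E} \rhoA_e$, which is at most $\rA$ because $\rhoA \in \ThetaA$. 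As $\beta = \frac{1}{c}\int_0^c \exp(-t)\,dt \le 1$ for $c \in (0,1]$, so that $1 - \beta \ge 0$, replacing $U(\emptyset, \rhoA)$ by its upper bound $\rA$ yields exactly $U(\widehat{S}, \rhoA) \le \beta\, U(S^*, \rhoA) + (1 - \beta)\, \rA$. Finally I would dispose of the degenerate case $c = 0$, where $U(\cdot, \rhoA)$ is additive and $\beta$ is interpreted by its limiting value $1$; there the greedy choice is exact and the claimed inequality reduces to $U(\widehat{S}, \rhoA) \le U(S^*, \rhoA)$, which holds trivially.
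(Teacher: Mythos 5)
Your proposal is correct and follows essentially the same route as the paper: the paper likewise passes to the complementary detection function $D(S,\rhoA) \coloneqq \sum_{e\in\E}\rhoA_e - U(S,\rhoA)$ (identical to your $f$, since $U(\emptyset,\rhoA)=\sum_{e\in\E}\rhoA_e$), checks that it is monotone submodular with the same curvature $c$ and that forward greedy on $U$ coincides with greedy on $D$, applies the Conforti--Cornu\'ejols bound, and rearranges using $\sum_{e\in\E}\rhoA_e \le \rA$. Your explicit verification that $\frac{1-\exp(-c)}{c}\le 1$ and your treatment of the $c=0$ case are small touches the paper leaves implicit, but the argument is the same.
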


The approximation guarantee of Proposition~\ref{prop:forward_greedy_guarantee} is weaker than that of Theorem~\ref{thm:reverse_greedy_guarantee}. Notably, when the optimal value of \eqref{pb:DBR} is zero, the reverse greedy algorithm returns an optimal solution, whereas the forward greedy algorithm may yield a solution whose relative approximation error is unbounded.


\subsection{Column Generation with Approximate Best Response}
\label{subsec:colgen with approx best response}
The exact column generation algorithm of Section~\ref{sec:exact equilibrium strategies} requires solving \eqref{MIP:pricing problem} in each iteration to compute optimal solutions for the pricing problem \eqref{pb:DBR}, which may be computationally expensive. Alternatively, we can utilize the forward and reverse greedy algorithms to obtain fast approximate solutions to \eqref{pb:DBR}. We then show that the multiplicative approximation guarantee is inherited by the strategy profile returned by the column generation algorithm. Additionally, we can enable early termination once the reduced cost of the variable of \eqref{LP: LP1 reformulation} associated with the approximate best response is sufficiently close to zero, at the cost of an additional small additive error. 

\begin{theorem}
\label{thm:column_generation_with_approx_best_response}
    Let $\varepsilon \geq 0$. Consider a solution $(\sigmaDHat, \widehat{\lambda}, \widehat{\gamma}) \in \R^{\AD} \times \R^{\E} \times \R$ of \eqref{LP: LP1 reformulation}, together with a vector of dual variables $\rhoAHat \in \R^{\E}$ associated with its first set of constraints, attained through a column generation algorithm that solves in each iteration  the restricted master problem {\RMP{$\I$}} and computes an $\alpha$-approximation $\widehat{S} \in \AD$ for \eqref{pb:DBR}, and that terminates once the reduced cost of the variable $\sigmaD_{\widehat{S}}$ satisfies $\bar{c}_{\widehat{S}} \geq -\varepsilon$. Then, the strategy profile $(\sigmaDHat, \rhoAHat)$ satisfies $U(\sigmaDHat, \rhoAHat) = \rA \widehat{\gamma} + \sum_{e\in \E} \widehat{\lambda}_e$ and
    \begin{align*}
        U(\sigmaDHat, \rhoA) \leq U(\sigmaDHat, \rhoAHat) \leq \alpha \, U(\sigmaD, \rhoAHat) + \varepsilon, \quad \forall\, \sigmaD \in \DeltaD,\; \forall\, \rhoA \in \ThetaA.
    \end{align*}
    Furthermore, $\sigmaDHat$ and $\rhoAHat$ respectively satisfy the following bounds for the worst-case expected number of undetected attacks with respect to the value of the game:
    \begin{align*}
        U(\sigmaDEq, \rhoAEq) &\leq \underset{\rhoA \in \ThetaA}{\max} U(\sigmaDHat, \rhoA) \leq \alpha \, U(\sigmaDEq, \rhoAEq) + \varepsilon, \\
        \frac{1}{\alpha} \left( U(\sigmaDEq, \rhoAEq) - \varepsilon \right) &\leq \underset{\sigmaD \in \DeltaD}{\min} U(\sigmaD, \rhoAHat) \leq U(\sigmaDEq, \rhoAEq).
    \end{align*}
\end{theorem}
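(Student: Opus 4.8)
The plan is to first identify the scalar $U(\sigmaDHat,\rhoAHat)$ with both the optimal value of the terminal restricted master \RMP{$\I$} and the dual variable $\nu$ of its equality constraint, and then to push the termination rule and the $\alpha$-approximation guarantee through a short chain of LP-duality and equilibrium inequalities. I will use two facts recorded before the statement: $\rhoAHat \in \ThetaA$, and the reduced cost of $\sigmaD_S$ equals $\bar{c}_S = -\nu + U(S,\rhoAHat)$ for every $S \in \AD$.

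First I would prove the identity $U(\sigmaDHat,\rhoAHat) = \rA\widehat{\gamma} + \sum_{e\in\E}\widehat{\lambda}_e$. The dual of \RMP{$\I$} maximizes $\nu$ subject to $\sum_{e}\rhoAHat_e \le \rA$, $\rhoAHat_e \le 1$, $\rhoAHat_e \ge 0$ (which recovers $\rhoAHat \in \ThetaA$) and $\nu \le U(S,\rhoAHat)$ for every $S \in \I$. Complementary slackness forces $\nu = U(S,\rhoAHat)$ whenever $\sigmaDHat_S > 0$, so that $U(\sigmaDHat,\rhoAHat) = \sum_{S\in\I}\sigmaDHat_S\,U(S,\rhoAHat) = \nu\sum_{S\in\I}\sigmaDHat_S = \nu$, and strong duality equates $\nu$ with the primal optimum $\rA\widehat{\gamma} + \sum_{e}\widehat{\lambda}_e$. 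This gives the first equality and, in particular, $U(\sigmaDHat,\rhoAHat) = \nu$.

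Next I would treat the two-sided inequality. The left inequality needs only primal feasibility: for any $\rhoA \in \ThetaA$, the constraints $\widehat{\gamma} + \widehat{\lambda}_e \ge U(\sigmaDHat,e)$ and $\rhoA \ge 0$ give $U(\sigmaDHat,\rhoA) = \sum_e \rhoA_e U(\sigmaDHat,e) \le \sum_e \rhoA_e(\widehat{\gamma}+\widehat{\lambda}_e)$, and then $\sum_e \rhoA_e \le \rA$, $\rhoA_e \le 1$, $\widehat{\gamma},\widehat{\lambda}_e \ge 0$ bound the right-hand side by $\rA\widehat{\gamma} + \sum_e \widehat{\lambda}_e = U(\sigmaDHat,\rhoAHat)$. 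For the right inequality I would combine the stopping rule and the approximation factor: $\bar{c}_{\widehat{S}} \ge -\varepsilon$ reads $U(\widehat{S},\rhoAHat) \ge \nu - \varepsilon = U(\sigmaDHat,\rhoAHat) - \varepsilon$, while $\alpha$-approximation gives $U(\widehat{S},\rhoAHat) \le \alpha\,U(S^*,\rhoAHat)$ with $S^*$ an optimal solution of \eqref{pb:DBR} against $\rhoAHat$. Since every $U(\sigmaD,\rhoAHat)$ is a convex combination of the $U(S,\rhoAHat)$, we have $U(S^*,\rhoAHat) = \min_{S\in\AD} U(S,\rhoAHat) \le U(\sigmaD,\rhoAHat)$ for all $\sigmaD \in \DeltaD$; chaining the three relations yields $U(\sigmaDHat,\rhoAHat) \le \alpha\,U(\sigmaD,\rhoAHat) + \varepsilon$.

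Finally I would read off the two game-value bounds. The left inequality just proved shows $\max_{\rhoA\in\ThetaA} U(\sigmaDHat,\rhoA) = U(\sigmaDHat,\rhoAHat)$; since the minmax value equals $U(\sigmaDEq,\rhoAEq)$, evaluating the feasible $\sigmaDHat$ gives $U(\sigmaDEq,\rhoAEq) \le \max_{\rhoA\in\ThetaA} U(\sigmaDHat,\rhoA)$, and taking $\sigmaD = \sigmaDEq$ in the right inequality together with $U(\sigmaDEq,\rhoAHat) \le U(\sigmaDEq,\rhoAEq)$ (because $\rhoAEq$ is a best response to $\sigmaDEq$) produces the upper bound $\alpha\,U(\sigmaDEq,\rhoAEq) + \varepsilon$. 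Symmetrically, the maxmin characterization gives $\min_{\sigmaD\in\DeltaD} U(\sigmaD,\rhoAHat) \le U(\sigmaDEq,\rhoAEq)$, while rearranging the right inequality to $U(\sigmaD,\rhoAHat) \ge \tfrac{1}{\alpha}(U(\sigmaDHat,\rhoAHat)-\varepsilon)$ and invoking $U(\sigmaDHat,\rhoAHat) \ge U(\sigmaDEq,\rhoAEq)$ yields the lower bound $\tfrac{1}{\alpha}(U(\sigmaDEq,\rhoAEq)-\varepsilon)$. I expect the only real difficulty to be bookkeeping the inequality directions as one moves between the restricted master, the full polytope $\ThetaA$, and the equilibrium value—above all, correctly pinning $\nu$ to $U(\sigmaDHat,\rhoAHat)$ and to the master's optimum, since every later bound rests on that identification.
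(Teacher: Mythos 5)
Your proposal is correct and follows essentially the same route as the paper: the termination criterion gives $\widehat{\nu}\le U(\widehat{S},\rhoAHat)+\varepsilon$, the $\alpha$-approximation and the optimality of $S^*$ over all mixed strategies give the right-hand deviation bound, and the equilibrium inequalities then yield the game-value bounds. The only (harmless) difference is in the duality bookkeeping for the key identity $U(\sigmaDHat,\rhoAHat)=\widehat{\nu}=\rA\widehat{\gamma}+\sum_{e\in\E}\widehat{\lambda}_e$ and the left inequality: you use complementary slackness on \RMP{$\I$} plus a direct weak-duality computation over $\ThetaA$, whereas the paper rewrites \RMP{$\I$} and its dual as restricted minmax/maxmin problems and collapses the resulting chain of inequalities via strong duality—both arguments are standard and deliver the same conclusions.
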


The strategy profile $(\sigmaDHat, \rhoAHat)$ returned by the column generation algorithm with $\alpha$-approximate best response for \eqref{pb:DBR} described in Theorem~\ref{thm:column_generation_with_approx_best_response} is such that $\D$ can decrease the expected number of undetected attacks by a factor of at most $1/\alpha - \varepsilon / U(\sigmaDHat, \rhoAHat)$ through unilateral deviation from $(\sigmaDHat, \rhoAHat)$, while $\A$'s strategy $\rhoAHat$ is already a best response to $\sigmaDHat$. We observe that the parameter $\varepsilon \geq 0$ used for the early termination criterion translates into an additive error in the approximation guarantees. In particular, when $\varepsilon = 0$, $\sigmaDHat$ is an $\alpha$-approximation of the optimal inspection strategy $\sigmaDEq$, whereas $\rhoAHat$ is an $(1/\alpha)$-approximation of the optimal marginal attack strategy $\rhoAEq$. When detection is imperfect, Corollary~\ref{cor:approximation_factor} guarantees an approximation factor of $\alpha \leq 1 / \left(1 - \max_{v \in \V} p_v \right)^{d}$, where $d$ is the maximum number of locations that can monitor a component.

Interestingly, the termination criterion stated in Theorem~\ref{thm:column_generation_with_approx_best_response} only requires the reduced cost of the variable in \eqref{LP: LP1 reformulation} associated with the $\alpha$-approximate best response $\widehat{S}$ to be at least $-\varepsilon$. In particular, using strong duality arguments, we can show that it is unnecessary to require that $\widehat{S}$ already belong to the set of columns $\I$ upon termination, a condition stipulated by \citet{guo2016optimal}. This observation potentially leads to significantly faster convergence in practice.

\subsection{Multiplicative Weights Update with Approximate Best Response}
\label{subsec:mwu with projection step}

While the column generation algorithm with approximate best response of Section~\ref{subsec:colgen with approx best response} provides approximate equilibrium strategies for the game $\Gamma$, its running time may remain exponential due to the potential need to add an exponential number of columns before convergence. We now propose a second solution approach for computing approximate equilibria of $\Gamma$ in polynomial time. Here, we adapt the method introduced by \citet{krause2011randomized}, which employs the Multiplicative Weights Update (MWU) algorithm in conjunction with an approximate best response oracle to compute randomized sensing strategies for a game with submodular structure.

Although our game shares similarities with that of \citeauthor{krause2011randomized}, a key distinction makes ours more intricate to solve. In their setting, it is assumed that the adversary has a polynomial number of pure strategies, allowing each iteration of the MWU algorithm---which involves updating the adversary's mixed strategy vector---to be implementable in polynomial time. In contrast, in our game, the adversary $\A$ has $|\AA| = \sum_{k=0}^{\rA} \binom{m}{k}$ pure strategies, a number that grows combinatorially with $\rA$, rendering the update step of the adversary's strategies computationally expensive. To address this challenge, we propose a novel implementation of the MWU algorithm that updates $\A$'s marginal attack probabilities, instead of their original mixed strategies. 

We initialize our implementation of the MWU algorithm by setting uniform marginal attack probabilities, namely, $\rho^{(1)}_e = \rA/m$ for every $e\in \E$. Each iteration begins with a vector of marginal attack probabilities $\rho^{(t)} \in \Theta_{\A}$ for $\A$, after which $\D$ computes an approximate best response $S^{(t)}\in \AD$. Next, $\A$ uses $S^{(t)}$ to update $\rho^{(t)}$ to a new marginal attack strategy $\rho^{(t+1)}$. This update is performed in two steps: First, an intermediate updated vector $\widetilde{\rho}^{\,(t+1)}$ is constructed by applying the multiplicative weights rule, that is, $\widetilde{\rho}^{\,(t+1)}_e \gets \rho^{(t)}_e \exp\left(\eta u(S^{(t)}, e)\right)$ for every $e\in \E$, where $\eta>0$ is a parameter of the algorithm. This step aims to increase the marginal attack probabilities for components with a higher undetection probability under $S^{(t)}$. However, the resulting vector $\widetilde{\rho}^{\,(t+1)}$ may not be a feasible solution of $\ThetaA$, and therefore, may not be the marginal distribution resulting from some mixed strategy in $\DeltaA$ (Lemma~\ref{lem:convex_hull}). To rectify this issue, we proceed to the second part of the update step, wherein we project $\widetilde{\rho}^{\,(t+1)}$ onto $\ThetaA$ with respect to the unnormalized relative entropy, which is defined as
\begin{align*}
    \DRE{\rho}{ \widetilde{\rho}\,} \coloneqq
            \sum_{e\in \E} \left( \rho_e \ln \frac{ \rho_e }{ \widetilde{\rho}_e } + \widetilde{\rho}_e - \rho_e \right),
        \quad \forall\, \rho \in \R^{\E}_{> 0}, \forall\, \widetilde{\rho} \in \R^{\E}_{> 0}.
\end{align*}


The unnormalized relative entropy is chosen as the distance measure in the projection step because it aligns naturally with the multiplicative weights update rule. By strongly penalizing deviations near zero, it preserves the positivity of all marginal probabilities and avoids premature loss of support. In contrast, metrics such as the Euclidean distance can distort probability mass and drive some coordinates of the projection to vanish, which may trap the dynamics in degenerate strategies and prevent convergence to the equilibrium distribution.


We refer to Algorithm \ref{alg:MWU} for a pseudocode of our implementation of the MWU algorithm.
\begin{algorithm}[htbp]
\DontPrintSemicolon
    \SetKwInOut{Parameters}{Parameters}
    \Parameters{$\tau \in \mathbb{Z}_{>0}$: number of iterations, $\eta \in \R_{>0}$: multiplicative weights factor}
    Initialize $\rho^{(1)}_e \gets \frac{\rA}{m}$, \ $\forall \, e \in \E$ \;
   \For{$t=1,\ldots,\tau$}{
         Let $S^{(t)}\in \AD$ satisfying  $U(S^{(t)}, \rho^{(t)}) \leq \alpha \, \underset{S\in \AD}{\min} \, U(S, \rho^{(t)}) $ \; \label{alg:best_response_step}
         \For{$e \in \E$}{
            \label{alg:start_update_step}
            $\widetilde{\rho}^{\,(t+1)}_e \gets \rho^{(t)}_e \exp\left(\eta u(S^{(t)}, e)\right)$
         }
         \label{alg:end_update_step}
         Let $\rho^{(t+1)} \in  \underset{\rho \in \ThetaA}{\argmin}\, \DRE{ \rho }{ \widetilde{\rho}^{\,(t+1)} }$ \label{alg:projection_step}
   }
    \Return{$\sigmaDHat \coloneqq \frac{1}{\tau}\sum_{t=1}^{\tau} \mathds{1}_{S^{(t)}}$ and $\rhoAHat \coloneqq \frac{1}{\tau}\sum_{t=1}^{\tau} \rho^{(t)}$}
\caption{Multiplicative Weights Update with Approximate Best Response}
\label{alg:MWU}
\end{algorithm}

The MWU algorithm (Algorithm~\ref{alg:MWU}) returns a mixed inspection strategy $\sigmaDHat \coloneqq \frac{1}{\tau}\sum_{t=1}^{\tau} \mathds{1}_{S^{(t)}}$, where $ \mathds{1}_{S^{(t)}} \in \{0,1\}^{\AD}$ is the characteristic vector of $S^{(t)}$---that is, for every $S\in \AD$, $ \mathds{1}_{S^{(t)}}(S)=1$ if and only if $S=S^{(t)}$. In other words, $\sigmaDHat_{S}$ is equal to the relative frequency with which $S$ was selected as $\D$'s best response (Line~\ref{alg:best_response_step}) throughout the execution of the algorithm. Therefore, the size of the support of $\sigmaDHat$ is at most  the number of iterations $\tau$. The MWU algorithm also returns a marginal attack strategy $\rhoAHat \coloneqq \frac{1}{\tau}\sum_{t=1}^{\tau} \rho^{(t)}$, given by the average marginal attack probabilities obtained after completing each update step (Line~\ref{alg:projection_step}). These strategies satisfy the following approximation guarantees:

\begin{theorem}
    \label{thm:MWU_epsilon_NE}
    Let $\varepsilon>0$. Then, after $\tau \geq 4\rA^2 \max\left\{ \ln  (m /\rA), 1\right\} / \varepsilon^2$ iterations, Algorithm~\ref{alg:MWU} with $\eta = \sqrt{\max\left\{ \ln  (m /\rA), 1\right\} / \tau}$ returns a strategy profile $(\sigmaDHat, \rhoAHat) \in \DeltaD \times \ThetaA$ satisfying 
    \begin{align*}
        \frac{1}{\alpha} \left( U(\sigmaDHat,\rhoA ) - \varepsilon \right) \leq U(\sigmaDHat,\rhoAHat) \leq \alpha U(\sigmaD,\rhoAHat ) + \varepsilon,\quad \forall\, \sigmaD \in \DeltaD, \; \forall\, \rhoA \in \ThetaA.
    \end{align*}
    Furthermore, $\sigmaDHat$ and $\rhoAHat$ respectively satisfy the following bounds for the worst-case expected number of undetected attacks with respect to the value of the game:
    \begin{align*}
        U(\sigmaDEq, \rhoAEq) \leq \underset{\rhoA \in \ThetaA}{\max}\, U(\sigmaDHat, \rhoA) &\leq \alpha U(\sigmaDEq, \rhoAEq) + \varepsilon, \\
        \frac{1}{\alpha} \left( U(\sigmaDEq, \rhoAEq) - \varepsilon \right) \leq \underset{\sigmaD \in \DeltaD}{\min}\, U(\sigmaD, \rhoAHat) &\leq U(\sigmaDEq, \rhoAEq).
    \end{align*}   
\end{theorem}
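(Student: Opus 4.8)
The plan is to analyze Algorithm~\ref{alg:MWU} as an instance of the MWU framework applied to the attacker's marginal attack probabilities, and to establish the stated $\varepsilon$-approximate equilibrium guarantees in two stages: first a regret-type bound from the MWU dynamics, then a translation of that bound into the desired minimax inequalities. First I would view each iteration as a round in which the attacker faces a ``gain'' vector with entries $u(S^{(t)}, e)$ for $e \in \E$, bounded in $[0,1]$. The key object is the potential function $\DRE{\rho^*}{\rho^{(t)}}$ measured against an arbitrary fixed comparator $\rho^* \in \ThetaA$. The central inequality I need is a one-step progress bound showing that $\DRE{\rho^*}{\rho^{(t)}} - \DRE{\rho^*}{\rho^{(t+1)}}$ is at least $\eta \langle \rho^{(t)} - \rho^*, u(S^{(t)}, \cdot)\rangle$ minus a quadratic error term of order $\eta^2 \sum_e \rho^{(t)}_e u(S^{(t)},e)^2$. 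This is where the choice of the unnormalized relative entropy as the Bregman divergence pays off: the multiplicative update $\widetilde{\rho}^{\,(t+1)}_e = \rho^{(t)}_e \exp(\eta u(S^{(t)},e))$ is exactly the mirror step for this divergence, and the subsequent projection onto $\ThetaA$ (Line~\ref{alg:projection_step}) can only decrease the divergence to any feasible comparator by the generalized Pythagorean inequality for Bregman projections. I would verify both the exact form of the mirror step and the Pythagorean property, using $\exp(x) \leq 1 + x + x^2$ for $x \in [0,1]$ to control the error term.

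Summing the one-step bound telescopically over $t = 1, \dots, \tau$ and dividing by $\tau$ yields the averaged regret inequality
\begin{align*}
    \frac{1}{\tau}\sum_{t=1}^{\tau} U(S^{(t)}, \rho^*) \leq \frac{1}{\tau}\sum_{t=1}^{\tau} U(S^{(t)}, \rho^{(t)}) + \frac{\DRE{\rho^*}{\rho^{(1)}}}{\eta \tau} + \eta C,
\end{align*}
where $C$ bounds the accumulated quadratic error. I would bound $\DRE{\rho^*}{\rho^{(1)}}$ using the uniform initialization $\rho^{(1)}_e = \rA/m$ and the constraint $\sum_e \rho^*_e \leq \rA$, which produces the $\rA \max\{\ln(m/\rA), 1\}$ factor appearing in the iteration count. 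Optimizing over $\eta$ and substituting the prescribed $\eta = \sqrt{\max\{\ln(m/\rA),1\}/\tau}$ and $\tau \geq 4\rA^2 \max\{\ln(m/\rA),1\}/\varepsilon^2$ makes the combined error terms at most $\varepsilon$. Since the comparator $\rho^*$ is arbitrary in $\ThetaA$, taking the maximum over $\rho^* \in \ThetaA$ on the left gives $\max_{\rhoA \in \ThetaA} U(\sigmaDHat, \rhoA) \leq \frac{1}{\tau}\sum_t U(S^{(t)}, \rho^{(t)}) + \varepsilon$, recalling $\sigmaDHat = \frac{1}{\tau}\sum_t \mathds{1}_{S^{(t)}}$ and the linearity of $U$ in its first argument (Lemma~\ref{lem:undetection_function}).

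The final stage converts the regret bound into the two-sided approximate saddle-point inequalities. Here the approximation factor $\alpha$ enters through the best-response step (Line~\ref{alg:best_response_step}): each $S^{(t)}$ satisfies $U(S^{(t)}, \rho^{(t)}) \leq \alpha \min_{S \in \AD} U(S, \rho^{(t)}) \leq \alpha\, U(\sigmaD, \rho^{(t)})$ for any $\sigmaD \in \DeltaD$, since a mixed strategy cannot do better than the best pure response. Averaging this over $t$ and using $\rhoAHat = \frac{1}{\tau}\sum_t \rho^{(t)}$ with linearity of $U$ in the second argument gives $\frac{1}{\tau}\sum_t U(S^{(t)}, \rho^{(t)}) \leq \alpha\, U(\sigmaD, \rhoAHat)$. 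Chaining this with the regret bound delivers $U(\sigmaDHat, \rhoAHat) \leq \alpha\, U(\sigmaD, \rhoAHat) + \varepsilon$; the symmetric lower bound $\frac{1}{\alpha}(U(\sigmaDHat, \rhoA) - \varepsilon) \leq U(\sigmaDHat, \rhoAHat)$ follows from the same chain read in the opposite direction. The worst-case bounds relative to the value of the game $U(\sigmaDEq, \rhoAEq)$ then follow by specializing $\sigmaD$ and $\rhoA$ to the equilibrium strategies and invoking the minimax characterization from Section~\ref{sec:preliminaries}. I expect the main obstacle to be the clean derivation of the one-step Bregman progress bound together with a tight accounting of the quadratic error term, since this is what pins down the exact constants in $\eta$ and $\tau$; the role of the projection step via the Pythagorean inequality must also be handled carefully, because without it the iterates $\widetilde{\rho}^{\,(t+1)}$ need not lie in $\ThetaA$ and the comparison against feasible $\rho^* \in \ThetaA$ would break down.
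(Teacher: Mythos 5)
Your proposal is correct and follows essentially the same route as the paper's proof: a regret bound for the MWU dynamics on the marginals via the unnormalized relative entropy as potential function, the generalized Pythagorean inequality to absorb the projection step, the bound $\exp(x)\leq 1+x+x^2$ for the quadratic error, the bound $\DRE{\rho}{\rho^{(1)}}\leq \rA\max\{\ln(m/\rA),1\}$ from the uniform initialization, and then chaining the $\alpha$-approximate best response property with the averaged iterates to obtain the saddle-point and worst-case inequalities. The paper packages the first two ingredients as Lemmas~\ref{lem:MWU_guarantee} and \ref{lem:upper_bound_DKL}, but the argument is the same.
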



From Theorem~\ref{thm:MWU_epsilon_NE}, the strategy profile $(\sigmaDHat, \rhoAHat)$ returned by the MWU algorithm with $\alpha$-approximate best response (Algorithm~\ref{alg:MWU}) offers similar approximation guarantees to those provided by the column generation algorithm with approximate best response from Theorem~\ref{thm:column_generation_with_approx_best_response}. However, unlike the latter, in this case, $\A$'s marginal attack strategy $\rhoAHat$ is not necessarily a best response to $\sigmaDHat$. Instead, $\A$ can increase the expected number of undetected attacks by at most a factor of $\alpha + \varepsilon / U(\sigmaDHat, \rhoAHat)$ through unilateral deviation from $(\sigmaDHat, \rhoAHat)$. Moreover, the MWU algorithm cannot be executed with $\varepsilon=0$, as it requires $\Omega(1 / \varepsilon^2)$ iterations to achieve convergence.


We next delve into the projection step of Algorithm~\ref{alg:MWU} (Line~\ref{alg:projection_step}). Given $\widetilde{\rho} \in \R^{\E}_{> 0}$, the projection problem consists in finding a vector $\rho\in \ThetaA$ that minimizes the unnormalized relative entropy with respect to $\widetilde{\rho}$, given by $ \DRE{\rho}{ \widetilde{\rho}\,}$. This is a strictly convex minimization problem, which renders the projection unique. In the following theorem, we show that this projection problem has a closed-form solution.

\begin{theorem}
    \label{thm:projection_closed_form}
    Let $\widetilde{\rho} \in \R^{\E}_{> 0}$, and let $\rho^{*} \in \R^{\E}$ denote its projection onto $\ThetaA$ with respect to the unnormalized relative entropy. 

        If $\sum_{e\in \E} \min\{ \widetilde{\rho}_e, 1\} \leq \rA$, then $\rho^{*}$ is given by $\rho^{*}_e = \min\left\{\widetilde{\rho}_e,\, 1 \right\}$ for every $e\in \E$.

        If $\sum_{e\in \E} \min\{ \widetilde{\rho}_e, 1\} > \rA$, we sort the components of $\E$ such that $\widetilde{\rho}_{e_1} \geq \cdots \geq \widetilde{\rho}_{e_m}$, breaking ties arbitrarily, and define the parameter $k^* \coloneqq \max\left\{ k \in \{0,\ldots,\rA\} :\, k + \left( 1 / \widetilde{\rho}_{e_k} \right) \sum_{j=k+1}^{m} \widetilde{\rho}_{e_j}  \leq \rA \right\}$,
        where we let $\widetilde{\rho}_{e_0} \coloneqq +\infty$. Then, $\rho^{*}$ is given by $\rho^{*}_{e} = \min \left\{ \left( \rA-k^* \right) \widetilde{\rho}_{e} / \sum_{j=k^*+1}^{m} \widetilde{\rho}_{e_j}   ,\, 1 \right\}$ for every $e\in \E$.
\end{theorem}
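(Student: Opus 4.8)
The plan is to solve the projection problem $\min_{\rho\in\ThetaA}\DRE{\rho}{\widetilde{\rho}}$ through its Karush--Kuhn--Tucker (KKT) conditions. Since $\DRE{\cdot}{\widetilde{\rho}}$ is a separable, strictly convex function (with the convention $0\ln 0=0$) and $\ThetaA$ is a nonempty compact polytope, the minimizer $\rho^{*}$ exists, is unique, and is characterized by the KKT conditions, which here are both necessary and sufficient. I would introduce a single multiplier $\mu\geq 0$ for the resource constraint $\sum_{e\in\E}\rho_e\leq\rA$ while keeping the box constraints $\rho_e\in[0,1]$ implicit. Since $\partial_{\rho_e}\DRE{\rho}{\widetilde{\rho}}=\ln(\rho_e/\widetilde{\rho}_e)$, the Lagrangian decouples across coordinates, so for fixed $\mu$ the inner minimization splits into independent single-variable problems whose solutions are $\rho_e=\min\{\widetilde{\rho}_e e^{-\mu},\,1\}$; the lower bound $0$ never binds because $\widetilde{\rho}_e>0$. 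It then remains to choose $\mu\geq 0$ consistently with complementary slackness on the resource constraint, and this is exactly what produces the two regimes of the theorem.

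For Case~1 I would verify that $\mu=0$ satisfies all KKT conditions. The resulting candidate $\rho^{*}_e=\min\{\widetilde{\rho}_e,1\}$ is precisely the coordinatewise minimizer of the separable objective over the box $[0,1]^{\E}$, and the hypothesis $\sum_{e\in\E}\min\{\widetilde{\rho}_e,1\}\leq\rA$ makes it feasible for $\ThetaA$. Since the unconstrained box-minimizer lies in $\ThetaA$, it is the global minimizer, and $\mu=0$ is consistent with complementary slackness. Conversely, when $\sum_{e\in\E}\min\{\widetilde{\rho}_e,1\}>\rA$ this candidate is infeasible, which forces the resource constraint to bind at optimality and hence $\mu>0$. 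Writing $\beta:=e^{-\mu}\in(0,1)$, the optimal coordinates take the form $\rho^{*}_e=\min\{\beta\widetilde{\rho}_e,1\}$, and $\beta$ is pinned down by $g(\beta):=\sum_{e\in\E}\min\{\beta\widetilde{\rho}_e,1\}=\rA$. As $g$ is continuous, nondecreasing, and (because every $\widetilde{\rho}_e>0$) strictly increasing until saturation, with $g(\beta)\to 0$ as $\beta\to0^{+}$ and $g(1)>\rA$, the intermediate value theorem yields a unique such $\beta$.

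The main work, and the step I expect to be the crux, is translating this scalar root-finding into the explicit combinatorial formula involving $k^{*}$. After sorting $\widetilde{\rho}_{e_1}\geq\cdots\geq\widetilde{\rho}_{e_m}$, a coordinate is capped at $1$ exactly when $\beta\widetilde{\rho}_{e_j}\geq 1$, so by monotonicity of the sorted values the set of capped coordinates is necessarily a prefix $\{e_1,\dots,e_k\}$ for some $k$. On the linear piece where exactly the first $k$ coordinates are capped, the equation $g(\beta)=\rA$ reads $k+\beta\sum_{j=k+1}^{m}\widetilde{\rho}_{e_j}=\rA$, giving $\beta=(\rA-k)/\sum_{j=k+1}^{m}\widetilde{\rho}_{e_j}$ and hence the stated formula for $\rho^{*}$. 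The remaining task is to show that the correct number of capped coordinates is precisely $k^{*}=\max\{k\in\{0,\dots,\rA\}:f(k)\leq\rA\}$, where $f(k):=k+(1/\widetilde{\rho}_{e_k})\sum_{j=k+1}^{m}\widetilde{\rho}_{e_j}$.

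I would establish this by showing that, for the value $\beta$ induced by a candidate prefix length $k$, the two self-consistency requirements ``coordinate $k$ is capped'' and ``coordinate $k+1$ is not capped'' are equivalent, respectively, to $f(k)\leq\rA$ and $f(k+1)>\rA$. Concretely, substituting $\beta=(\rA-k)/\sum_{j>k}\widetilde{\rho}_{e_j}$, the inequality $\beta\widetilde{\rho}_{e_k}\geq 1$ rearranges exactly to $f(k)\leq\rA$, while $\beta\widetilde{\rho}_{e_{k+1}}<1$ rearranges (after isolating the $(k+1)$-th term in the tail sum) exactly to $f(k+1)>\rA$. Thus taking $k^{*}$ to be the largest index with $f(k^{*})\leq\rA$ guarantees both that the first $k^{*}$ coordinates are capped (by membership) and that the $(k^{*}+1)$-th is not (by maximality), so by the prefix structure the induced $\beta$ lands on the correct linear piece and solves $g(\beta)=\rA$. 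I would also record the boundary bookkeeping that makes the definition well posed in Case~2: the convention $\widetilde{\rho}_{e_0}=+\infty$ gives $f(0)=0\leq\rA$, so the set is nonempty; and since $\widetilde{\rho}_e>0$ for all $e$ and Case~2 forces $\rA<m$ (as $\sum_{e\in\E}\min\{\widetilde{\rho}_e,1\}\leq m$), one has $f(\rA)>\rA$, whence $k^{*}\leq\rA-1$, $\rA-k^{*}\geq 1$, and the denominator $\sum_{j>k^{*}}\widetilde{\rho}_{e_j}$ is strictly positive. Plugging this $k^{*}$ back into $\rho^{*}_e=\min\{\beta\widetilde{\rho}_e,1\}$ yields the claimed closed form.
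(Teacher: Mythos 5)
Your proposal is correct and follows essentially the same route as the paper: a Lagrangian-dual argument in which the objective decouples coordinatewise to give $\rho^{*}_e=\min\{e^{-\mu}\widetilde{\rho}_e,1\}$, with the scaling factor pinned down by feasibility and the capped prefix identified through the same rearrangement $\beta\widetilde{\rho}_{e_k}\geq 1 \Leftrightarrow k+(1/\widetilde{\rho}_{e_k})\sum_{j>k}\widetilde{\rho}_{e_j}\leq \rA$. The only minor differences are that you dualize the inequality constraint directly and invoke complementary slackness where the paper first establishes tightness of $\sum_{e\in\E}\rho^{*}_e=\rA$ via a separate perturbation lemma and then dualizes an equality constraint, and that you identify $k^{*}$ by directly verifying self-consistency of the induced $\beta$ rather than via monotonicity of $g$; both variants are sound, and your decision to keep the box constraints implicit and minimize the partial Lagrangian over $[0,1]$ coordinatewise (noting the lower bound never binds since $\widetilde{\rho}_e>0$) correctly sidesteps the nonsmoothness of the entropy at zero that the paper flags.
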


From Theorem~\ref{thm:projection_closed_form}, we obtain an analytic characterization for the projection of a vector $\widetilde{\rho} \in \R^{\E}_{> 0}$ onto $\ThetaA$ under the unnormalized relative entropy. In general, this projection is given by a truncated scaling of $\widetilde{\rho}$, namely $\rho^{*}_e = \min\left\{ \mu \widetilde{\rho}_e, 1 \right\}$ for every $e\in \E$, where the scaling factor $\mu$ is determined by two cases. If $\sum_{e\in \E} \min\{ \widetilde{\rho}_e, 1\} \leq \rA$, the truncated scaling with $\mu = 1$ is already an optimal solution of the projection problem. Indeed, $\DRE{\cdot}{\widetilde{\rho}\,}$ is separable, and its $e$-th summand is minimized in the interval $[0,1]$ by $\min\left\{ \widetilde{\rho}_e, 1 \right\}$. On the other hand, if $\sum_{e\in \E} \min\{ \widetilde{\rho}_e, 1\} > \rA$, the truncated scaling with $\mu=1$ becomes infeasible for $\ThetaA$. Furthermore, KKT conditions are not directly applicable to the projection problem, due to $\DRE{\cdot}{\widetilde{\rho}\,}$ being nonsmooth. To handle this case, we use a Lagrangian dual approach to determine that the appropriate scaling factor is given by $\mu = (\rA - k^*) / \sum_{j=k^*+1}^{m} \widetilde{\rho}_{e_{j}}$.

As a direct consequence of Theorem~\ref{thm:projection_closed_form}, the projection of $\widetilde{\rho}$ onto $\ThetaA$ can be computed in time $O(m \log m)$, where the running time is dominated by sorting the elements of $\E$ to determine $k^*$. However, it is possible to compute the projection in linear time. The key insights are that the function $g(k) \coloneqq k + 1 / \widetilde{\rho}_{e_k}  \sum_{j=k+1}^{m}  \widetilde{\rho}_{e_j}$ arising in the definition of the parameter $k^*$ is nondecreasing in $k$, and that its evaluation is determined by a splitting component $e_k$ that induces a partition of $\E$ into three sets, namely $\F_{\textnormal{high}} \coloneqq  \left\{ e\in \E:\, \widetilde{\rho}_{e} > \widetilde{\rho}_{e_k}  \right\}$, $\F_{\textnormal{eq}} \coloneqq  \left\{ e\in \E:\, \widetilde{\rho}_{e} = \widetilde{\rho}_{e_k}  \right\}$, and $\F_{\textnormal{low}} \coloneqq  \left\{ e\in \E:\, \widetilde{\rho}_{e} < \widetilde{\rho}_{e_k}  \right\}$, so that $g(k) = |\F_{\textnormal{high}}| + |\F_{\textnormal{eq}}| + (1/\widetilde{\rho}_{e_k}) \sum_{e\in \F_{\textnormal{low}}} \widetilde{\rho}_e$.

This structure enables us to find $k^*$ using a tailored binary search algorithm without sorting $\E$. The algorithm maintains two auxiliary variables to account for the contributions of previously discarded components in the calculation of $g(k)$: $k$, which counts the number of discarded components whose value of $\widetilde{\rho}_e$ is greater than or equal to the current threshold $\widetilde{\rho}_{e_k}$; and $s$, which accumulates the total weight of discarded components whose value of $\widetilde{\rho}_e$ is less than or equal to this threshold.

Starting with initial values of $\F = \E$, $k=0$, and $s=0$, each iteration selects a splitting component $e_{k}$ and tests whether $g(k) \leq \rA$. The auxiliary variables $k$ and $s$ ensure that the current value of $g(k)$ can be obtained as $k + |\F_{\textnormal{high}} | + |\F_{\textnormal{eq}} | + \left( 1 /\widetilde{\rho}_{e} \right) \left( \sum_{e^{\prime} \in \F_{\textnormal{low}}} \widetilde{\rho}_{e^{\prime}} + s \right)$. If the test is satisfied, the next iteration selects the splitting component among the elements of $\F_{\textnormal{low}}$ (\ie, we set $\F = \F_{\textnormal{low}}$) and increases the current value of $k$ by $|\F_{\textnormal{high}} | + |\F_{\textnormal{eq}} |$; otherwise, it selects it among the elements of $\F_{\textnormal{high}}$ (\ie, we set $\F = \F_{\textnormal{high}}$) and increases the value of $s$ by $\widetilde{\rho}_e |\F_{\textnormal{eq}} | + \sum_{e^{\prime} \in \F_{\textnormal{low}}} \widetilde{\rho}_{e^{\prime}}$. In either case, the next splitting component is never chosen from $\F_{\textnormal{eq}}$, as doing so would yield the same value of $g(k)$ and therefore not advance the search. Upon termination, the auxiliary variables satisfy $k = k^{*}$ and $s = \sum_{j=k^{*}+1}^{m}\widetilde{\rho}_{e_j}$.

The splitting component can be identified in linear time using a selection algorithm such as the median-of-medians algorithm (\citet{blum1973time}), ensuring that each iteration discards a constant fraction of the remaining elements. Consequently, the overall binary search runs in $O(m)$ time. We refer to Algorithm~\ref{alg:fast_projection} for the pseudocode of the projection algorithm.


\begin{algorithm}[!htbp]
\DontPrintSemicolon
    \SetKwInOut{Input}{Input}
    \SetKwInOut{Output}{Output}
    \Input{A vector $\widetilde{\rho} \in \R^{\E}_{>0}$ and a resource budget $\rA \in \Z_{> 0}$.}
    \Output{A vector $\rho^{*} \in \R^{\E}$ satisfying $\rho^{*} \in  \underset{\rho \in \ThetaA}{\argmin}\, \DRE{ \rho }{ \widetilde{\rho\,} }$.}
    \If{$\sum_{e\in \E} \min\{ \widetilde{\rho}_e, 1\} \leq \rA$}{
        $\mu \gets 1$ \;
    }
    \Else{
        Let $\F \gets \E$,\, $k \gets 0$,\, $s \gets 0$ \;
       \While{$\F \neq \emptyset$ \label{alg:while_start}}{
            Let $e \gets$ index of $\ceil*{ \frac{|\F|}{2} }$-th largest entry of $\left( \widetilde{\rho}_e \right)_{e\in \F}$   \; 
            $\F_{\textnormal{high}} \gets \left\{e^{\prime} \in \F:\, \widetilde{\rho}_{e^{\prime}}  > \widetilde{\rho}_{e} \right\}$,\, $\F_{\textnormal{low}} \gets \left\{e^{\prime} \in \F:\, \widetilde{\rho}_{e^{\prime}}  < \widetilde{\rho}_{e} \right\}$, \, $\F_{\textnormal{eq}} \gets \left\{e^{\prime} \in \F:\, \widetilde{\rho}_{e^{\prime}}  = \widetilde{\rho}_{e} \right\}$   \; 
            \If{$k + |\F_{\textnormal{high}} | + |\F_{\textnormal{eq}} | + \left( 1 /\widetilde{\rho}_{e} \right) \left( \sum_{e^{\prime} \in \F_{\textnormal{low}}} \widetilde{\rho}_{e^{\prime}} + s \right) \leq \rA$} {
                    $\F \gets \F_{\textnormal{low}}$, \, $k \gets k + |\F_{\textnormal{high}} | + |\F_{\textnormal{eq}} |$ \;
            }
            \Else{
                $\F \gets \F_{\textnormal{high}}$, \, $s \gets s + \widetilde{\rho}_e |\F_{\textnormal{eq}} | + \sum_{e^{\prime} \in \F_{\textnormal{low}}} \widetilde{\rho}_{e^{\prime}}$ \;
            } \label{alg:while_end}
       }
       $\mu \gets (\rA - k) / s$ \;
    }
    \Return{$\rho^{*}_e \coloneqq \min\left\{ \mu\widetilde{\rho}_e ,\, 1\right\}$ \text{for all} $e\in \E$}
\caption{Linear-Time Relative Entropy Projection onto $\ThetaA$}
\label{alg:fast_projection}
\end{algorithm}

\begin{theorem}
    \label{thm:fast_projection}
    Given $\widetilde{\rho} \in \R^{\E}_{> 0}$, Algorithm~\ref{alg:fast_projection} computes the projection of $\widetilde{\rho}$ onto $\ThetaA$ with respect to the unnormalized relative entropy in time $O(m)$.
\end{theorem}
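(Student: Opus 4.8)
The plan is to prove the correctness and the $O(m)$ running time of Algorithm~\ref{alg:fast_projection} separately. Since both Theorem~\ref{thm:projection_closed_form} and the algorithm return a vector of the form $\rho^*_e = \min\{\mu\widetilde{\rho}_e,\,1\}$, correctness reduces to showing that the algorithm assigns to the scaling factor $\mu$ exactly the value prescribed by Theorem~\ref{thm:projection_closed_form}. In the case $\sum_{e\in\E}\min\{\widetilde{\rho}_e,1\} \leq \rA$ this is immediate, as both set $\mu = 1$. The substance lies in the complementary case, where I must argue that upon termination of the while loop the accumulators satisfy $k = k^*$ and $s = \sum_{j=k^*+1}^m \widetilde{\rho}_{e_j}$, so that the returned $\mu = (\rA - k)/s$ coincides with the scaling factor of the theorem.

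First I would record the structural properties of the index function $g(k) \coloneqq k + (1/\widetilde{\rho}_{e_k})\sum_{j=k+1}^m \widetilde{\rho}_{e_j}$, with $\widetilde{\rho}_{e_0} \coloneqq +\infty$. A direct computation gives $g(k+1) - g(k) = 1 - \widetilde{\rho}_{e_{k+1}}/\widetilde{\rho}_{e_k} \geq 0$, so $g$ is nondecreasing and $\{k : g(k) \leq \rA\}$ is a prefix $\{0,\ldots,k^*\}$; moreover this difference vanishes exactly when $\widetilde{\rho}_{e_{k+1}} = \widetilde{\rho}_{e_k}$, so $g$ is constant across each block of tied values. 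Consequently $k^*$ is forced to be the right endpoint of the last tie-block on which $g \leq \rA$, which justifies that the search never needs to place the boundary strictly inside a block and hence may always discard $\Fe$. I would also note that $g(k) \geq k$, so $g(k)\leq \rA$ forces $k \leq \rA$ and the cap $k \in \{0,\ldots,\rA\}$ in the definition of $k^*$ is respected automatically.

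The heart of the argument is a loop invariant for the accumulators. I would maintain that, at the start of each iteration, for the element $e$ selected as the median-ranked entry of $\F$, the quantity tested in the conditional, namely $k + |\Fh| + |\Fe| + (1/\widetilde{\rho}_e)(\sum_{e'\in\Fl}\widetilde{\rho}_{e'} + s)$, equals $g$ evaluated at the right endpoint of the tie-block of $e$ in the full sorted order, i.e.\ at $k' \coloneqq |\{e'' \in \E : \widetilde{\rho}_{e''} \geq \widetilde{\rho}_e\}|$. This holds because $k$ already counts every previously discarded entry lying strictly above the current window and $s$ accumulates the total weight of every previously discarded entry lying strictly below it, while $|\Fh| + |\Fe|$ and $\sum_{\Fl}\widetilde{\rho}_{e'}$ supply the in-window contributions; in particular, no entry equal to $\widetilde{\rho}_e$ was discarded earlier (discarding always removes a whole $\Fe$ of a distinct value), so all ties of $e$ lie in the current $\Fe$. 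Using the monotonicity of $g$, I would then show the update preserves the invariant and keeps the boundary inside $\F$: if the test holds then $k^* \geq k'$, so the high and equal parts lie above the boundary and are correctly folded into $k$ as we recurse on $\Fl$; if it fails then $k^* < k'$, so the equal and low parts lie at or below the boundary and their weight is correctly folded into $s$ as we recurse on $\Fh$. When $\F$ becomes empty, every component has been classified relative to the threshold $\widetilde{\rho}_{e_{k^*}}$, yielding $k = k^*$ and $s = \sum_{j=k^*+1}^m\widetilde{\rho}_{e_j}$, as required.

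For the running time, each iteration selects the $\lceil |\F|/2\rceil$-th largest entry via the median-of-medians algorithm and forms the three-way partition, both in $O(|\F|)$ time, and the test and updates of $(k,s)$ are also $O(|\F|)$. Because $e$ is median-ranked, $|\Fh| \leq \lceil|\F|/2\rceil - 1$ and $|\Fl| \leq \lfloor|\F|/2\rfloor$, so whichever branch is taken the surviving window has at most $\lceil |\F|/2\rceil$ elements; the window sizes decay geometrically, $\sum_i |\F_i| = O(m)$, and the total loop time is $O(m)$. Adding the $O(m)$ cost of the initial feasibility test $\sum_{e}\min\{\widetilde{\rho}_e,1\} \leq \rA$ and of forming the output vector yields the claimed bound. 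I expect the main obstacle to be the precise formalization and inductive maintenance of the loop invariant in the presence of ties: one must verify that folding $\Fe$ into $k$ in the satisfied branch and into $s$ in the failing branch is consistent with placing the boundary at a block endpoint, and that $k$ and $s$ continue to represent exactly the discarded-above count and discarded-below weight after each recursion.
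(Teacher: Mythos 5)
Your proposal is correct and follows essentially the same route as the paper: correctness is reduced to showing the accumulators satisfy $k=k^*$ and $s=\sum_{j=k^*+1}^{m}\widetilde{\rho}_{e_j}$ at termination via a loop invariant (your value-based ``discarded-above count / discarded-below weight'' bookkeeping is equivalent to the paper's invariants $\LI$--$\LIII$ on consecutive index windows), the tested quantity is identified with $g$ evaluated at the tie-block endpoint of the selected element (the paper's Lemma on $\gamma^{(t)}$), monotonicity of $g$ and the fact that $k^*$ sits at a block boundary justify discarding $\Fe$ on either branch (the paper's invariant $\LIV$ and its lemma on $g$), and the $O(m)$ bound comes from median-of-medians selection with geometrically shrinking windows. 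No gaps.
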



Thanks to Theorem~\ref{thm:fast_projection}, we can compute projections under the unnormalized relative entropy onto $\ThetaA$ in linear time. Consequently, the computational cost per iteration of Algorithm~\ref{alg:MWU} is determined by the cost of computing a defender's $\alpha$-approximate best response (Line \ref{alg:best_response_step})---which takes $O(n (n-\rD))$ (resp. $O(n\rD)$) evaluations of the undetection function if we use the reverse (resp. forward) greedy algorithm---and the cost of computing the projection of  $\widetilde{\rho}^{\,(t+1)}$ onto $\ThetaA$ with respect to the unnormalized relative entropy (Line \ref{alg:projection_step}), which takes time $O(m)$ using Algorithm~\ref{alg:fast_projection}. Finally, the convergence of  Algorithm~\ref{alg:MWU} requires $\rA^2 \max\left\{ \ln  (m /\rA), 1\right\} / \varepsilon^2$ iterations (Theorem~\ref{thm:MWU_epsilon_NE}), resulting in an overall running time polynomial in $m$, $n$, $\rA$, $\rD$, and $1/\varepsilon^2$.

\section{Computational Study}
\label{sec:computational study}

We test our solution approaches for solving strategic inspection problems on several natural gas distribution networks across Europe and the United States. The instances from Europe were constructed from the SciGRID\_gas database (\citet{pluta2022scigrid_gas}), representing the countries of France (EU FR), Germany (EU DE), Italy (EU IT), Poland (EU PL), and Great Britain (EU GB). The instances from the United States were obtained from the open datasets of the Homeland Infrastructure Foundation-Level Data (\citet{hifld2024}), and represent the states of Texas (US TX), Oklahoma (US OK), Lousiana (US LA), New York (US NY), and California (US CA).

In this case study, we aim to find inspection strategies for monitoring these networks to detect potential attacks on their pipelines, which can be caused by malicious intervention on their cyber-physical systems or by unauthorized access and trespassing. We consider compressor stations and processing plants from the datasets as the detector locations $\V$, and the pipelines as the components $\E$ that the attacker is interested in targeting. To identify disruptions, such as gas leakages indicative of pipe bursts, the defender leverages detectors in the form of unmanned aerial systems equipped with infrared thermography technology to conduct pipeline patrols around each location. For each location, we define its monitoring set as the collection of pipelines intersecting a circular area of radius $50$ km around its geographic location, and set its detection probability randomly from a Uniform$[0.5, 1]$ distribution. All LPs and MIPs were solved using Gurobi 9.1.2 (\citet{gurobi}) on a 11th Gen Intel(R) Core(TM) i7-1195G7 @ 2.90GHz processor.

We begin by evaluating the performance of the exact column generation (CG) algorithm applied to \eqref{LP: LP1 reformulation}, using \eqref{MIP:pricing problem} to compute exact solutions to the pricing problem \eqref{pb:DBR}. For each network, we set $\rA$ as $2\%$ of the components, and $\rD$ as the minimum number of detectors capable of monitoring $80\%$ of the components, except for EU DE and EU FR, which due to their larger sizes, we adjusted $\rD$ to cover $70\%$ and $57\%$ of the components, respectively. 
The results are shown in Table~\ref{table:results_exact_column_generation}.

\begin{table}[htbp]
\caption{Computational results for CG on different gas distribution networks.}
\label{table:results_exact_column_generation}
\footnotesize
\centering
\setlength{\tabcolsep}{8pt}
\begin{tabular}{l *{6}{d{3.3}}} 
\toprule
\textbf{Network} & \textbf{$|\V|$} & \textbf{$|\E|$} & \textbf{$\rD$} & \textbf{$\rA$} & \mc{$U(\sigmaDEq, \sigmaAEq)$} & \mc{\textbf{Time [s]}} \\
\midrule
\csvreader[late after line=\\, head to column
        names, separator=semicolon]{data/table_results_column_generation_approx.csv}{}%
        {\network & \nodes & \components & \csvcoliv & \csvcolv & \valueExact & \timeExact }
\bottomrule
\end{tabular}
\end{table}

Table~\ref{table:results_exact_column_generation} shows the exact value of the game $\Gamma$ and the CPU time (in seconds) obtained through CG. Generally, we observe that the running times quickly increase with the size of the network, being the number of detector locations more relevant than the number of components. However, additional factors such as the overlapping structure of the monitoring sets and the magnitude of $\rD$ relative to the number of locations may influence performance, as evidenced by the cases of US TX and US OK. Overall, Table~\ref{table:results_exact_column_generation} shows that CG can solve $\Gamma$ to optimality for real-world networks and a relatively small number of detectors.

Next, we examine the performance of our approximate solution approaches on the same network instances outlined in Table~\ref{table:results_exact_column_generation}. Specifically, we evaluate the performance of the column generation algorithm to solve \eqref{LP: LP1 reformulation} using both forward greedy (CG-FG) and reverse greedy (CG-RG) algorithms to approximate \eqref{pb:DBR}. Similarly, we asses the multiplicative weights update algorithm utilizing both forward greedy (MWU-FG) and reverse greedy (MWU-RG) algorithms for approximating \eqref{pb:DBR}. We set $\varepsilon=0.001 m$ for all methods. The results are shown in Table~\ref{table:results_approx_column_generation_mwu}.


\begin{table}[htbp]
\caption{Computational results for approximate solution methods CG-FG, CG-RG, MWU-FG, and MWU-RG for network instances described in Table~\ref{table:results_exact_column_generation}.}
\label{table:results_approx_column_generation_mwu}
\footnotesize
\centering
\setlength{\tabcolsep}{4pt}
\begin{tabular}{l@{\hspace{3.5em}}*{4}{d{3.3}}@{ \hspace{3.5em}}*{4}{d{3.3}}} 
\toprule
{} & \multicolumn{4}{c@{\hspace{3.5em}}}{\textbf{\% Relative Error}} & \multicolumn{4}{c}{\textbf{Time [s]}} \\
\textbf{Network} & \mc{\textbf{CG-FG}} & \mc{\textbf{CG-RG}} & \mc{\textbf{MWU-FG}} & \multicolumn{1}{c@{\hspace{3.5em}}}{\textbf{MWU-RG}} & \mc{\textbf{CG-FG}} & \mc{\textbf{CG-RG}} & \mc{\textbf{MWU-FG}} & \mc{\textbf{MWU-RG}} \\
\midrule
\csvreader[late after line=\\, head to column
        names, separator=semicolon]{data/table_results_column_generation_approx.csv}{}%
        {\network & \relErrorCGSG & \relErrorCGRG & \relErrorMWUSG & \relErrorMWURG & \timeCGSG & \timeCGRG & \timeMWUSG & \timeMWURG}
\bottomrule
\end{tabular}
\end{table}


Table~\ref{table:results_approx_column_generation_mwu} shows the relative approximation error of the worst-case expected number of undetected attacks with respect to the value of the game $\Gamma$ and the CPU time (in seconds) of the inspection strategies generated by the four approximation methods. We observe a substantial reduction in running times for the five instances that took more than $1{,}000$ seconds to solve under CG---as shown in Table~\ref{table:results_exact_column_generation}---with an average reduction of $94.2\%$ achieved by the four approximation methods. Interestingly, although the forward greedy algorithm provides a theoretically weaker approximate solution compared to the reverse greedy algorithm, both CG-FG and MWU-FG achieve significantly smaller relative errors, typically less than $1.5\%,$ compared to their counterparts, CG-RG and MWU-RG, which yield solutions within $6.6\%$ of optimality. Furthermore, the running times of CG-FG and MWU-FG are also generally shorter than those of CG-RG and MWU-RG, respectively.

We also compare the performance of both the exact and approximate solution methods on a single network for different values of $\rD$ and fixed $\rA$. To this end, we select the gas distribution network of France (EU FR) as our benchmark (see Figure~\ref{fig:FR_natural_gas_layout}).
\begin{figure}[htbp]
    \centering
    \includegraphics[scale=0.2]{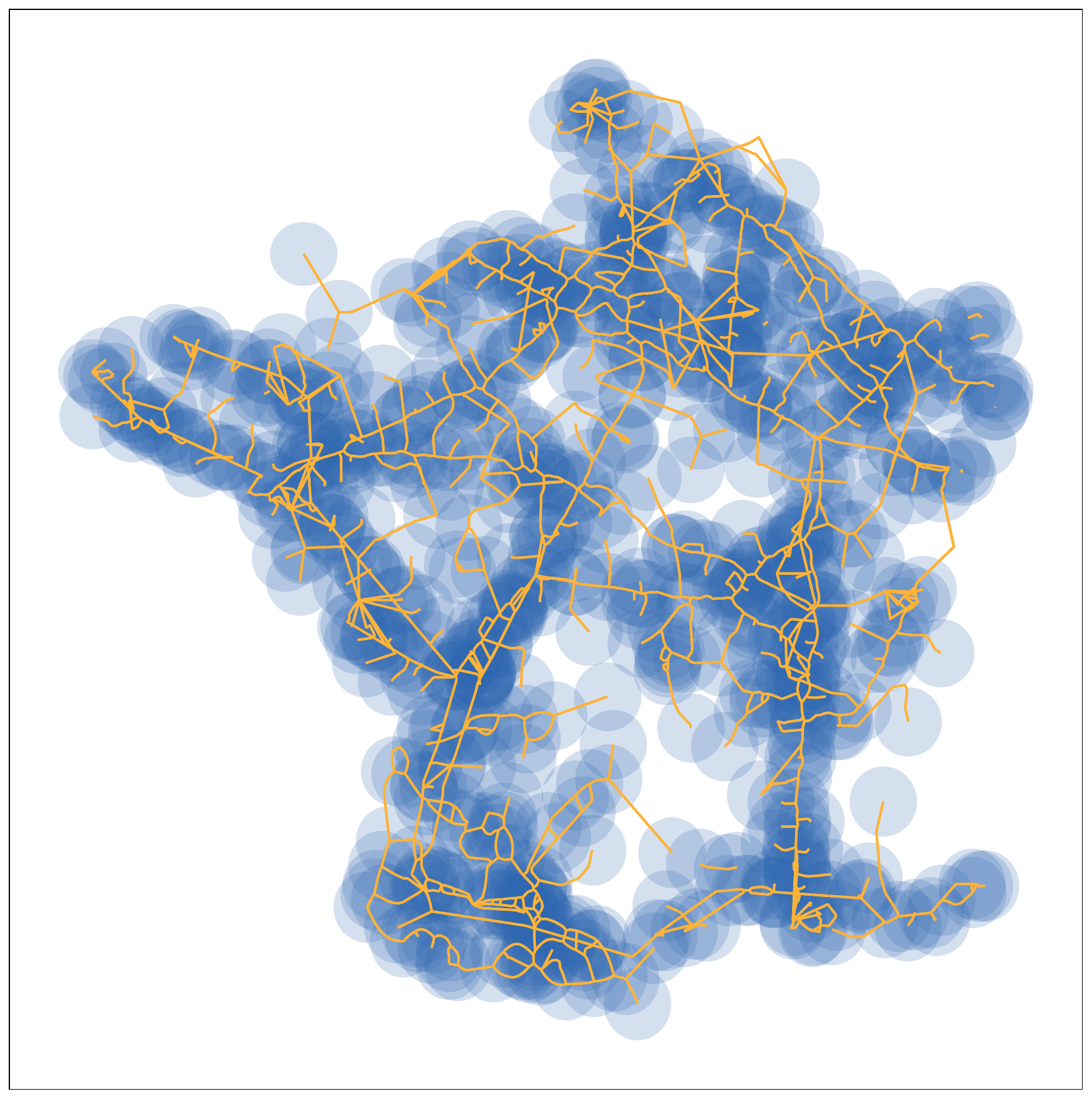}
    \caption{Layout of the benchmark natural gas distribution network EU FR, with circular inspection zones of radius $50$ km around each detector location.}\label{fig:FR_natural_gas_layout}
\end{figure}
We first analyze the results of our experiments on EU FR for a relatively small number of detectors, depicted in Figures~\ref{fig:approximation ratio mwu FR}~and~\ref{fig:running time cg vs mwu FR}.

\begin{figure}[htbp]
    \centering
    \subfloat[Relative error with respect to value of the game ($\rA=20$).]{%
        \begin{tikzpicture}[scale=0.9]
    \definecolor{cb_red}{RGB}{191,044,035}
    \definecolor{cb_blue}{RGB}{047,103,177}
    \definecolor{cb_yellow}{RGB}{253,179,056}
    \definecolor{cb_turquoise}{RGB}{064,176,166}
    \definecolor{cb_orange}{RGB}{235,097,035}
    \definecolor{cb_purple}{RGB}{081,040,136}
    \begin{axis}[align =center,
        xlabel={$\rD$},
        xmin=0, xmax=24,
        ymin=0, ymax=4,
        xtick={0,2,...,24},
        ytick={0, 1, ..., 11},
        yticklabel=\pgfmathprintnumber{\tick}\,$\%$,
        ymajorgrids=true,
        grid style=dashed,
        style={thick},
        mark options={scale=0.75},
        legend pos=north west,
        legend cell align={left},
    ]
    \addplot[
        color=cb_yellow,
        mark=square*,
        ]
        table [x=b1, y=relErrorCGSG, col sep=semicolon] {data/results_FR_b2_20_epsilon_0005b2.csv};
    \addplot[
        color=cb_orange,
        mark=square*,
        ]
        table [x=b1, y=relErrorCGRG, col sep=semicolon] {data/results_FR_b2_20_epsilon_0005b2.csv};
    \addplot[
        color=cb_blue,
        mark=*,
        ]
        table [x=b1, y=relErrorMWUSG, col sep=semicolon] {data/results_FR_b2_20_epsilon_0005b2.csv};
    \addplot[
        color=cb_turquoise,
        mark=*,
        ]
        table [x=b1, y=relErrorMWURG, col sep=semicolon] {data/results_FR_b2_20_epsilon_0005b2.csv};
    \addlegendentry{\footnotesize CG-FG}
    \addlegendentry{\footnotesize CG-RG}
    \addlegendentry{\footnotesize MWU-FG}
    \addlegendentry{\footnotesize MWU-RG}
    \end{axis}
\end{tikzpicture}
        \label{fig:approximation ratio mwu FR}
    }
    \hfill
    \subfloat[Running time (in seconds, $\rA=20$).]{%
        \begin{tikzpicture}[scale=0.9]
    \definecolor{cb_red}{RGB}{191,044,035}
    \definecolor{cb_blue}{RGB}{047,103,177}
    \definecolor{cb_yellow}{RGB}{253,179,056}
    \definecolor{cb_turquoise}{RGB}{064,176,166}
    \definecolor{cb_orange}{RGB}{235,097,035}
    \definecolor{cb_purple}{RGB}{081,040,136}
    \begin{axis}[align =center,
        xlabel={$\rD$},
        xmin=0, xmax=24,
        ymin=1, ymax=100000,
        xtick={0,2,...,24},
        legend pos=north west,
        legend cell align={left},
        ymajorgrids=true,
        grid style=dashed,
        ymode=log,
        style={thick},
        mark options={scale=0.8},
    ]
    \addplot[
        color=cb_purple,
        mark=triangle*,
        ]
        table [x=b1, y=timeCG, col sep=semicolon] {data/results_FR_b2_20_epsilon_0005b2.csv};
    \addplot[
        color=cb_yellow,
        mark=square*,
        ]
        table [x=b1, y=timeCGSG, col sep=semicolon] {data/results_FR_b2_20_epsilon_0005b2.csv};
    \addplot[
        color=cb_red,
        mark=square*,
        ]
        table [x=b1, y=timeCGRG, col sep=semicolon] {data/results_FR_b2_20_epsilon_0005b2.csv};
    \addplot[
        color=cb_blue,
        mark=*,
        ]
        table [x=b1, y=timeMWUSG, col sep=semicolon] {data/results_FR_b2_20_epsilon_0005b2.csv};
    \addplot[
        color=cb_turquoise,
        mark=*,
        ]
        table [x=b1, y=timeMWURG, col sep=semicolon] {data/results_FR_b2_20_epsilon_0005b2.csv};
    \addlegendentry{\footnotesize CG}
    \addlegendentry{\footnotesize CG-FG}
    \addlegendentry{\footnotesize CG-RG}
    \addlegendentry{\footnotesize MWU-FG}
    \addlegendentry{\footnotesize MWU-RG}
    \end{axis}
\end{tikzpicture}
        \label{fig:running time cg vs mwu FR}%
    }%
    \caption{Computational results for small number of detectors of on benchmark network EU FR.}
    \label{fig:cg vs mwu FR}
\end{figure}
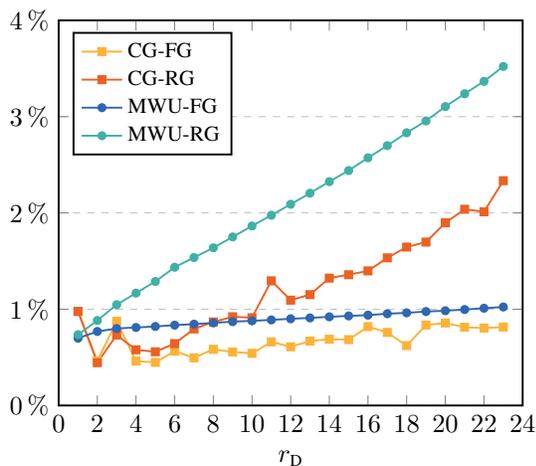
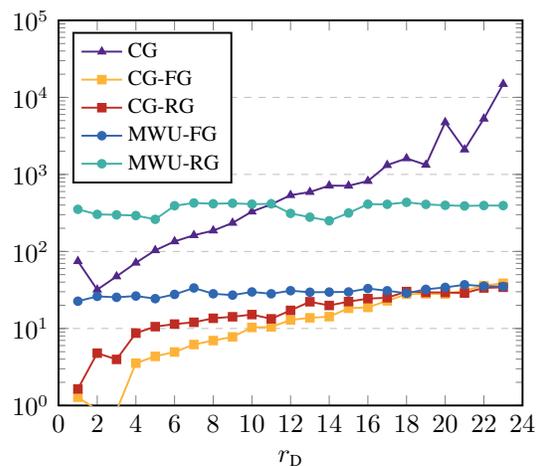

Figure~\ref{fig:approximation ratio mwu FR} shows the relative error of the four approximation methods as a function of the number of detectors. The results illustrate that the error of all methods increase with the number of detectors, reflecting the increasing difficulty to effectively coordinating the detectors by the greedy algorithms for approximate best response. In line with the results from Table~\ref{table:results_approx_column_generation_mwu}, the error of CG-RG and MWU-RG increases at a significantly faster rate with the number of detectors, compared to that of CG-FG and MWU-FG, which consistently remains smaller than $1.1\%$, being CG-FG the method achieving the smallest error for almost every value of $\rD$. 

Figure~\ref{fig:running time cg vs mwu FR} illustrates, using a logarithmic scale, the running times of the exact CG and the four approximation methods. We note that the running time of CG does not scale efficiently with the number of available detectors $\rD$. In fact, CG exhibits running times almost two orders of magnitude larger than those of MWU-FG and MWU-RG, which remain relatively constant. Moreover, MWU-FG outperforms MWU-RG by one order of magnitude. As anticipated in Section~\ref{subsec:approximation_algorithms_for_DBR}, this difference arises due to the small size of $\rD$ relative to $n$, resulting in fewer iterations being required by forward greedy compared to reverse greedy. Nonetheless, both CG-FG and CG-RG exhibit faster, albeit increasing, running times, eventually reaching levels comparable to MWU-FG for the largest values of $\rD$ in this range. This motivates us to further examine the performance of these methods for larger numbers of detectors, as depicted in Figures~\ref{fig:approximation ratio mwu FR large} and~\ref{fig:running time cg vs mwu FR large}.

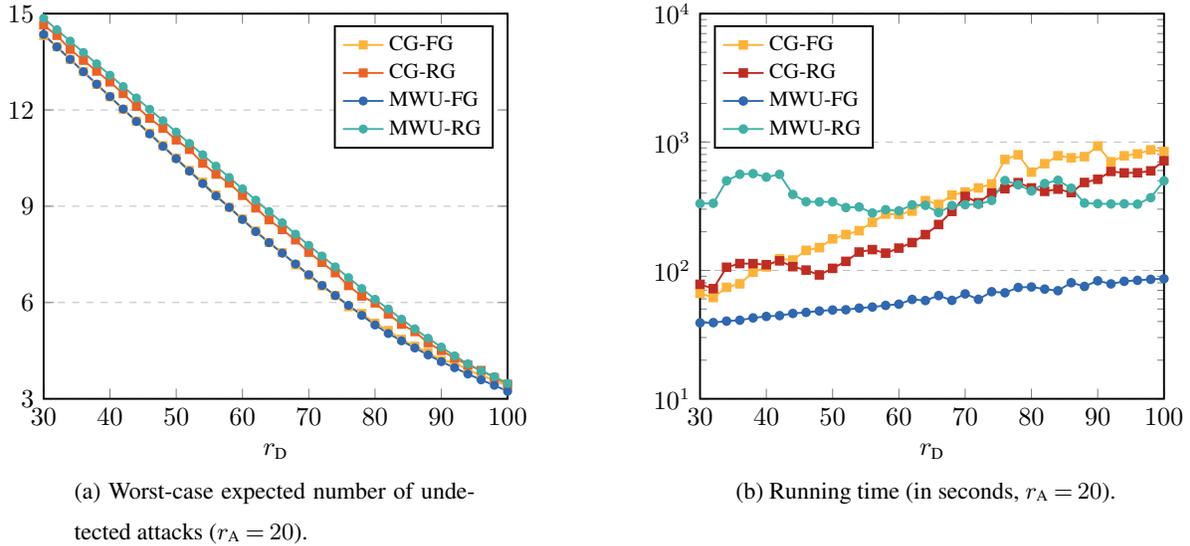
\begin{figure}[htbp]
    \subfloat[Worst-case expected number of undetected attacks ($\rA=20$).]{%
        \begin{tikzpicture}[scale=0.9]
    \definecolor{cb_red}{RGB}{191,044,035}
    \definecolor{cb_blue}{RGB}{047,103,177}
    \definecolor{cb_yellow}{RGB}{253,179,056}
    \definecolor{cb_turquoise}{RGB}{064,176,166}
    \definecolor{cb_orange}{RGB}{235,097,035}
    \definecolor{cb_purple}{RGB}{081,040,136}
    \begin{axis}[align =center,
        xlabel={$\rD$},
        xmin=30, xmax=100,
        ymin=3, ymax=15,
        xtick={30,40,...,100},
        ytick={3, 6, ..., 15},
        yticklabel=\pgfmathprintnumber{\tick},
        ymajorgrids=true,
        grid style=dashed,
        style={thick},
        mark options={scale=0.75},
        legend pos=north east,
        legend cell align={left},
    ]
    \addplot[
        color=cb_yellow,
        mark=square*,
        ]
        table [x=b1, y=valueCGSG, col sep=semicolon] {data/results_FR_large.csv};
    \addplot[
        color=cb_orange,
        mark=square*,
        ]
        table [x=b1, y=valueCGRG, col sep=semicolon] {data/results_FR_large.csv};
    \addplot[
        color=cb_blue,
        mark=*,
        ]
        table [x=b1, y=valueMWUSG, col sep=semicolon] {data/results_FR_large.csv};
    \addplot[
        color=cb_turquoise,
        mark=*,
        ]
        table [x=b1, y=valueMWURG, col sep=semicolon] {data/results_FR_large.csv};
    \addlegendentry{\footnotesize CG-FG}
    \addlegendentry{\footnotesize CG-RG}
    \addlegendentry{\footnotesize MWU-FG}
    \addlegendentry{\footnotesize MWU-RG}
    \end{axis}
\end{tikzpicture}
        \label{fig:approximation ratio mwu FR large}
    }
    \hfill
    \subfloat[Running time (in seconds, $\rA=20$).]{%
        \begin{tikzpicture}[scale=0.9]
    \definecolor{cb_red}{RGB}{191,044,035}
    \definecolor{cb_blue}{RGB}{047,103,177}
    \definecolor{cb_yellow}{RGB}{253,179,056}
    \definecolor{cb_turquoise}{RGB}{064,176,166}
    \definecolor{cb_orange}{RGB}{235,097,035}
    \definecolor{cb_purple}{RGB}{081,040,136}
    \begin{axis}[align =center,
        xlabel={$\rD$},
        xmin=30, xmax=100,
        ymin=10, ymax=10000,
        xtick={30,40,...,100},
        legend pos=north west,
        legend cell align={left},
        ymajorgrids=true,
        grid style=dashed,
        ymode=log,
        style={thick},
        mark options={scale=0.8},
    ]
    \addplot[
        color=cb_yellow,
        mark=square*,
        ]
        table [x=b1, y=timeCGSG, col sep=semicolon] {data/results_FR_large.csv};
    \addplot[
        color=cb_red,
        mark=square*,
        ]
        table [x=b1, y=timeCGRG, col sep=semicolon] {data/results_FR_large.csv};
    \addplot[
        color=cb_blue,
        mark=*,
        ]
        table [x=b1, y=timeMWUSG, col sep=semicolon] {data/results_FR_large.csv};
    \addplot[
        color=cb_turquoise,
        mark=*,
        ]
        table [x=b1, y=timeMWURG, col sep=semicolon] {data/results_FR_large.csv};
    \addlegendentry{\footnotesize CG-FG}
    \addlegendentry{\footnotesize CG-RG}
    \addlegendentry{\footnotesize MWU-FG}
    \addlegendentry{\footnotesize MWU-RG}
    \end{axis}
\end{tikzpicture}
        \label{fig:running time cg vs mwu FR large}%
    }%
    \caption{Computational results for large number of detectors on benchmark network EU FR.}
    \label{fig:cg vs mwu FR eps large}
\end{figure}

For larger values of $\rD$, exact solutions of $\Gamma$ become computationally prohibitive, preventing us from plotting the approximation error. Hence, Figure~\ref{fig:approximation ratio mwu FR large} illustrates the worst-case expected number of undetected attacks for the inspection strategies generated by all four approximation methods. Consistent with the trends observed in Figure~\ref{fig:approximation ratio mwu FR}, both CG-FG and MWU-FG outperform CG-RG and MWU-RG. Moreover, in Figure~\ref{fig:running time cg vs mwu FR large}, while the running times of MWU-FG and MWU-RG remain within similar orders of magnitude as those shown in Figure~\ref{fig:running time cg vs mwu FR}, we observe a notable increase in the running times of both CG-FG and CG-RG with the number of detectors. In this scenario, MWU-FG significantly outperforms the other three methods by up to one order of magnitude for the largest values of $\rD$. Overall, MWU-FG emerges as the preferred method for a large number of detectors.


\section{Conclusion}
\label{sec:conclusion}

Motivated by real-world applications in safeguarding critical infrastructures, we investigated a large-scale zero-sum inspection game where a defender strategically allocates multiple detectors in locations of a system with imperfect and heterogeneous detection capabilities, while an attacker targets multiple system components. The defender (resp. attacker) aims to minimize (resp. maximize) the expected number of undetected attacks. 

We leveraged the inherent structure of the game, which is supermodular for the defender and additive for the attacker, to develop both exact and approximate solution methods utilizing column generation (CG) and multiplicative weights update (MWU) algorithms. These approaches employed efficient algorithms as subroutines to compute either exact or approximate defender's best responses. Specifically, we achieved exact equilibria using CG with an exact MIP formulation for the defender's best response problem. For approximate equilibria, we utilized both CG and MWU in conjunction with greedy algorithms for approximate defender's best responses. Across all methods, we represented the attacker's strategies in terms of unidimensional marginal probabilities. In the case of the MWU algorithm, this required addressing an additional relative entropy projection problem within each iteration, for which we derived a closed-form solution that can be computed in linear time.


Our computational study on real-world gas distribution networks revealed that our exact solution method can efficiently compute equilibrium inspection strategies when the defender has access to a relatively small number of detectors. However, for a larger number of detectors, our approximation methods provide the only scalable solution approaches. Among these, the MWU algorithm exhibits better performance than a CG algorithm when both utilize the forward greedy algorithm for computing approximate defender's best responses. 

Our solution approaches can be applied to other combinatorial zero-sum games featuring a submodular or supermodular structure for one player and an additive structure for the adversary. Examples of such settings include the network inspection model with heterogeneous component criticalities proposed by \citet{milosevic2023strategic}, or the flow interdiction model introduced by \citet{guo2016optimal}.

We propose the following directions for future work. First, exploring approximate equilibrium strategies for structured instances, such as monitoring sets forming minimum set covers or maximum set packings, could enhance our comprehension of strategic interactions in scenarios with complex system structures and potentially lead to the development of more efficient algorithms. Second, we advocate investigating nonzero-sum variants of our game, for which the minmax formulation underlying our CG algorithm, and the MWU algorithm do not necessarily converge to Nash equilibria. Hence, alternative techniques capable of providing satisfactory approximations in these scenarios need to be developed. Finally, another interesting research direction involves analyzing a variant of the game under incomplete information, where players face uncertainty regarding their adversary's number of resources or the detection probabilities, which can be learned through repeated interactions.




 \ACKNOWLEDGMENT{This work was supported by the Office of Naval Research under Grant No. N00014-24-1-2047.}

\bibliographystyle{style/informs2014} 
\bibliography{references} 




%
%


\clearpage

\ECHead{Proofs of Statements}

\section*{Proof of Lemma \ref{lem:undetection_function}}
    Let $(\sigmaD,\sigmaA)\in \DeltaD \times \DeltaA$. Then,
    \begin{align*}
        U(\sigmaD,\sigmaA) = \sum_{S\in \AD} \sum_{T\in \AA} \sigmaD_S \,\sigmaA_T \, u(S,T) &= \sum_{S\in \AD} \sum_{T\in \AA} \sigmaD_S \,\sigmaA_T \sum_{e\in T} u(S,e) \\
        &= \sum_{S\in \AD} \sigmaD_S \sum_{e\in \E}  \sum_{T\in \AA:\, e\in T} \sigmaA_T \, u(S,e) \\
        &= \sum_{S\in \AD} \sigmaD_S \sum_{e\in \E}  \rho_{e}(\sigmaA) \,u(S,e) \\
        &= \sum_{e\in \E} \rho_{e}(\sigmaA)\, U(\sigmaD,e).
    \end{align*}
   \hfill\Halmos

\section*{Proof of Proposition \ref{prop:LP1 reformulation}}

    Let $z^*\in \R$ be the optimal value of the variable $z$ in \eqref{LP: LP1}. Then,  
    \begin{align*}
        z^* = \underset{\sigmaD\in \DeltaD}{\min} \, \underset{T\in \AA}{\max} \, U(\sigmaD, T) = \underset{\sigmaD\in \DeltaD}{\min} \, \underset{\sigmaA\in \DeltaA}{\max} \, U(\sigmaD, \sigmaA) = \underset{\sigmaD\in \DeltaD}{\min} \, \underset{\sigmaA\in \DeltaA}{\max} \, \sum_{e\in \E} \rho_{e}(\sigmaA)\, U(\sigmaD,e).
    \end{align*}
    The first equality follows by the constraints of \eqref{LP: LP1}.
    The second equality holds since for every fixed $\sigmaD \in \DeltaD$, the function $U(\sigmaD,\sigmaA)=\sum_{T\in \AA}\sigmaA_T\, U(\sigmaD, T) $ is linear in $\sigmaA$, and therefore it attains its maximum over the set of extreme points of $\DeltaA$, given by the characteristic vectors of the sets in $\AA$. The third equality follows from Lemma \ref{lem:undetection_function}. Thus, \eqref{LP: LP1} is equivalent to
    \begin{align}
        \label{minmax reformulation}
        \underset{\sigmaD\in \DeltaD}{\min} \, \underset{\sigmaA\in \DeltaA}{\max} \, \sum_{e\in \E} \rho_{e}(\sigmaA)\, U(\sigmaD,e).
    \end{align}
    Next, for fixed $\sigmaD \in \DeltaD$, the inner maximization in \eqref{minmax reformulation} is equivalent to the following LP:
    \begin{align}
        \begin{alignedat}{3} \label{LP: inner LP reformulation}
        \underset{\rhoA \in \ThetaA}{\max} \sum_{e\in \E} \rhoA_e \, U(\sigmaD,e) =
        & \max_{\rhoA \in \R^{\E}} \quad && \sum_{e\in \E} \rhoA_{e} \, \sum_{S\in \AD} \sigmaD_S \, u(S,e) \\
        & \text{subject to} \quad && \sum_{e\in \E} \rhoA_{e} \leq \rA   &  \\
        &  && 0 \leq \rhoA_{e} \leq 1 & \forall\, e \in \E.
        \end{alignedat}
    \end{align}
    Indeed, by Lemma \ref{lem:convex_hull}, for any feasible solution of the inner maximization of \eqref{minmax reformulation}, there exists a feasible solution of \eqref{LP: inner LP reformulation} with the same expected number of undetected attacks and vice versa. Now, we consider the dual of \eqref{LP: inner LP reformulation}:
    \begin{align}
        \label{LP: dual inner LP reformulation}
        \begin{alignedat}{4}
            & \underset{\lambda \in \R^{\E},\, \gamma \in \R}{\min} & & \rA\gamma + \sum_{e\in \E}\lambda_{e} \\
            & \textnormal{subject to}& \quad & \begin{aligned}[t]
                                                \gamma + \lambda_e & \geq \sum_{S\in \AD} \sigmaD_S \, u(S,e)  & \forall\, & e\in \E,\\
                                                \lambda_e & \geq 0 & \forall\, & e\in \E,\\
                                                \gamma & \geq 0. &  &
                                            \end{aligned}
        \end{alignedat}
    \end{align}
    By strong duality, the optimal values of \eqref{LP: inner LP reformulation} and \eqref{LP: dual inner LP reformulation} coincide. Thus, the reformulation \eqref{LP: LP1 reformulation} is obtained by substituting the inner maximization in \eqref{minmax reformulation} by \eqref{LP: dual inner LP reformulation}. In particular, given an optimal solution $(\sigmaDEq, \lambda^{*}, \gamma^{*}) \in \R^{\AD} \times \R^{\E} \times \R$ of \eqref{LP: LP1 reformulation}, the value of the game is given by the optimal value $\rA \gamma^{*} + \sum_{e\in \E} \lambda^{*}_e$.

    Next, let $(\rhoAEq, \nu^*)\in \R^{\E} \times \R$ be an optimal solution of the dual of \eqref{LP: LP1 reformulation}, which is given by 
    \begin{align}
        \begin{alignedat}{3} \label{LP: dual of LP reformulation}
        & \max_{\rhoA \in \R^{\E},\, \nu \in \R} & & \nu \\
        & \text{subject to}& \quad & \begin{aligned}[t]
                                        & \nu  \leq \sum_{e\in \E} \rhoA_e \, u(S, e) \quad & \forall\, S\in \AD,\\
                                        & \sum_{e\in \E} \rhoA_{e}  \leq \rA & &\\
                                        & 0 \leq \rhoA_{e} \leq 1 \quad & \forall\, e \in \E.
                                    \end{aligned}
        \end{alignedat}
    \end{align}
    From the equivalence between \eqref{LP: LP1} and \eqref{LP: LP1 reformulation}, it follows that $\sigmaDEq$ in an optimal solution of $\min_{\sigmaD \in \DeltaD} \max_{\rhoA \in \ThetaA} U(\sigmaD, \rhoA)$. On the other hand, we can similarly show that the dual of \eqref{LP: LP1} is equivalent to \eqref{LP: dual of LP reformulation}, which in turn is equivalent to $\max_{\rhoA \in \ThetaA} \min_{\sigmaD \in \DeltaD} U(\sigmaD, \rhoA)$, for which $\rhoAEq$ is an optimal solution. Therefore, $(\sigmaDEq, \rhoAEq)$ is an equilibrium of the game $\Gamma$.
   \hfill\Halmos

\section*{Proof of Proposition~\ref{prop:NP_hardness_Defender_BR_2}}
    We show that Problem \eqref{pb:DBR} is NP-hard even for instances where $\rA=1$, every component belongs to at most two monitoring sets, and the detection probabilities are homogeneous. To this aim, we reduce the NP-hard problem VERTEX COVER to \eqref{pb:DBR}. Given a graph $G=(V,E)$ and a parameter $k\in \Z_{>0}$, the VERTEX COVER problem consists in finding a subset of nodes $S\subseteq V$ of size at most $k$, such that every edge of $G$ is incident to at least one node of $S$.

    Given an instance of VERTEX COVER, we construct an instance of \eqref{pb:DBR} by setting $\V = V$ and $\E = E$. Then, we define the monitoring sets and detection probabilities respectively as $\C_v = \{ e\in E:\, v \in e \}$ and $p_v=1-1/(m+1)$ for all $v \in \V$ (we recall that $m\coloneqq |\E|$). We note that every component belongs to exactly two monitoring sets, and the detection probabilities are homogeneous. Next, we set $\rA=1$ and let $\rhoA$ be the uniform distribution in $\E$, that is $\rhoA_e=1/m$ for every $e\in \E$. Finally, we let $\rD=k$, so $\AD = \{ S \subseteq \V:\, |S| \leq k \}$. From the selection of the detection probabilities, for every detector positioning $S\in \AD$ and component $e\in \E$, we have $u(S,e)\in \{1,1/(m+1),1/(m+1)^2\}$. Specifically, $u(S,e)=1$ if $e \notin \cup_{v \in S}\, \C_v$, $u(S,e)=1/(m+1)$ if there is exactly one $v \in S$ such that  $e\in\C_v$, and $u(S,e)=1/(m+1)^2$ if there are exactly two $v,w\in S$ such that $e \in \C_v \cap \C_w$. Next, we argue that $G$ has a vertex cover of size at most $k$ if and only if there exists a detector positioning $S\in \AD$ such that $U(S,\rhoA)\leq 1/(m+1)$.

    Suppose $G$ has a vertex cover $S$ of size at most $k$. Then, $S\in \AD$, and since all edges are covered by $S$, we must have $u(S,e) \leq 1/(m+1)$ for every $e\in \E$. It follows that $U(S,\rhoA)= (1/m)\sum_{e\in \E}u(S,e) \leq 1/(m+1)$. Conversely, suppose there is a detector positioning $S\in \AD$ such that $U(S,\rhoA) \leq 1/(m+1)$. Then, $|S|\leq k$ and $S$ is a vertex cover in $G$. Indeed, suppose there is an edge $e^{\prime}\in E$ which is not incident to any node in $S$. Then, we must have $u(S,e^{\prime})=1$. If $m=1$, then $e^{\prime}$ is the only edge of $E$ and $S=\emptyset$, so $U(S,\rhoA) = u(S,e^{\prime})=1$, which is a contradiction. If $m>1$, then it follows that 
    \begin{align*}
        \frac{1}{m+1} \geq U(S,\rhoA) = \frac{1}{m} \sum_{e\in \E} u(S,e) = \frac{1}{m}u(S,e^{\prime}) +  \frac{1}{m}\sum_{\substack{e\in \E: \\ e\neq e^{\prime}}} u(S,e) \geq \frac{1}{m} +  \frac{1}{m}\cdot \frac{m-1}{(m+1)^2} > \frac{1}{m}, 
    \end{align*}
    which is again a contradiction.
   \hfill\Halmos




\section*{Proof of Proposition~\ref{prop:MIP_optimality}}
    For a fixed enumeration of $\V = \{v_1,\ldots,v_n\}$, let $S\in \AD$ be a detector positioning and $(x,u)\in \R^{n} \times \R^{\E \times \{1,\ldots,n\} }$ be defined as
    \begin{align}
        x_i &= \mathds{1}_{S}(v_i) \quad & \forall\, v_i \in \V, \label{MIP:feasible_solution_1} \\
        u_{e,i} &= \prod_{v_j \in \V:\, j \leq i} \left(1 - p_{v_j} \mathds{1}_{ \C_{v_j} }(e) x_{j} \right) \quad &\forall \, e\in \E, \forall\, v_i\in \V,   \label{MIP:feasible_solution_2}
    \end{align}
    where $\mathds{1}_{S}(v_i) = 1$ (resp. $\mathds{1}_{ \C_{v_j} }(e)=1$) if $v_i\in S$ (resp. $e\in \C_{v_j}$) and $\mathds{1}_{S}(v_i)= 0$ (resp. $\mathds{1}_{ \C_{v_j} }(e)=0$) otherwise. We first show that $(x,u)$ is a feasible solution of \eqref{MIP:pricing problem}. To this end, let us consider its constraints:
    \begin{align}
            \sum_{v_i\in \V} x_{i} &\leq \rD, \label{MIP:linearizing_ineq_1}\\
            1-p_{v_1} \mathds{1}_{ \C_{v_1} }(e) x_{1} &\leq u_{e,1} \quad &\forall\, e\in \E, \label{MIP:linearizing_ineq_2}\\
            u_{e,i} \left(1-p_{v_{i+1}} \mathds{1}_{ \C_{v_{i+1}} }(e) \right) &\leq u_{e,i+1}  \quad &\forall\, e\in \E, \, \forall\, v_i\in \V \setminus \left\{v_n \right\}, \label{MIP:linearizing_ineq_3} \\
            u_{e,i} - x_{i+1} &\leq u_{e,i+1}  \quad  &\forall\, e\in \E, \, \forall\, v_i\in \V \setminus \left\{v_n\right\}. \label{MIP:linearizing_ineq_4} \\
            0 \leq u_{e,i} &\leq 1 \quad &  \forall \,  v_i  \in \V, \label{MIP:linearizing_ineq_5}\\
            x_{i} &\in \{0,1\} & \forall\, v_i \in \V. \label{MIP:linearizing_ineq_6}
    \end{align}
    By definition, $(x,u)$ clearly satisfies \eqref{MIP:linearizing_ineq_1}, \eqref{MIP:linearizing_ineq_2}, \eqref{MIP:linearizing_ineq_5} and \eqref{MIP:linearizing_ineq_6}. Then, it remains to show that $(x,u)$ also satisfies \eqref{MIP:linearizing_ineq_3} and \eqref{MIP:linearizing_ineq_4}. From \eqref{MIP:feasible_solution_2}, it holds that $(x,u)$ satisfies the following (nonlinear) equalities:    
    \begin{align}
        u_{e,i} \left( 1-p_{v_{i+1}} \mathds{1}_{ \C_{v_{i+1}} }(e) x_{i+1}\right) &= u_{e,i+1}  \quad &\forall\, e\in \E, \, \forall\, v_i \in \V \setminus \left\{v_n\right\}. \label{MIP:linearizing_ineq_equivalent}
    \end{align}    
    We note that the left hand side of \eqref{MIP:linearizing_ineq_equivalent} is equivalent to $\max\left\{ u_{e,i} \left( 1-p_{v_{i+1}} \mathds{1}_{ \C_{v_{i+1}} }(e) \right),\, u_{e,i} - x_{i+1} \right\}$. Indeed, if $x_{i+1}=0$, then from the inequalities $u_{e,i} \geq 0$ and $p_{v_{i+1}} \mathds{1}_{ \C_{v_{i+1}} }(e) \leq 1$ it follows that $ u_{e,i} \left( 1-p_{v_{i+1}} \mathds{1}_{ \C_{v_{i+1}} }(e) \right) \leq u_{e,i} = u_{e,i} - x_{i+1}$.  On the other hand, if $x_{i+1}=1$, then the inequalities $u_{e,i} \leq 1$ and $p_{v_{i+1}} \mathds{1}_{ \C_{v_{i+1}} }(e) \leq 1$ imply that $u_{e,i} - x_{i+1} = u_{e,i} - 1 \leq u_{e,i} \left( 1-p_{v_{i+1}} \mathds{1}_{ \C_{v_{i+1}} }(e) \right)$. Thus, \eqref{MIP:linearizing_ineq_equivalent} translates to
    \begin{align*}
        \max\left\{ u_{e,i} \left( 1-p_{v_{i+1}} \mathds{1}_{ \C_{v_{i+1}} }(e) \right),\, u_{e,i} - x_{i+1} \right\} &= u_{e,i+1}  \quad &\forall\, e\in \E, \, \forall\, v_i \in \V \setminus \left\{v_n\right\},
    \end{align*}
    from which we deduce that $(x,u)$ satisfies \eqref{MIP:linearizing_ineq_3} and \eqref{MIP:linearizing_ineq_4}. Therefore, $(x,u)$ is feasible for \eqref{MIP:pricing problem}. We also note that $u_{e,n}=u(S,e)$ for every $e\in \E$.

    Next, let $(x^*,u^*)\in \R^{n}\times \R^{\E \times \{1,\ldots,n\} }$ be an optimal solution of \eqref{MIP:pricing problem}, and let $S^{*} \coloneqq \{ v_i \in \V:\, x^*_i=1\}$ be its associated detector positioning. By similar arguments as above, we can show that \eqref{MIP:linearizing_ineq_2}-\eqref{MIP:linearizing_ineq_4} imply that $(x^*,u^*)$ satisfies
    \begin{align}
         1-p_{v_1} \mathds{1}_{ \C_{v_1} }(e) x^*_{1} &\leq u^*_{e,1} \quad &\forall\, e\in \E,  \label{MIP:proof_inequality_1} \\ 
        u^*_{e,i} \left( 1-p_{v_{i+1}} \mathds{1}_{ \C_{v_{i+1}} }(e) x^*_{i+1}\right) &\leq u^*_{e,i+1}  \quad &\forall\, e\in \E, \, \forall\, v_i \in \V \setminus \left\{v_n\right\}. \label{MIP:proof_inequality_2}
    \end{align}
    By repeatedly applying these inequalities, it follows that $(x^*, u^*)$ satisfies 
    \begin{align*}
        u^*_{e,n} &\geq u^*_{e, n-1} \left(1 - p_{v_n} \mathds{1}_{ \C_{v_n} }(e) x^*_n \right) \\
        &\geq u^*_{e, n-2} \left(1 - p_{v_{n-1}} \mathds{1}_{ \C_{v_{n-1}} }(e) x^*_{n-1} \right) \left(1 - p_{v_n} \mathds{1}_{ \C_{v_n} }(e) x^*_n \right) \\
        &\geq \cdots \\
        &\geq \prod_{v_i \in \V} \left(1 - p_{v_i} \mathds{1}_{ \C_{v_i} }(e) x^*_{i} \right) = u(S^*, e).
    \end{align*}
    
    Furthermore, since \eqref{MIP:pricing problem} is a minimization problem, for every $e\in \E$ with $\rhoA_{e}>0$, $u^{*}_{e,n}$ attains its lower bound $u(S^{*}, e)$ by tightening the inequalities \eqref{MIP:proof_inequality_1} and \eqref{MIP:proof_inequality_2}. Therefore,
    \begin{align*}
        U(S^{*}, \rhoA) = \sum_{e\in \E} \rhoA_{e} \, u(S^{*}, e) = \sum_{e\in \E} \rhoA_{e} \, u^{*}_{e,n} \leq \sum_{e\in \E} \rhoA \, u_{e,n} = \sum_{e\in \E} \rhoA \, u(S, e) =  U(S, \rhoA).
    \end{align*}
    Since $S\in \AD$ is arbitrary, we conclude that $S^*$ is an optimal solution for \eqref{pb:DBR}. 
   \hfill\Halmos

\section*{Proof of Lemma~\ref{prop:supermodularity}}
    We first show that for every fixed $e\in \E$, the function $u(S,e) = \prod_{v \in S:\, e\in \C_v} (1-p_v)$, defined for every $S\subseteq \V$, is nonincreasing and supermodular.
    
    \begin{itemize}
        \item[--] Let $S,S^{\prime}\subseteq\V$ be such that $S\subseteq S'$. Then,
        \begin{align*}
            u(S,e) = \prod_{v \in S:\, e\in \C_v} (1-p_v) \geq \prod_{v \in S^{\prime}:\, e\in \C_v} (1-p_v) = u(S^{\prime},e).
        \end{align*}
        Thus, $u(\cdot,e)$ is nonincreasing.
        \item[--] Let $S,S^{\prime}\subseteq \V$ be such that $S\subseteq S'$, and let $v \notin S^{\prime}$. Then,
        \begin{align*}
            u(S,e) - u(S \cup \{v\}, e) &= \prod_{w \in S:\, e\in \C_w} (1-p_w) - \prod_{w \in S \cup \{v\}:\, e\in \C_w} \hspace{-1.5em} (1-p_w) \\
            &= p_v \mathds{1}_{ \C_{v} }(e) \prod_{w \in S:\, e\in \C_w} (1-p_w) \\
            &\geq p_v \mathds{1}_{ \C_{v} }(e) \prod_{w \in S^{\prime}:\, e\in \C_w} (1-p_w) \\
            &= u(S^{\prime},e) - u(S^{\prime} \cup \{ v \}, e).
        \end{align*}
        Therefore, $u(\cdot,e)$ is supermodular.
    \end{itemize}
    Finally, we recall that for every fixed $\rhoA \in \ThetaA$, $U(S, \rhoA) = \sum_{e\in \E} \rhoA_e\, u(S, e)$. In particular, $U(\cdot, \rhoA)$ is a nonnegative linear combination of nonincreasing and supermodular set functions. This implies that $U(\cdot,\rhoA)$ is nonincreasing and supermodular as well.
   \hfill\Halmos

\section*{Proof of Lemma~\ref{lem:upper_bound_curvature}}
    For every $\rhoA \in \ThetaA$, we have
    \begin{align*}
        U_{\V \setminus \{ v \}}(v, \rhoA) &= U(\V \setminus \{ v \}, \rhoA) - U(\V, \rhoA) \\
        &= \sum_{e\in \E} \rhoA_e \, u( \V \setminus \{ v \}, e ) - \sum_{e\in \E} \rhoA_e \, u( \V \setminus \{ v \}, e ) \left( 1 - p_v \mathds{1}_{ \C_{v} }(e) \right) \\
        &= p_v \sum_{e\in \C_v} \rhoA_e \, u( \V \setminus \{ v \}, e ) \\
        &= p_v \sum_{e\in \C_v} \rhoA_e \hspace{-0.5em} \prod_{ \substack{ w \in \V \setminus \{ v \}: \\ e\in \C_w } } \hspace{-0.5em} (1-p_w) \geq \left( 1 - \max_{w \in  \V} p_w \right)^{d} p_v \sum_{e\in \C_v} \rhoA_e,
    \end{align*}
    where the inequality is due to the assumption that every component belongs to at most $d$ monitoring sets. On the other hand, 
    \begin{align*}
        U_{\emptyset}(v, \rhoA) = U(\emptyset, \rhoA) - U(v, \rhoA) = \sum_{e\in \E} \rhoA_e - \sum_{e\in \E} \rhoA_e \left( 1 - p_v \mathds{1}_{ \C_{v} }(e) \right) = p_v \sum_{e\in \C_v} \rhoA_e.
    \end{align*}
    Let us assume that $\rhoA \neq \boldsymbol{0}$. Then, since $p_v >0$ for every $v\in \V$, the set $\left\{v \in \V:\,  U_{\emptyset}(v, \rhoA) > 0 \right\}$ is nonempty. It follows that
    \begin{align*}
        c = 1 - \underset{v \in \V:\,  U_{\emptyset}(v, \rhoA) > 0}{\min}\, \frac{ U_{\V \setminus \{ v \}}(v, \rhoA)  }{ U_{\emptyset}(v, \rhoA) } 
         \leq 1 - \left( 1 - \max_{w \in  \V} p_w \right)^{d}.
    \end{align*}
    Finally, if $\rhoA = \boldsymbol{0}$, $U(\cdot, \rhoA)$ is identical to zero. Therefore, $c=0$ by definition, ensuring that the bound also holds in this case. 
   \hfill\Halmos

\section*{Proof of Proposition~\ref{prop:inapproximability}}
    We first show that if $c=1$, deciding whether the optimal value of \eqref{pb:DBR} is zero is NP-hard. 
    Let us consider the reduction from VERTEX COVER in the proof of Proposition~\ref{prop:NP_hardness_Defender_BR_2}, but set $p_v=1$ for all $v \in \V$. In this case, every $v \in \V$ satisfies 
    \begin{align*}
        U_{\V \setminus \{ v \}}(v, \rhoA) &= U(\V \setminus \{v\},\rhoA ) - U(\V,\rhoA) = 0, \\
        U_{\emptyset}(v, \rhoA) &= U(\emptyset,\rhoA) - U(\{ v \}, \rhoA) = \frac{|\C_v|}{m+1}.
    \end{align*}
    Therefore, $c=1$. Clearly, $G$ has a vertex cover of size at most $k$ if and only if there exists a detector positioning $S\in \AD$ such that $U(S,\rhoA)=0$, which concludes the reduction.
    
    Next, let $S^*\in \AD$ be a defender's best response of an instance of \eqref{pb:DBR} with $c=1$. Given any $\alpha\geq 1$, suppose that there exists a polynomial-time algorithm that computes a detector positioning $\widehat{S}$ satisfying
    \begin{align*}
        U(S^{*}, \rhoA) \leq  U(\widehat{S},\rhoA) \leq \alpha \, U(S^{*}, \rhoA).
    \end{align*}
    These inequalities imply that $U(S^{*}, \rhoA)=0$ if an only if $U(\widehat{S},\rhoA)=0$. In other words, we could use such algorithm to decide whether the optimal value of \eqref{pb:DBR} is zero in polynomial time, which, from the above reduction, is not possible unless P$=$NP.
   \hfill\Halmos  



\section*{Proof of Proposition~\ref{prop:forward_greedy_guarantee}}
    Let $\rhoA \in \ThetaA$ be fixed. First, we observe that $S^*\in \AD$ minimizes the expected number of undetected attacks $U(S, \rhoA)$ over $\AD$ if and only if $S^*$ maximizes the expected number of \emph{detected} attacks $D(S, \rhoA) \coloneqq \sum_{e\in \E} \rhoA_e - U(S, \rhoA)$ over $\AD$. 
    
    Since $U(\cdot, \rhoA)$ is upper bounded by $\sum_{e\in \E} \rhoA_e$, as well as nonincreasing and supermodular (Lemma~\ref{prop:supermodularity}), it follows that $D(\cdot, \rhoA)$ is nonnegative, nondecreasing and submodular. The curvature parameter for the latter function is defined as
    \begin{align*}
        c \coloneqq 1 - \underset{v \in \V:\,  D_{\emptyset}(v, \rhoA) > 0}{\min}\, \frac{ D_{\V \setminus \{v\}}(v, \rhoA)  }{ D_{\emptyset}(v, \rhoA) },
    \end{align*}
    where $D_{S}(v, \rhoA) \coloneqq D(S \cup \{ v \}, \rhoA) - D(S, \rhoA)$ is the marginal \emph{increase} in the expected number of detected attacks when $v \in \V$ is added to the set of detector locations $S\subseteq \V$. We note that $U_{S}(v, \rhoA) = D_{S}(v, \rhoA)$ for every $S\subseteq \V$ and $v \in \V$, so both $U(\cdot, \rhoA)$ and $D(\cdot, \rhoA)$ share the same curvature parameter $c$.
    
    Let $S$ be the set of nodes selected by the forward greedy algorithm in a given iteration. If $|S| < \rD$, in the next iteration the algorithm selects a node $v \in \argmax_{w \in \V} U_{S}(w, \rhoA) = \argmax_{w \in \V} D_{S}(w, \rhoA)$, and updates $S$ to $S \cup \{v\}$. In other words, the algorithm selects a node maximizing the marginal increase in the expected number of detected attacks against $\rhoA$. Therefore, running the forward greedy algorithm on $U(\cdot, \rhoA)$ is equivalent to running it on $D(\cdot, \rhoA)$. By \citet{conforti1984submodular}, the forward greedy algorithm returns a detector positioning $\widehat{S}\in \AD$ satisfying
    \begin{align*}
        D(\widehat{S}, \rhoA) \geq \left( \frac{1 - e^{-c}}{c} \right) D(S^*, \rhoA). 
    \end{align*}
    Rearranging terms, we obtain 
    \begin{align*}
        U(\widehat{S}, \rhoA) \hspace{-0.1em} \leq \hspace{-0.1em}\left( \frac{1 - e^{-c}}{c} \right) \hspace{-0.1em} U(S^*, \rhoA) + \left( \hspace{-0.2em} 1 - \frac{1 - e^{-c}}{c} \right) \hspace{-0.1em} \sum_{e\in \E}  \rhoA_e   \leq  \left( \frac{1 - e^{-c}}{c} \right) \hspace{-0.1em} U(S^*, \rhoA) + \left(\hspace{-0.2em} 1 - \frac{1 - e^{-c}}{c} \right) \hspace{-0.2em} \rA.
    \end{align*}
   \hfill\Halmos

\section*{Proof of Theorem~\ref{thm:column_generation_with_approx_best_response}}


    Let $(\sigmaDHat, \widehat{\lambda}, \widehat{\gamma}) \in \R^{\AD} \times \R^{\E} \times \R$ and $(\rhoAHat, \widehat{\nu}) \in \R^{\E} \times \R$ be optimal primal and dual solutions of the restricted master problem \RMP{$\I$} upon termination of the column generation algorithm with $\alpha$-approximate best response, that is, when the $\alpha$-approximate best response $\widehat{S}$ satisfies $\bar{c}_{\widehat{S}} \geq -\varepsilon$.
    
    First, we argue that $\widehat{\nu} = U(\sigmaDHat, \rhoAHat)$. To this aim, we recall that \RMP{$\I$} is equivalent to \eqref{LP: LP1 reformulation} with the constraint that the support of the variable $\sigmaD$ is contained within the set of columns $\I$. We let $\Supp(\sigmaD) \coloneqq \{S\in \AD:\, \sigmaD_S >0 \}$ denote the support of $\sigmaD$. Then, from similar arguments to those given in the proof of Proposition~\ref{prop:LP1 reformulation}, we can show that \RMP{$\I$} and its dual are respectively equivalent to
    \begin{align*}
        \underset{\substack{ \sigmaD \in \DeltaD:\\ \Supp(\sigmaD) \subseteq \I }}{\min}\, \underset{\rhoA \in \ThetaA}{\max} U(\sigmaD, \rhoA) \quad\text{ and }\quad \underset{\rhoA \in \ThetaA}{\max}\, \underset{\substack{ \sigmaD \in \DeltaD:\\ \Supp(\sigmaD) \subseteq \I }}{\min}\,  U(\sigmaD, \rhoA).
    \end{align*}
    Furthermore, $\sigmaDHat$ and $\rhoAHat$ are respectively optimal solutions of the above minmax and maxmin problems. Therefore,
    \begin{align}
        \label{ineq:for_approx_value_of_game_colgen}
        \widehat{\nu} =  \underset{\substack{ \sigmaD \in \DeltaD:\\ \Supp(\sigmaD) \subseteq \I }}{\min} U(\sigmaD, \rhoAHat) \leq U(\sigmaDHat, \rhoAHat) \leq \underset{\rhoA \in \ThetaA}{\max} U(\sigmaDHat, \rhoA) = \rA \widehat{\gamma} + \sum_{e\in \E} \widehat{\lambda}_e,
    \end{align}
    where the equalities hold since $\rA \widehat{\gamma} + \sum_{e\in \E} \widehat{\lambda}_e$ and $\widehat{\nu}$ are the optimal values of \RMP{$\I$} and its dual, respectively. By strong duality, we have $\widehat{\nu} = \rA \widehat{\gamma} + \sum_{e\in \E} \widehat{\lambda}_e$, and therefore all the inequalities in \eqref{ineq:for_approx_value_of_game_colgen} are in fact \emph{equalities}. Hence, $\widehat{\nu} = U(\sigmaDHat, \rhoAHat)$.
    
    Next, let $S^{*}\in \AD$ be a pure best response against $\rhoAHat$, and let $\widehat{S}\in \AD$ be the approximate best response returned by the $\alpha$-approximation algorithm for \eqref{pb:DBR}. Then,
    \begin{align}
        \label{ineq:upper_bound_for_approx_value_of_game_1}
        U(\sigmaDHat, \rhoAHat) = \widehat{\nu} \leq U(\widehat{S}, \rhoAHat) + \varepsilon \leq \alpha \, U(S^*, \rhoAHat) + \varepsilon. 
    \end{align}
    The first inequality holds by the algorithm's termination criterion, which is $\bar{c}_{\widehat{S}} = -\widehat{\nu} + U(\widehat{S}, \rhoAHat) \geq -\varepsilon$, whereas the second inequality is due to $\widehat{S}$ being an $\alpha$-approximation of \eqref{pb:DBR}. On the other hand, from the optimality of $S^*$, it follows that
    \begin{align}
        \label{ineq:upper_bound_for_approx_value_of_game_2}
        U(S^*, \rhoAHat) \leq \sum_{S\in \AD} \sigmaD_S\, U(S, \rhoAHat) = U(\sigmaD, \rhoAHat), \quad \forall\, \sigmaD \in \DeltaD.
    \end{align}
    Therefore, we have the following bounds for unilateral deviation from $(\sigmaDHat, \rhoAHat)$:
    \begin{align}
        U(\sigmaDHat, \rhoAHat) \overset{\eqref{ineq:upper_bound_for_approx_value_of_game_1}}&{\leq} \alpha \, U(S^*, \rhoAHat) + \varepsilon \overset{\eqref{ineq:upper_bound_for_approx_value_of_game_2}}{\leq} \alpha\,  U(\sigmaD, \rhoAHat) + \varepsilon, \quad &\forall\, \sigmaD \in \DeltaD, \label{ineq:colgen_unilateral_deviation_upper} \\
        U(\sigmaDHat, \rhoAHat) \overset{\eqref{ineq:for_approx_value_of_game_colgen}}&{=} \underset{\rhoA \in \ThetaA}{\max} U(\sigmaDHat, \rhoA) \geq U(\sigmaDHat, \rhoA), \quad &\forall \, \rhoA \in \ThetaA. \label{ineq:colgen_unilateral_deviation_lower}
    \end{align}
    Now, let $(\sigmaDEq, \rhoAEq)$ be an equilibrium of the game $\Gamma$. Using the definition of equilibrium, we have
    \begin{align}
        \label{ineq:bounds_for_approx_value_of_game_colgen}
        U(\sigmaDEq, \rhoAEq) \leq U(\sigmaDHat, \rhoAEq) \overset{\eqref{ineq:colgen_unilateral_deviation_lower}}{\leq} U(\sigmaDHat, \rhoAHat) \overset{\eqref{ineq:colgen_unilateral_deviation_upper}}{\leq} \alpha \, U(\sigmaDEq, \rhoAHat) + \varepsilon \leq \alpha \, U(\sigmaDEq, \rhoAEq) + \varepsilon.
    \end{align}
    Thus, $\sigmaDHat$ and $\rhoAHat$ respectively satisfy the following bounds for the worst-case expected number of undetected attacks with respect to the value of the game $U(\sigmaDEq, \rhoAEq)$:
    \begin{align*}
        U(\sigmaDEq, \rhoAEq) \overset{\eqref{ineq:bounds_for_approx_value_of_game_colgen}}&{\leq} U(\sigmaDHat, \rhoAHat) \overset{\eqref{ineq:for_approx_value_of_game_colgen}}{=} \underset{\rhoA \in \ThetaA}{\max} U(\sigmaDHat, \rhoA) \overset{\eqref{ineq:bounds_for_approx_value_of_game_colgen}}{\leq} \alpha\, U(\sigmaDEq, \rhoAEq) + \varepsilon, \\
        \frac{1}{\alpha} \left( U(\sigmaDEq, \rhoAEq) - \varepsilon \right) \hspace{-0.25em} \overset{\eqref{ineq:bounds_for_approx_value_of_game_colgen}}&{\leq} \hspace{-0.25em} \frac{1}{\alpha} \left( U(\sigmaDHat, \rhoAHat) - \varepsilon \right) \hspace{-0.5em} \overset{\eqref{ineq:colgen_unilateral_deviation_upper}}{\leq} \hspace{-0.5em} \underset{\sigmaD \in \DeltaD}{\min} U(\sigmaD, \rhoAHat) \leq U(\sigmaDEq, \rhoAHat) \leq U(\sigmaDEq, \rhoAEq).
    \end{align*}
   \hfill\Halmos

\section*{Lemmas for Theorem~\ref{thm:MWU_epsilon_NE}}


Before delving into the proof of Theorem~\ref{thm:MWU_epsilon_NE}, we establish some key lemmas. First, we present a standard bound essential for analyzing MWU algorithms. Our proof mirrors that given in \citet{arora2012multiplicative} for the MWU algorithm applied to restricted distributions. 

\begin{lemma}
    \label{lem:MWU_guarantee}
    Algorithm \ref{alg:MWU} satisfies the following bound for the sum of the expected number of undetected attacks across $\tau$ iterations:
    \begin{align*}
       \sum_{t=1}^{\tau} U(S^{(t)}, \rho^{(t)}) \geq  \max_{\rho \in \ThetaA} \left\{ \sum_{t=1}^{\tau} U(S^{(t)}, \rho) - \frac{1}{\eta} \DRE{ \rho }{ \rho^{(1)} } \right\} - \eta \sum_{t=1}^{\tau} \sum_{e\in \E} \rho^{(t)}_e \left(u(S^{(t)}, e)\right)^2.
    \end{align*}
\end{lemma}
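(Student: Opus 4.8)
The plan is to recognize Algorithm~\ref{alg:MWU} as an instance of online mirror ascent (equivalently, exponentiated gradient) with the \emph{unnormalized negative entropy} potential $\phi(\rho) \coloneqq \sum_{e\in\E}(\rho_e \ln \rho_e - \rho_e)$, whose induced Bregman divergence is exactly $\DRE{\cdot}{\cdot}$, and then carry out the standard telescoping regret analysis. The reward collected at iteration $t$ is $U(S^{(t)}, \rho) = \sum_{e\in\E}\rho_e\, u(S^{(t)}, e)$, which is \emph{linear} in $\rho$, so its gradient is the fixed vector $(u(S^{(t)}, e))_{e\in\E}$. Since $\nabla\phi(\rho)_e = \ln\rho_e$, the multiplicative rule on Lines~\ref{alg:start_update_step}--\ref{alg:end_update_step} reads $\nabla\phi(\widetilde{\rho}^{\,(t+1)}) = \nabla\phi(\rho^{(t)}) + \eta\,(u(S^{(t)},e))_{e\in\E}$, a mirror-ascent step, while Line~\ref{alg:projection_step} is the Bregman projection of $\widetilde{\rho}^{\,(t+1)}$ onto the convex set $\ThetaA$. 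By Theorem~\ref{thm:projection_closed_form} every iterate $\rho^{(t)}$ is strictly positive (each entry equals $\min\{\mu\widetilde{\rho}_e,1\}$ with $\mu>0$ and $\widetilde{\rho}_e>0$), so all divergences below are finite and the gradients $\nabla\phi(\rho^{(t)})$ are well defined.

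The core is a per-iteration inequality for an arbitrary fixed competitor $\rho\in\ThetaA$. First I would invoke the generalized Pythagorean inequality for Bregman projections: since $\rho^{(t+1)}$, defined on Line~\ref{alg:projection_step} as the minimizer of $\DRE{q}{\widetilde{\rho}^{\,(t+1)}}$ over $q\in\ThetaA$, lies in the interior of $\mathrm{dom}\,\phi$ and $\ThetaA$ is convex, the first-order optimality condition $\langle \nabla\phi(\rho^{(t+1)}) - \nabla\phi(\widetilde{\rho}^{\,(t+1)}),\, \rho - \rho^{(t+1)}\rangle \ge 0$ combined with the three-point identity for $\DRE{\cdot}{\cdot}$ yields $\DRE{\rho}{\rho^{(t+1)}} \le \DRE{\rho}{\widetilde{\rho}^{\,(t+1)}}$; that is, projecting back onto $\ThetaA$ never increases the divergence to any feasible point. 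Next I would expand $\DRE{\rho}{\rho^{(t)}} - \DRE{\rho}{\widetilde{\rho}^{\,(t+1)}}$ using the closed form of $\DRE{\cdot}{\cdot}$ and the identity $\ln\widetilde{\rho}^{\,(t+1)}_e = \ln\rho^{(t)}_e + \eta\,u(S^{(t)}, e)$, which collapses to $\sum_{e\in\E}\big[\eta\,\rho_e\,u(S^{(t)},e) + \rho^{(t)}_e - \widetilde{\rho}^{\,(t+1)}_e\big]$. Bounding the residual via the elementary inequality $e^{y}\le 1+y+y^2$ (valid for $y\in[0,1]$, applied with $y=\eta\,u(S^{(t)},e)\in[0,1]$, using $u(S^{(t)},e)\in[0,1]$ and $\eta\le 1$) gives $\rho^{(t)}_e - \widetilde{\rho}^{\,(t+1)}_e = \rho^{(t)}_e\big(1-e^{\eta u(S^{(t)},e)}\big) \ge -\eta\,\rho^{(t)}_e\, u(S^{(t)},e) - \eta^2 \rho^{(t)}_e\, u(S^{(t)},e)^2$, and hence
\begin{align*}
\DRE{\rho}{\rho^{(t)}} - \DRE{\rho}{\rho^{(t+1)}} \;\ge\; \eta\big(U(S^{(t)},\rho) - U(S^{(t)},\rho^{(t)})\big) - \eta^2\sum_{e\in\E}\rho^{(t)}_e\, u(S^{(t)},e)^2 .
\end{align*}

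Finally I would sum this over $t=1,\dots,\tau$: the left-hand side telescopes to $\DRE{\rho}{\rho^{(1)}} - \DRE{\rho}{\rho^{(\tau+1)}} \le \DRE{\rho}{\rho^{(1)}}$ by nonnegativity of the divergence. Rearranging and dividing by $\eta$ gives, for this fixed $\rho$, the bound $\sum_{t}U(S^{(t)},\rho^{(t)}) \ge \sum_t U(S^{(t)},\rho) - \tfrac1\eta\DRE{\rho}{\rho^{(1)}} - \eta\sum_t\sum_{e\in\E} \rho^{(t)}_e\, u(S^{(t)},e)^2$; since the error term does not depend on $\rho$, taking the maximum over $\rho\in\ThetaA$ of the first two terms yields the claim. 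The main obstacle is the projection step: one must verify that replacing $\widetilde{\rho}^{\,(t+1)}$ by its projection $\rho^{(t+1)}$ only \emph{helps} the telescoping bound, which is precisely the generalized Pythagorean inequality and relies on the convexity of $\ThetaA$ and on the iterates remaining strictly positive (so that the relative-entropy gradients exist, as guaranteed by Theorem~\ref{thm:projection_closed_form}); everything else is a routine exponentiated-gradient computation mirroring \citet{arora2012multiplicative}.
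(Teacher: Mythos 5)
Your proposal is correct and follows essentially the same route as the paper's proof: the generalized Pythagorean inequality for the Bregman projection onto $\ThetaA$, the expansion of $\DRE{\rho}{\rho^{(t)}}-\DRE{\rho}{\widetilde{\rho}^{\,(t+1)}}$ via the multiplicative update rule, the bound $e^{x}\le 1+x+x^{2}$ on the residual, and the telescoping sum. The only differences are cosmetic---you frame the argument explicitly as mirror ascent and sketch a derivation of the Pythagorean inequality that the paper simply cites---and your explicit remarks on $\eta\le 1$ and the strict positivity of the iterates are slightly more careful than the paper's presentation.
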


\begin{proof}{Proof of Lemma~\ref{lem:MWU_guarantee}.}
    Let $\rho \in \ThetaA$. First, we consider the following Generalized Pythagorean inequality (see \eg, \citet{csiszar1975divergence,herbster2001tracking}):
    \begin{align}
        \label{ineq:pythagorean_ineq_relative_entropy}
        \DRE{ \rho }{ \rho^{(t+1)} } + \DRE{ \rho^{(t+1)} }{ \widetilde{\rho}^{\,(t+1)} } \leq \DRE{ \rho }{ \widetilde{\rho}^{\,(t+1)} }.
    \end{align}
    Since $\DRE{ \rho^{(t+1)} }{ \widetilde{\rho}^{\,(t+1)} } \geq 0$, it follows that
    \begin{align*}
        \DRE{ \rho }{ \rho^{(t+1)} } - \DRE{ \rho }{ \rho^{(t)} } & =  \DRE{ \rho }{ \rho^{(t+1)} } + \DRE{ \rho^{(t+1)} }{ \widetilde{\rho}^{\,(t+1)} } \\
        &\quad - \DRE{ \rho^{(t+1)} }{ \widetilde{\rho}^{\,(t+1)} } - \DRE{ \rho }{ \rho^{(t)} } \\
         \overset{\eqref{ineq:pythagorean_ineq_relative_entropy}}&{\leq} \DRE{ \rho }{ \widetilde{\rho}^{\,(t+1)} } - \DRE{ \rho^{(t+1)} }{ \widetilde{\rho}^{\,(t+1)} } - \DRE{ \rho }{ \rho^{(t)} } \\
        &\leq \DRE{ \rho }{ \widetilde{\rho}^{\,(t+1)} } - \DRE{ \rho }{ \rho^{(t)}}.
    \end{align*}
    Thus,
    \begin{align}
        \label{ineq:for_mwu_bound_1}
        \DRE{ \rho }{ \rho^{(t+1)} } - \DRE{ \rho }{ \rho^{(t)} } \leq \DRE{ \rho }{ \widetilde{\rho}^{\,(t+1)} } - \DRE{ \rho }{ \rho^{(t)}}.
    \end{align}
    Next, we derive an upper bound for the right hand side $\DRE{ \rho }{ \widetilde{\rho}^{\,(t+1)} } - \DRE{ \rho }{ \rho^{(t)}}$. From the update step of the algorithm (Lines \ref{alg:start_update_step}-\ref{alg:end_update_step}), we have $\widetilde{\rho}^{\,(t+1)}_e = \rho^{(t)}_e \exp\left(\eta u(S^{(t)}, e)\right)$ for every $e\in \E$. Then,
    \begin{align*}
        \DRE{ \rho }{ \widetilde{\rho}^{\,(t+1)} }  - \DRE{ \rho }{ \rho^{(t)} } &= \sum_{e\in \E} \rho_e \ln \frac{ \rho^{(t)}_e }{ \widetilde{\rho}^{\,(t+1)}_e} + \widetilde{\rho}^{\,(t+1)}_e - \rho^{(t)}_e \\
        &= \sum_{e\in \E} \rho_e \ln \exp\left( -\eta u(S^{(t)}, e) \right)  + \rho^{(t)}_e \left( \exp\left( \eta u(S^{(t)}, e)  \right) -1  \right) \\
        &\leq -\eta \sum_{e\in \E} \rho_e u(S^{(t)}, e)  \\
        &\phantom{\leq} + \sum_{e\in \E}\rho^{(t)}_e \left(  1 + \eta u(S^{(t)}, e) + \eta^2 \left(u(S^{(t)}, e)\right)^2  -1  \right) \\
        &=-\eta  U(S^{(t)}, \rho) + \eta U(S^{(t)}, \rho^{(t)}) + \eta^2 \sum_{e\in \E} \rho^{(t)}_e \left(u(S^{(t)}, e)\right)^2,
    \end{align*}
    where we used the inequality $e^{x} \leq 1+x+x^2$ for $x\in [-1,1]$. Hence,
    \begin{align}
        \label{ineq:for_mwu_bound_2}
        \DRE{ \rho }{ \widetilde{\rho}^{\,(t+1)} } - \DRE{ \rho }{ \rho^{(t)} } \leq -\eta  U(S^{(t)}, \rho) + \eta U(S^{(t)}, \rho^{(t)}) + \eta^2 \sum_{e\in \E} \rho^{(t)}_e \left(u(S^{(t)}, e)\right)^2.
    \end{align}
    Combining inequalities \eqref{ineq:for_mwu_bound_1} and \eqref{ineq:for_mwu_bound_2}, we obtain
    \begin{align*}
        \DRE{ \rho }{ \rho^{(t+1)} } - \DRE{ \rho }{ \rho^{(t)} } \leq -\eta  U(S^{(t)}, \rho) + \eta U(S^{(t)}, \rho^{(t)}) + \eta^2 \sum_{e\in \E} \rho^{(t)}_e \left(u(S^{(t)}, e)\right)^2.
    \end{align*}
    Summing over $t$, rearranging terms and using that $\DRE{ \rho }{ \rho^{(\tau+1)} } \geq 0$, we get
    \begin{align*}
        \eta \sum_{t=1}^{\tau} U(S^{(t)}, \rho^{(t)}) \geq \eta  \sum_{t=1}^{\tau} U(S^{(t)}, \rho) - \DRE{ \rho }{ \rho^{(1)} } - \eta^2 \sum_{t=1}^{\tau}\sum_{e\in \E} \rho^{(t)}_e \left(u(S^{(t)}, e)\right)^2.
    \end{align*}
    Finally, since $\rho\in \ThetaA$ is arbitrary, we divide by $\eta>0$ to conclude the result.
   \hfill\Halmos
\end{proof}


The next lemma provides an upper bound for the unnormalized relative entropy between an arbitrary vector in $\ThetaA$ and the initial marginal attack strategy of Algorithm~\ref{alg:MWU}.

\begin{lemma}
    \label{lem:upper_bound_DKL}
    Let $\rho^{(1)}\in \ThetaA$ given by $\rho^{(1)}_{e} \coloneqq \rA / m$ for every $e\in \E$. Then, 
    \begin{align*}
        \DRE{ \rho }{ \rho^{(1)} } \leq  \rA \max\left\{ \ln   \frac{m}{\rA}, 1\right\}, \quad \forall\, \rho\in \ThetaA.
    \end{align*}    
\end{lemma}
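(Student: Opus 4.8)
The plan is to exploit the separability and convexity of the unnormalized relative entropy as a function of $\rho$, so that the bound reduces to a one-dimensional estimate on each coordinate followed by an elementary linear optimization. Writing $q \coloneqq \rA/m$ and using $\rho^{(1)}_e = q$, the objective splits as $\DRE{\rho}{\rho^{(1)}} = \sum_{e\in\E} g(\rho_e)$, where $g(x)\coloneqq x\ln(x/q) + q - x$ on $[0,1]$, interpreted with the convention $0\ln 0 = 0$ so that $g$ extends continuously with $g(0)=q$. First I would record the relevant one-variable facts: $g$ is strictly convex, since $g''(x)=1/x>0$, and its endpoint values are $g(0)=\rA/m$ and $g(1)=\ln(m/\rA)+\rA/m-1$.

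The key step is to replace $g$ by its chord on $[0,1]$. By convexity, $g(x)\le (1-x)g(0) + x\,g(1)$ for all $x\in[0,1]$, so summing over $e$ gives
\begin{align*}
    \DRE{\rho}{\rho^{(1)}} \;\le\; m\,g(0) + \bigl(g(1)-g(0)\bigr)\sum_{e\in\E}\rho_e \;=\; \rA + \bigl(\ln(m/\rA)-1\bigr)\sum_{e\in\E}\rho_e,
\end{align*}
where I used $m\,g(0)=\rA$ and $g(1)-g(0)=\ln(m/\rA)-1$. The right-hand side is affine in the single quantity $\sum_{e\in\E}\rho_e$, which ranges over $[0,\rA]$ for $\rho\in\ThetaA$, so its maximum is attained at an endpoint of this interval.

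It then remains to read off the two cases according to the sign of the slope $\ln(m/\rA)-1$. If $\ln(m/\rA)\ge 1$, the maximum occurs at $\sum_{e\in\E}\rho_e=\rA$, yielding $\rA + (\ln(m/\rA)-1)\rA = \rA\ln(m/\rA)$; if $\ln(m/\rA)<1$, it occurs at $\sum_{e\in\E}\rho_e=0$, yielding $\rA$. In both cases the value equals $\rA\max\{\ln(m/\rA),1\}$, which gives the claimed bound (and shows it is in fact tight, attained at $0/1$ vectors of $\ThetaA$).

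The argument involves no real obstacle: the only points demanding care are the boundary convention $0\ln 0=0$ (needed so that $g(0)=q$ and the chord bound is valid up to $x=0$) and the correct handling of the $\max$ via the sign of the chord's slope. An equivalent route, should one prefer to avoid the chord inequality, is to observe that $\sum_{e\in\E} g(\rho_e)$ is convex and hence maximized over the polytope at a vertex of $\ThetaA$; since $\rA\in\Z_{>0}$, these vertices are precisely the $0/1$ vectors with at most $\rA$ ones, and evaluating $k\,g(1)+(m-k)\,g(0)$ over $k\in\{0,\dots,\rA\}$ reproduces the same two endpoint values.
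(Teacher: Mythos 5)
Your proof is correct. It is close in spirit to the paper's argument---both rest on convexity of $\DRE{\cdot}{\rho^{(1)}}$---but the execution differs in a way worth noting. The paper proves convexity of the full function on $\ThetaA$, invokes the fact that $\ThetaA$ is the convex hull of the characteristic vectors $\mathds{1}_T$ for $T\in\AA$ (a result it cites separately), evaluates $\DRE{\mathds{1}_T}{\rho^{(1)}} = |T|\ln(m/\rA) - |T| + \rA$ at each such vertex, and then extends the bound by Jensen's inequality. Your route works coordinate-wise: the chord bound $g(x)\le(1-x)g(0)+x\,g(1)$ for the convex one-variable summand reduces everything to an affine function of $\sum_{e}\rho_e$, which is then maximized over $[0,\rA]$ by inspecting the sign of the slope $\ln(m/\rA)-1$. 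What your version buys is self-containedness---you never need the vertex characterization of $\ThetaA$, only the defining inequalities $\rho\in[0,1]^{\E}$ and $\sum_e\rho_e\le\rA$---and the same case split on $\ln(m/\rA)\gtrless 1$ emerges automatically from the linear optimization rather than from bounding $|T|\ln(m/\rA)-|T|+\rA$ by hand. Your closing remark correctly identifies the vertex-based alternative, which is precisely the paper's proof. The only points requiring care, the convention $0\ln 0=0$ so that $g$ is continuous and convex on the closed interval $[0,1]$, and the endpoint analysis of the affine majorant, are both handled explicitly.
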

\begin{proof}{Proof of Lemma~\ref{lem:upper_bound_DKL}.}
    The convexity of the function $x \to x\ln(x)$ in $\R_{\geq 0}$---where we let $0\ln(0) \coloneqq 0$--- implies that the function $\DRE{ \cdot }{ \rho^{(1)} } = \sum_{e\in \E} \left( \rho_{e} \ln \left( \frac{ m}{\rA} \rho_e \right) + \frac{\rA}{m} - \rho_e \right)$ is convex in $\ThetaA$, which is a convex set. Furthermore, $\ThetaA$ is the convex hull of the characteristic vectors of the sets in $\AA$ (see, for example, \citet{bahamondes2022network}). Let $T\in \AA$, and $\mathds{1}_{T} \in \{0,1\}^{\E}$ be its characteristic vector, such that for every $e\in \E$, $(\mathds{1}_{T})_e = 1$ if $e\in T$ and $(\mathds{1}_{T})_e = 0$ otherwise. Then,
    \begin{align*}
        \DRE{ \mathds{1}_{T} }{ \rho^{(1)} } = \sum_{e\in T} \left( \ln \left( \frac{ m}{\rA}  \right) + \frac{\rA}{m} - 1 \right) + \sum_{e \notin T} \frac{\rA}{m} = |T|\ln \left( \frac{ m}{\rA}  \right) - |T| + \rA.
    \end{align*}
    The last term is upper bounded by $\rA$ if $\ln(m/\rA) <1$, and by $\rA \ln(m/\rA)$ otherwise. Since $T$ is arbitrary, we obtain
    \begin{align}
        \label{eq:rel_entropy_characteristic_attack_plan}
        \forall\,T \in \AA,\quad \DRE{ \mathds{1}_{T} }{ \rho^{(1)} } \leq \rA \max\left\{ \ln   \frac{m}{\rA}, 1\right\}.
    \end{align}
    Now, let $\rho \in \ThetaA$. Then, $\rho = \sum_{i=1}^{k} \lambda_i \mathds{1}_{T_{i}}$ for some sets $T_1,\ldots,T_k\in \AA$ and scalars $\lambda_1,\cdots,\lambda_k \in [0,1]$ such that $\sum_{i=1}^{k} \lambda_i = 1$. From the convexity of $\DRE{ \cdot }{ \rho^{(1)} }$, it follows that
    \begin{align*}
        \DRE{ \rho }{ \rho^{(1)} } &= \DRE{ \sum_{i=1}^{k} \lambda_i \mathds{1}_{T_{i}} }{ \rho^{(1)} } \leq \sum_{i=1}^{k} \lambda_i \DRE{ \mathds{1}_{T_i} }{ \rho^{(1)} } \overset{\eqref{eq:rel_entropy_characteristic_attack_plan}}{\leq} \rA \max\left\{ \ln   \frac{m}{\rA}, 1\right\},
    \end{align*}
    which concludes the proof.
   \hfill\Halmos
\end{proof}

Now we are ready to prove Theorem \ref{thm:MWU_epsilon_NE}.

\section*{Proof of Theorem~\ref{thm:MWU_epsilon_NE}}
    From Lemma~\ref{lem:MWU_guarantee}, after $\tau$ iterations of Algorithm~\ref{alg:MWU}, we have
    \begin{align*}
       \sum_{t=1}^{\tau} U(S^{(t)}, \rho^{(t)}) \geq  \max_{\rho \in \ThetaA} \left\{ \sum_{t=1}^{\tau} U(S^{(t)}, \rho) - \frac{1}{\eta}\DRE{ \rho }{ \rho^{(1)} } \right\} - \eta \sum_{t=1}^{\tau} \sum_{e\in \E} \rho^{(t)}_e \left(u(S^{(t)}, e)\right)^2.
    \end{align*}
    Using Lemma~\ref{lem:upper_bound_DKL} to bound the term $\DRE{ \rho }{ \rho^{(1)}}$, and the facts that $u(S^{(t)},e) \in [0,1]$ and $\rho^{(t)} \in \ThetaA$, we obtain
    \begin{align*}
       \frac{1}{\tau}\sum_{t=1}^{\tau} U(S^{(t)}, \rho^{(t)}) \geq  \underset{\rho \in \ThetaA}{\max}\, \frac{1}{\tau} \sum_{t=1}^{\tau} U(S^{(t)}, \rho) - \frac{1}{\eta \tau} \rA \max\left\{ \ln   \frac{m}{\rA}, 1\right\}  - \eta \rA.
    \end{align*}
    Therefore,
    \begin{align}
        \label{ineq:regret_bound_mwu}
        \underset{\rho \in \ThetaA}{\max}\,  \frac{1}{\tau}\sum_{t=1}^{\tau} U(S^{(t)}, \rho) - \frac{1}{\tau}\sum_{t=1}^{\tau} U(S^{(t)}, \rho^{(t)}) \leq \frac{1}{\eta\tau} \rA \max\left\{ \ln   \frac{m}{\rA}, 1\right\}  + \eta  \rA.
    \end{align}
    The right hand side in \eqref{ineq:regret_bound_mwu} is minimized when $\eta = \sqrt{\max\left\{ \ln  (m /\rA), 1\right\} / \tau}$, which yields the bound 
    \begin{align*}
        \underset{\rho \in \ThetaA}{\max}\, \frac{1}{\tau} \sum_{t=1}^{\tau} U(S^{(t)}, \rho) - \frac{1}{\tau}\sum_{t=1}^{\tau} U(S^{(t)}, \rho^{(t)}) \leq 2 \rA \sqrt{  \frac{ \max\left\{ \ln   \frac{m}{\rA}, 1\right\} }{\tau} }.
    \end{align*}
    Thus, the right hand side in \eqref{ineq:regret_bound_mwu} is at most $\varepsilon$ if 
    \begin{align*}
        \tau \geq \frac{ 4\rA^2  \max\left\{ \ln   \frac{m}{\rA}, 1\right\} }{ \varepsilon^2}.
    \end{align*}
    Thus, after $\tau \geq 4\rA^2 \max\left\{ \ln  (m /\rA), 1\right\} / \varepsilon^2$ iterations, we obtain the following lower bound:
    \begin{align}
        \label{ineq:avg_lower_bound_mwu}
        \underset{\rho \in \ThetaA}{\max}\, \frac{1}{\tau} \sum_{t=1}^{\tau} U(S^{(t)}, \rho) - \varepsilon \leq \frac{1}{\tau}\sum_{t=1}^{\tau} U(S^{(t)}, \rho^{(t)}).
    \end{align}
    On the other hand, since $S^{(t)}$ is an $\alpha$-approximation of D's best response against $\rho^{(t)}$, it follows that
    \begin{align*}
        \frac{1}{\tau} \sum_{t=1}^{\tau} U(S^{(t)}, \rho^{(t)}) \leq \frac{1}{\tau}  \sum_{t=1}^{\tau} \alpha\, U(S, \rho^{(t)}), \qquad \forall\, S\in \AD.
    \end{align*}
    Then, by linearity of the expectation, we deduce that
    \begin{align}
        \label{ineq:avg_upper_bound_mwu}
        \frac{1}{\tau} \sum_{t=1}^{\tau} U(S^{(t)}, \rho^{(t)}) \leq \alpha \min_{\sigma \in \DeltaD} \frac{1}{\tau}  \sum_{t=1}^{\tau}  U(\sigma, \rho^{(t)}).
    \end{align}
    Thus, combining \eqref{ineq:avg_lower_bound_mwu} and \eqref{ineq:avg_upper_bound_mwu}, after $\tau \geq 4\rA^2 \max\left\{ \ln  (m /\rA), 1\right\} / \varepsilon^2$ iterations we obtain
    \begin{align}
        \label{ineq:avg_bound_mwu}
         \underset{\rho \in \ThetaA}{\max}\, \frac{1}{\tau} \sum_{t=1}^{\tau} U(S^{(t)}, \rho) - \varepsilon \leq \frac{1}{\tau}\sum_{t=1}^{\tau} U(S^{(t)}, \rho^{(t)}) \leq \alpha  \min_{\sigma \in \DeltaD} \frac{1}{\tau}  \sum_{t=1}^{\tau}  U(\sigma, \rho^{(t)}).
    \end{align}
    Next, using that $\sigmaDHat \coloneqq \frac{1}{\tau}\sum_{t=1}^{\tau} \mathds{1}_{S^{(t)}}$ and $\rhoAHat \coloneqq \frac{1}{\tau}\sum_{t=1}^{\tau} \rho^{(t)}$, we have the following identities:
    \begin{align}
        \frac{1}{\tau} \sum_{t=1}^{\tau} U(S^{(t)}, \rho) &= \sum_{S\in \AD} \frac{1}{\tau}  \sum_{t=1}^{\tau} \mathds{1}_{S^{(t)}}(S) U(S, \rho)  = \sum_{S\in \AD} \sigmaDHat_{S} \, U(S, \rho) = U(\sigmaDHat, \rho), \hspace{0.5em} &\forall \, \rho \in \ThetaA. \label{eq:mwu_identity_1} \\
        \frac{1}{\tau}  \sum_{t=1}^{\tau}  U(\sigma, \rho^{(t)}) &=  \sum_{e\in \E} \frac{1}{\tau} \sum_{t=1}^{\tau}  \rho^{(t)}_e U(\sigma, e) = \sum_{e\in \E} \rhoAHat_e \, U(\sigma, e) = U(\sigma, \rhoAHat), \quad &\forall\, \sigma \in \DeltaD. \label{eq:mwu_identity_2}
    \end{align}
    Let $(\sigmaDEq, \rhoAEq)$ be an equilibrium of the game $\Gamma$. Then,
    \begin{align}
        \underset{ \rho \in \ThetaA}{\max} \, U(\sigmaDHat, \rho) \geq U(\sigmaDHat, \rhoAEq) \geq U(\sigmaDEq, \rhoAEq), \label{ineq:for_bounds_value_of_game_mwu_1} \\
        \underset{ \sigma \in \DeltaD}{\min} \, U(\sigma, \rhoAHat) \leq U(\sigmaDEq, \rhoAHat) \leq U(\sigmaDEq, \rhoAEq). \label{ineq:for_bounds_value_of_game_mwu_2}
    \end{align}
    Therefore,
    \begin{align}
        \label{ineq:final_bounds_mwu}
         U(\sigmaDEq, \rhoAEq) - \varepsilon \overset{\eqref{ineq:for_bounds_value_of_game_mwu_1}}{\leq}  \underset{\rho \in \ThetaA}{\max}\, U(\sigmaDHat, \rho) - \varepsilon  \overset{\eqref{ineq:avg_bound_mwu}, \eqref{eq:mwu_identity_1}, \eqref{eq:mwu_identity_2}}{\leq}  \alpha  \min_{\sigmaD \in \DeltaD} U(\sigmaD, \rhoAHat)  \overset{\eqref{ineq:for_bounds_value_of_game_mwu_2}}{\leq}  \alpha U(\sigmaDEq, \rhoAEq),
    \end{align} 
    from which we deduce the following worst-case bounds for the expected number of undetected attacks for $\sigmaDHat$ and $\rhoAHat$ respectively:
    \begin{align*}
        U(\sigmaDEq, \rhoAEq) \leq \underset{\rhoA \in \ThetaA}{\max}\, U(\sigmaDHat, \rhoA) &\leq \alpha U(\sigmaDEq, \rhoAEq) + \varepsilon, \\
        \frac{1}{\alpha} \left( U(\sigmaDEq, \rhoAEq) - \varepsilon \right) \leq \underset{\sigmaD \in \DeltaD}{\min}\, U(\sigmaD, \rhoAHat) &\leq U(\sigmaDEq, \rhoAEq).
    \end{align*}  
    Finally, the bounds for the unilateral deviation from $(\sigmaDHat, \rhoAHat)$ follow from 
    \begin{align*}
        U(\sigmaDHat, \rhoAHat) \leq \underset{ \rho \in \ThetaA}{\max} \, U(\sigmaDHat, \rho) \overset{\eqref{ineq:final_bounds_mwu}}&{\leq} \alpha  \min_{\sigmaD \in \DeltaD} U(\sigmaD, \rhoAHat) + \varepsilon \leq \alpha U(\sigmaDHat, \rhoAHat) + \varepsilon, \\
        U(\sigmaDHat, \rhoAHat) \geq \min_{\sigmaD \in \DeltaD} U(\sigmaD, \rhoAHat) \overset{\eqref{ineq:final_bounds_mwu}}&{\geq} \frac{1}{\alpha} \left( \underset{\rho \in \ThetaA}{\max}\, U(\sigmaDHat, \rho) - \varepsilon \right) \geq \frac{1}{\alpha} \left( U(\sigmaDHat, \rhoAHat) - \varepsilon \right).
    \end{align*}
   \hfill\Halmos

\section*{Lemmas for Theorem~\ref{thm:projection_closed_form}}


\begin{lemma}
    \label{lem:minimizer_xlogx}
    For every $a>0$ and $b\in \R$, the real function
    \begin{align*}
        f(x) \coloneq 
        \begin{cases}
            x \ln \frac{x}{a} + a - (1-b)x & \text{if } x>0,\\
            a & \text{if } x=0,
        \end{cases}
    \end{align*}
    is strictly convex in $\R_{\geq 0}$ and has a unique minimizer, given by $y^*=\exp(-b)a$. Consequently, it has a unique minimizer in the interval $[0,1]$, given by $x^*=\min \{\exp(-b)a,1\}$.
\end{lemma}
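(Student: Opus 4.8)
The plan is to isolate the only nonlinear piece of $f$ and reduce the entire statement to the textbook strict convexity of the map $x \mapsto x\ln x$. First I would rewrite, for $x>0$, $x\ln\frac{x}{a} = x\ln x - (\ln a)\,x$, so that $f(x) = x\ln x + c\,x + a$ with $c \coloneqq -\ln a - (1-b)$ a constant independent of $x$. With the convention $0\ln 0 = 0$ this identity also holds at $x=0$, which shows that $f$ is continuous on $\R_{\geq 0}$ and agrees with the stated boundary value $f(0)=a$. Since $x\mapsto x\ln x$ is strictly convex on $\R_{\geq 0}$ and the remaining term $c\,x + a$ is affine, $f$ is strictly convex on $\R_{\geq 0}$ as a sum of a strictly convex and an affine function; strict convexity in turn guarantees that a minimizer, if one exists, is unique.

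Next I would locate the minimizer by differentiating on the open ray. For $x>0$ we have $f'(x) = \ln\frac{x}{a} + b$ and $f''(x) = 1/x > 0$, reconfirming strict convexity on $(0,\infty)$. Solving $f'(x)=0$ yields the unique stationary point $y^* = \exp(-b)\,a > 0$. To exclude the boundary point $x=0$ as a candidate minimizer, I would observe that $f'(x) = \ln\frac{x}{a}+b \to -\infty$ as $x\to 0^+$, so $f$ is strictly decreasing near $0$; combined with $f''>0$, this makes $y^*$ the global minimizer of $f$ over $\R_{\geq 0}$, establishing the second claim.

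For the constrained statement over $[0,1]$, strict convexity again yields a unique minimizer. If $y^*\le 1$, then $y^*\in[0,1]$, so the unconstrained minimizer is feasible and hence optimal. If $y^*>1$, then the global minimizer lies strictly to the right of $1$, and by strict convexity $f$ is strictly decreasing on $[0,y^*]\supseteq[0,1]$, so its minimum over $[0,1]$ is attained at the right endpoint $x=1$. In either case the constrained minimizer equals $x^* = \min\{\exp(-b)\,a,\,1\}$, as claimed.

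The only genuine subtlety, and the step deserving the most care, is the behavior at the nonsmooth point $x=0$: one must justify that $x\mapsto x\ln x$ is strictly convex on the \emph{closed} half-line $\R_{\geq 0}$ rather than merely on $(0,\infty)$, and that the minimum is not attained at $x=0$. The derivative blow-up $f'(x)\to-\infty$ as $x\to 0^+$ settles the latter cleanly, while the former follows from continuity at $0$ together with strict convexity on the interior. Everything else reduces to a routine one-variable calculus computation.
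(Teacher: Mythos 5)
Your proof is correct and follows essentially the same route as the paper's: compute $f'$ and $f''$ on $(0,\infty)$, conclude strict convexity from $f''(x)=1/x>0$, and solve $f'(y^*)=0$ to get $y^*=\exp(-b)a$ (your derivative $f'(x)=\ln(x/a)+b$ is the correct one; the paper's displayed $\ln(x/a)-b$ is a sign typo that does not affect its conclusion). Your additional care at the boundary $x=0$ and the explicit case analysis for the restriction to $[0,1]$ fill in details the paper leaves implicit, but they do not constitute a different argument.
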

\begin{proof}{Proof of Lemma~\ref{lem:minimizer_xlogx}.}
    The first and second derivatives of $f$ in $\big( 0, +\infty \big)$ are given by $f^{\prime}(x) = \ln (x/a) - b$ and $f^{\dprime}(x) = 1/x$, respectively. Therefore, $f$ is strictly convex and its global minimizer can be obtained by setting $f^{\prime}(y^{*}) = 0$, which yields $y^{*}=\exp(-b)a$. 
   \hfill\Halmos
\end{proof}

\begin{lemma}
    \label{lem:properties_of_g_k}
    Let $\widetilde{\rho} \in \R^{\E}_{>0}$, and let us sort $\E$ such that $\widetilde{\rho}_{e_1} \geq \cdots \geq \widetilde{\rho}_{e_{m}}$, breaking ties arbitrarily. Then:
    \begin{itemize}
        \item[--] The function $g: \{0,\ldots,m\} \to \R_{\geq 0}$ defined by $g(k) \coloneqq k + (1/\widetilde{\rho}_{e_{k}}) \sum_{j=k+1}^{m} \widetilde{\rho}_{e_{j}}$ (where we let $\widetilde{\rho}_{e_{0}} \coloneqq +\infty$, so $1 / \widetilde{\rho}_{e_{0}} \coloneqq 0$) is nondecreasing. Moreover, for every $k\in \{0,\ldots, m-1\}$, $g(k) < g(k+1)$ if and only if $\widetilde{\rho}_{e_{k}} > \widetilde{\rho}_{e_{k+1}}$.

        \item[--] Let $k^* \coloneqq \max \left\{ k\in \{0,\ldots,\rA\}:\, g(k) \leq \rA \right\}$. If $k^{*} \in \{0,\ldots,m-1\}$, then $\widetilde{\rho}_{e_{k^*}} > \widetilde{\rho}_{e_{k^*+1}}$.
    \end{itemize}    
\end{lemma}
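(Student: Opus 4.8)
The plan is to dispatch the two bullet points in turn: the first by a direct computation of the forward difference $g(k+1)-g(k)$, and the second by a short argument by contradiction that leans on the first. Throughout I write $w_j \coloneqq \widetilde{\rho}_{e_j}$, so that $w_1 \geq \cdots \geq w_m > 0$ and $w_0 = +\infty$, and I abbreviate the tail sum by $S_{k+1} \coloneqq \sum_{j=k+1}^{m} w_j$, which is strictly positive whenever $k \leq m-1$.

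For the monotonicity claim, I would compute, for $k\in\{0,\ldots,m-1\}$,
\begin{align*}
g(k+1) - g(k) = 1 + \frac{S_{k+1}-w_{k+1}}{w_{k+1}} - \frac{S_{k+1}}{w_k} = S_{k+1}\left(\frac{1}{w_{k+1}} - \frac{1}{w_k}\right),
\end{align*}
using $\sum_{j=k+2}^m w_j = S_{k+1}-w_{k+1}$ and the convention $1/w_0 = 0$. Since $0 < w_{k+1} \leq w_k$ gives $1/w_{k+1} \geq 1/w_k$, and $S_{k+1} > 0$, this difference is nonnegative, so $g$ is nondecreasing. Moreover, because $S_{k+1} > 0$, the difference vanishes precisely when $1/w_{k+1} = 1/w_k$, i.e.\ when $w_{k+1} = w_k$; as the sorted values satisfy $w_k \geq w_{k+1}$, this yields $g(k) < g(k+1)$ if and only if $w_k > w_{k+1}$, which is exactly the stated equivalence (for $k=0$ it reads $g(0)=0 < g(1)$, consistent with $w_0 = +\infty > w_1$).

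For the second bullet, assume $k^* \in \{0,\ldots,m-1\}$, noting $k^*$ is well defined since $g(0)=0\leq\rA$. I would first argue $k^* < \rA$. Indeed, if $k^* = \rA$, then $\rA = k^* \leq m-1$, so $S_{\rA+1} = \sum_{j=\rA+1}^m w_j \geq w_m > 0$, whence $g(\rA) = \rA + S_{\rA+1}/w_\rA > \rA$, contradicting $g(k^*)\leq\rA$. Hence $k^*+1 \leq \rA$. Now suppose, toward a contradiction, that $w_{k^*} = w_{k^*+1}$. By the first bullet this forces $g(k^*+1) = g(k^*) \leq \rA$, and since $k^*+1 \in \{0,\ldots,\rA\}$, the index $k^*+1$ belongs to the set defining $k^*$, contradicting its maximality. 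Therefore $w_{k^*} > w_{k^*+1}$, as claimed.

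The computation in the first bullet is routine; the one place demanding care is the edge case $k^* = \rA$ in the second bullet. The maximality of $k^*$ only directly precludes larger feasible indices up to $\rA$, so it is essential to first observe that $g(\rA) > \rA$ whenever $\rA \leq m-1$, because the remaining tail sum is then strictly positive. This is what pushes $k^*$ strictly below $\rA$ and places $k^*+1$ back inside the admissible range $\{0,\ldots,\rA\}$, allowing the maximality argument to deliver the strict inequality.
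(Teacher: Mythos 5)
Your proof is correct and follows essentially the same route as the paper: a direct computation showing $g(k+1)-g(k)=\left(\sum_{j=k+1}^{m}\widetilde{\rho}_{e_j}\right)\left(1/\widetilde{\rho}_{e_{k+1}}-1/\widetilde{\rho}_{e_k}\right)$ for the first bullet, and the maximality of $k^*$ combined with the strictness criterion for the second. Your explicit treatment of the edge case $k^*=\rA$ is a nice touch; the paper simply asserts $g(k^*)\leq \rA < g(k^*+1)$ "by definition," which in that corner case implicitly relies on $g(k^*+1)\geq k^*+1 > \rA$, whereas you rule the case out directly.
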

\begin{proof}{Proof of Lemma~\ref{lem:properties_of_g_k}.}
    For every $k\in \{0,\ldots,m-1\}$, it holds that
    \begin{align*}
        g(k+1) = k+1 + \frac{1}{\widetilde{\rho}_{e_{k+1}}} \sum_{j=k+2}^{m} \widetilde{\rho}_{e_{j}} = k + \frac{1}{\widetilde{\rho}_{e_{k+1}}} \sum_{j=k+1}^{m} \widetilde{\rho}_{e_{j}} \geq k + \frac{1}{\widetilde{\rho}_{e_{k}}} \sum_{j=k+1}^{m} \widetilde{\rho}_{e_{j}} = g(k).
    \end{align*}
    Therefore, $g(k) \leq g(k+1)$. Furthermore, the inequality above is strict if an only if  $\widetilde{\rho}_{e_{k}} > \widetilde{\rho}_{e_{k+1}}$. Next, let us assume that $k^{*} \in \{0,\ldots,m-1\}$. By definition, $k^*$ satisfies $g(k^{*}) \leq \rA < g(k^{*}+1)$; hence $\widetilde{\rho}_{e_{k^*}} > \widetilde{\rho}_{e_{k^*+1}}$.
   \hfill\Halmos
\end{proof}

The following lemma is standard for results involving Bregman projections (see, for example \citet{yasutake2011online,suehiro2012online}), and shows that such projections preserve the relative ordering of the entries within the projected vector. 

\begin{lemma}
\label{lem:projection_preserves_order}
Let $\widetilde{\rho} \in \R^{\E}_{>0}$, and let $\rho^{*} \in \R^{\E}$ be its projection with respect to the unnormalized relative entropy. For every $e, e^{\prime} \in \E$, if $\widetilde{\rho}_e \geq \widetilde{\rho}_{e^{\prime}}$, then $\rho^{*}_{e} \geq  \rho^{*}_{e^{\prime}}$.
\end{lemma}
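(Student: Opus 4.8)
The plan is to use a classic exchange (or ``swapping'') argument that exploits two structural features of the projection problem: the objective $\DRE{\cdot}{\widetilde{\rho}}$ is \emph{separable} across the coordinates, and the feasible region $\ThetaA$ is \emph{invariant under permutations} of the coordinates, since both the box constraints $\rho \in [0,1]^{\E}$ and the budget constraint $\sum_{e\in\E}\rho_e \leq \rA$ are symmetric. I would argue by contradiction: suppose there exist $e,e'\in\E$ with $\widetilde{\rho}_e \geq \widetilde{\rho}_{e'}$ but $\rho^{*}_e < \rho^{*}_{e'}$.

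First, I would construct the competitor $\rho' \in \R^{\E}$ obtained from $\rho^{*}$ by swapping its $e$- and $e'$-entries, namely $\rho'_e \coloneqq \rho^{*}_{e'}$, $\rho'_{e'} \coloneqq \rho^{*}_e$, and $\rho'_f \coloneqq \rho^{*}_f$ for $f \notin \{e,e'\}$. A swap leaves both the multiset of coordinate values and the coordinate sum unchanged, so $\rho'$ satisfies the same box and budget constraints as $\rho^{*}$; hence $\rho' \in \ThetaA$.

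Next---and this is the heart of the argument---I would compute the resulting change in the objective. Writing $h(\rho,t) \coloneqq \rho\ln(\rho/t) + t - \rho$ for the per-coordinate summand of $\DRE{\cdot}{\widetilde{\rho}}$ (with the usual convention $0\ln 0 = 0$), only the $e$- and $e'$-terms differ between $\rho^{*}$ and $\rho'$. The affine parts $t-\rho$ cancel in the difference, and collecting the surviving $\rho\ln\rho$ terms yields the clean identity
\[
    \DRE{\rho^{*}}{\widetilde{\rho}} - \DRE{\rho'}{\widetilde{\rho}} = \left( \rho^{*}_{e'} - \rho^{*}_e \right)\ln\frac{\widetilde{\rho}_e}{\widetilde{\rho}_{e'}}.
\]
Under the contradiction hypothesis the first factor is strictly positive and the second is nonnegative (because $\widetilde{\rho}_e \geq \widetilde{\rho}_{e'} > 0$), so $\DRE{\rho'}{\widetilde{\rho}} \leq \DRE{\rho^{*}}{\widetilde{\rho}}$.

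Finally, I would invoke uniqueness of the projection---guaranteed by the strict convexity of $\DRE{\cdot}{\widetilde{\rho}}$ noted before Theorem~\ref{thm:projection_closed_form}. Since $\rho'$ is feasible and attains an objective value no larger than that of the minimizer $\rho^{*}$, it must coincide with $\rho^{*}$; but $\rho'_e = \rho^{*}_{e'} \neq \rho^{*}_e = \rho'_{e'}$, so $\rho' \neq \rho^{*}$, a contradiction. I expect the only real obstacle to be the sign computation in the displayed identity---verifying the cancellation of the affine terms and the factorization into $(\rho^{*}_{e'}-\rho^{*}_e)\ln(\widetilde{\rho}_e/\widetilde{\rho}_{e'})$---together with a quick check that the convention $0\ln 0 = 0$ does not invalidate this identity when $\rho^{*}_e = 0$, which it does not since $0\cdot\ln(\cdot)=0$ preserves the formula.
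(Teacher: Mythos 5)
Your proposal is correct and follows essentially the same route as the paper's proof: a coordinate-swap exchange argument, the identity $\DRE{\rho^{*}}{\widetilde{\rho}\,} - \DRE{\rho^{\prime}}{\widetilde{\rho}\,} = (\rho^{*}_{e^{\prime}} - \rho^{*}_{e})\ln(\widetilde{\rho}_e/\widetilde{\rho}_{e^{\prime}}) \geq 0$, and a contradiction with the uniqueness of the projection guaranteed by strict convexity. No gaps.
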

\begin{proof}{Proof of Lemma~\ref{lem:projection_preserves_order}.}
    Suppose there exist $e, e^{\prime} \in \E$ with $\widetilde{\rho}_e \geq \widetilde{\rho}_{e^{\prime}}$ and $\rho^{*}_{e} <  \rho^{*}_{e^{\prime}}$. Let ${\rho}^{\prime} \in \R^{\E}$ be defined as
    \begin{align*}
        {\rho}^{\prime}_{f} \coloneqq \begin{cases}
        \rho^{*}_f & \text{if } f \neq e, f \neq e^{\prime},\\
        \rho^{*}_{e^{\prime}} & \text{if } f = e, \\
        \rho^{*}_e & \text{if } f = e^{\prime},
        \end{cases} \quad \forall\, f\in \E.
    \end{align*} 
    In other words, ${\rho}^{\prime}$ is obtained from $\rho^{*}$ by exchanging the coordinates $e$ and $e^{\prime}$. Then,
    \begin{align*}
        \DRE{ \rho^{*} }{ \widetilde{\rho}\, } - \DRE{ {\rho}^{\prime}\, }{ \widetilde{\rho}\, } &= \rho^{*}_{e^{\prime}} \ln \frac{\widetilde{\rho}_e }{ \widetilde{\rho}_{e^{\prime}} } + \rho^{*}_e \ln \frac{ \widetilde{\rho}_{e^{\prime}} }{ \widetilde{\rho}_{e} }  = \left( \rho^{*}_{e^{\prime}} - \rho^{*}_{e} \right) \ln \frac{ \widetilde{\rho}_e }{ \widetilde{\rho}_{e^{\prime}} } \geq  0.
    \end{align*}
    Hence, ${\rho}^{\prime} \in \ThetaA$ and $\DRE{ \rho^{*} }{ \widetilde{\rho}\, } \geq  \DRE{ {\rho}^{\prime}\, }{ \widetilde{\rho}\, }$. Since $\rho^{*}$ minimizes $\DRE{ \cdot }{ \widetilde{\rho}\, }$ over $\ThetaA$, we must have $\DRE{ \rho^{*} }{ \widetilde{\rho}\, } =  \DRE{ {\rho}^{\prime}\, }{ \widetilde{\rho}\, }$, and thus ${\rho}^{\prime} \neq \rho^{*}$ is another minimizer, contradicting the uniqueness of the projection.
   \hfill\Halmos
\end{proof}


\begin{lemma}
    \label{lemma:for_projection_case_2}
    Let $\widetilde{\rho} \in \R^{\E}_{>0}$ satisfying $\sum_{e\in \E} \min\{ \widetilde{\rho}_e, 1\} > \rA$, and let $\rho^{*}\in \R^{\E}$ be its projection into $\ThetaA$ with respect to the unnormalized relative entropy. Then, $\rho^*$ satisfies $\sum_{e\in \E}\rho^{*}_e = \rA$.
\end{lemma}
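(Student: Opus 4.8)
The plan is to argue by contradiction: I would assume $\sum_{e\in\E}\rho^*_e < \rA$ and then construct a feasible vector in $\ThetaA$ with strictly smaller unnormalized relative entropy to $\widetilde{\rho}$, contradicting the optimality (and uniqueness) of the projection $\rho^*$. The key preliminary observation is that $\DRE{\cdot}{\widetilde{\rho}}$ is separable: writing $f_e(x) \coloneqq x\ln(x/\widetilde{\rho}_e) + \widetilde{\rho}_e - x$ for $x>0$ and $f_e(0)\coloneqq\widetilde{\rho}_e$, we have $\DRE{\rho}{\widetilde{\rho}} = \sum_{e\in\E} f_e(\rho_e)$. By Lemma~\ref{lem:minimizer_xlogx} (applied with $a=\widetilde{\rho}_e$ and $b=0$), each $f_e$ is strictly convex on $\R_{\ge 0}$ with unique minimizer $\widetilde{\rho}_e$; in particular $f_e$ is strictly decreasing on the interval $[0,\widetilde{\rho}_e)$, a fact I would verify directly since $f_e(x) = a + x(\ln(x/a)-1) < a = f_e(0)$ for $x\in(0,a)$ covers the boundary point $x=0$.

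Next I would exhibit a coordinate at which $\rho^*$ falls strictly short of its unconstrained box-optimal value $\min\{\widetilde{\rho}_e,1\}$. Suppose instead that $\rho^*_e \ge \min\{\widetilde{\rho}_e,1\}$ for every $e\in\E$. Then, summing coordinatewise and using the hypothesis of the lemma, $\sum_{e\in\E}\rho^*_e \ge \sum_{e\in\E}\min\{\widetilde{\rho}_e,1\} > \rA$, which contradicts the assumed slack $\sum_{e\in\E}\rho^*_e < \rA$. Hence there exists $e_0\in\E$ with $\rho^*_{e_0} < \min\{\widetilde{\rho}_{e_0},1\} \le \widetilde{\rho}_{e_0}$.

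Finally I would perturb this single coordinate upward. I would pick $\delta>0$ small enough that both $\rho^*_{e_0}+\delta \le \min\{\widetilde{\rho}_{e_0},1\}\le 1$ and $\sum_{e\in\E}\rho^*_e + \delta \le \rA$ hold simultaneously; both are achievable because $\rho^*_{e_0} < \min\{\widetilde{\rho}_{e_0},1\}$ and $\sum_{e\in\E}\rho^*_e < \rA$. Defining $\rho'$ to agree with $\rho^*$ except $\rho'_{e_0}=\rho^*_{e_0}+\delta$, one gets $\rho'\in[0,1]^{\E}$ with $\sum_{e\in\E}\rho'_e\le\rA$, so $\rho'\in\ThetaA$. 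Since $\rho^*_{e_0}$ and $\rho^*_{e_0}+\delta$ both lie in $[0,\widetilde{\rho}_{e_0})$ and $f_{e_0}$ is strictly decreasing there, $f_{e_0}(\rho'_{e_0}) < f_{e_0}(\rho^*_{e_0})$ while all other summands are unchanged, yielding $\DRE{\rho'}{\widetilde{\rho}} < \DRE{\rho^*}{\widetilde{\rho}}$. This contradicts the optimality of the projection, so the slack assumption is untenable and $\sum_{e\in\E}\rho^*_e = \rA$.

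The main point to get right, rather than a serious obstacle, is the simultaneous feasibility of the perturbation against \emph{both} the box constraint and the budget constraint, together with the boundary behavior of $f_{e_0}$ at $0$: when $\rho^*_{e_0}=0$ one must still have a strict decrease, which holds because $f_e'(x)=\ln(x/\widetilde{\rho}_e)\to-\infty$ as $x\to 0^+$, so $f_{e_0}$ is genuinely strictly decreasing on the closed-at-$0$ interval $[0,\widetilde{\rho}_{e_0})$. The nonsmoothness of $\DRE{\cdot}{\widetilde{\rho}}$ at the origin therefore causes no difficulty, since the perturbation moves the coordinate away from zero.
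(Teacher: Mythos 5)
Your proposal is correct and follows essentially the same argument as the paper: assume slack in the budget constraint, locate a coordinate with $\rho^{*}_{e^{\prime}} < \min\{\widetilde{\rho}_{e^{\prime}},1\}$ by the same counting contrapositive, perturb that single coordinate upward while staying in $\ThetaA$, and invoke the strict monotonicity of the separable entropy term (Lemma~\ref{lem:minimizer_xlogx}) to contradict optimality. Your extra care with the boundary behavior at $\rho^{*}_{e^{\prime}}=0$ is a welcome refinement but does not change the substance.
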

\begin{proof}{Proof of Lemma~\ref{lemma:for_projection_case_2}.}
    By definition of the projection, we have $\rho^{*}\in \ThetaA$. In particular, $\rho^{*}$ satisfies $\sum_{e\in \E}\rho^{*}_e \leq \rA$. Let us uppose that $\sum_{e\in \E}\rho^{*}_e < \rA$. Then, there exists $e^{\prime}\in \E$ such that $\rho^{*}_{e^{\prime}} < \min\{ \widetilde{\rho}_{e^{\prime}}, 1\}$; otherwise we would have $\sum_{e\in \E}\rho^{*}_e \geq \sum_{e\in \E} \min\{ \widetilde{\rho}_{e}, 1\} > \rA$, which contradicts that $\rho^{*} \in \ThetaA$. Consequently, there exists $\varepsilon>0$ such that $\rho^{*}_{e^{\prime}} + \varepsilon < \min\{ \widetilde{\rho}_{e^{\prime}}, 1\}$ and the vector $\rho^{\prime} \in \R^{\E}$ defined as
    \begin{align*}
        \rho^{\prime}_{e} \coloneqq \begin{cases}
        \rho^{*}_e  & \text{if } e \neq e^{\prime},\\
        \rho^{*}_e + \varepsilon & \text{if } e = e^{\prime},
        \end{cases}\quad \forall\, e\in \E, 
    \end{align*}
    belongs to $\ThetaA$. Then, it follows that
    \begin{align*}
        \DRE{ \rho^{*} }{ \widetilde{\rho}\, } \hspace{-0.1em} - \hspace{-0.1em}\DRE{ \rho^{\prime}\, }{ \widetilde{\rho}\, } &= \left( \rho^{*}_{e^{\prime}} \ln \frac{\rho^{*}_{e^{\prime}} }{ \widetilde{\rho}_{e^{\prime}} } + \widetilde{\rho}_{e^{\prime}} - \rho^{*}_{e^{\prime}} \right) \hspace{-0.1em} - \hspace{-0.1em} \left(  \left( \rho^{*}_{e^{\prime}} + \varepsilon \right) \ln \frac{\rho^{*}_{e^{\prime}} + \varepsilon }{ \widetilde{\rho}_{e^{\prime}} } + \widetilde{\rho}_{e^{\prime}} - \left( \rho^{*}_{e^{\prime}} + \varepsilon \right)  \right) \hspace{-0.1em} > \hspace{-0.1em} 0,
    \end{align*}
    where we used the fact that for every $a>0$, the function 
    \begin{align*}
        f(x) \coloneq 
        \begin{cases}
            x \ln \frac{x}{a} + a - x & \text{if } x>0,\\
            a & \text{if } x=0
        \end{cases}
    \end{align*}
    is strictly decreasing in the interval $[0,\min\{a, 1\}]$ (Lemma~\ref{lem:minimizer_xlogx}). Thus, $\DRE{ \rho^{*} }{ \widetilde{\rho}\, } > \DRE{ \rho^{\prime}\, }{ \widetilde{\rho}\, }$ and $\rho^{\prime} \in \ThetaA$, contradicting the optimality of the projection $\rho^{*}$.
   \hfill\Halmos
\end{proof}

Now we are ready to prove Theorem~\ref{thm:projection_closed_form}.

\section*{Proof of Theorem~\ref{thm:projection_closed_form}}

    We split the proof of Theorem~\ref{thm:projection_closed_form} into the two cases given in its statement.
    \begin{itemize}
        \item[1.] We first consider the case $\sum_{e\in \E} \min\{ \widetilde{\rho}_e, 1\} \leq \rA$. The vector $\rho^*\in \R^{\E}$ given by $\rho^{*}_e = \min\left\{\widetilde{\rho}_e,\, 1 \right\}$ for every $e\in \E$ satisfies $\rho^* \in \ThetaA$, since $\rho^*\in [0,1]^{\E}$ and $\sum_{e\in \E} \rho^{*}_e = \sum_{e\in \E} \min\{ \widetilde{\rho}_e, 1\} \allowbreak \leq \rA$. Then,
        \begin{align*}
            \DRE{ \rho }{ \widetilde{\rho}\, } &= \sum_{e \in \E } \left( \rho_e \ln \frac{ \rho_e }{ \widetilde{\rho}_e } + \widetilde{\rho}_e - \rho_e \right) \\
            &\geq \sum_{e \in \E} \left( \min \{ \widetilde{\rho}_e, 1\} \ln \frac{  \min \{ \widetilde{\rho}_e, 1\} }{ \widetilde{\rho}_e } + \widetilde{\rho}_e -  \min \{ \widetilde{\rho}_e, 1\} \right) = \DRE{ \rho^{*} }{ \widetilde{\rho}\, },
        \end{align*}
        where the inequality follows by Lemma \ref{lem:minimizer_xlogx}. Thus, $\rho^*$ minimizes $\DRE{ \cdot }{ \widetilde{\rho}\, }$ over $\ThetaA$ and therefore it is the projection of $\widetilde{\rho}$.
    
        \item[2.] We next consider the case $\sum_{e\in \E} \min\{ \widetilde{\rho}_e, 1\} > \rA$. First, we note that in this case it holds that $\rA < m$; otherwise, we would have $m = \rA < \sum_{e\in \E} \min\{ \widetilde{\rho}_e, 1\} \leq m$, which is a contradiction. Moreover, from Lemma~\ref{lemma:for_projection_case_2}, the projection $\rho^{*}$ must satisfy $\sum_{e\in \E} \rho^{*}_e = \rA$. Thus, $\rho^{*}$ is the optimal solution of the following optimization problem:
        \begin{align}
            \label{pb:projection_problem_reformulated}
            \tag{P}
            \begin{aligned}
                \underset{\rho \in [0,1]^{\E}}{\min} \quad& \sum_{e \in \E} \left( \rho_e \ln \frac{ \rho_e }{ \widetilde{\rho}_e } + \widetilde{\rho}_e - \rho_e \right) &\\
                \text{s.t.} \quad& \sum_{e\in \E}\rho_e =  \rA,
            \end{aligned}
        \end{align}   
        where we let $0\log 0 \coloneqq 0$ to ensure the objective is continuous. We note that the optimal solution of \eqref{pb:projection_problem_reformulated} is unique as the objective is strictly convex. The Lagrangian function of \eqref{pb:projection_problem_reformulated} is given by
        \begin{align*}
            \mathcal{L}(\rho,\nu) &= \sum_{e \in \E} \left( \rho_e \ln \frac{ \rho_e }{ \widetilde{\rho}_e } + \widetilde{\rho}_e - \rho_e \right) + \nu \left( \sum_{e\in \E}\rho_e - \rA \right) \\
            &=\sum_{e \in \E} \left( \rho_e \ln \frac{ \rho_e }{ \widetilde{\rho}_e } + \widetilde{\rho}_e - (1-\nu)\rho_e \right), \qquad \forall\, (\rho,\nu) \in [0,1]^{\E} \times \R.
        \end{align*}
        The Lagrangian dual of \eqref{pb:projection_problem_reformulated} is the problem
        \begin{align}
            \tag{D}
            \label{pb:lagrangian_dual}
            \underset{\nu \in \R}{\max}\, \underset{\rho \in [0,1]^{\E}}{\min} \, \mathcal{L}(\rho,\nu).
        \end{align}
        Problem \eqref{pb:projection_problem_reformulated} is convex and satisfies the Slater condition (consider, for example, $\rho \in \R^{\E}$ given by $\rho_e \coloneqq \rA / m$ for every $e\in \E$) . Therefore, strong duality holds, and the optimal values of \eqref{pb:projection_problem_reformulated} and \eqref{pb:lagrangian_dual} are identical. Furthermore, if $\nu^{*}$ is an optimal solution of \eqref{pb:lagrangian_dual}, then $\rho^{*}$ minimizes $\mathcal{L}(\cdot,\nu^{*})$ over $[0,1]^{\E}$.  
    
        
        Given the value of $\nu^{*}$, we compute the minimizer of $\mathcal{L}(\cdot,\nu^{*})$ over $[0,1]^{\E}$ (\ie, $\rho^{*}$) analytically. Since $\mathcal{L}(\rho,\nu^{*})$ is separable over the variables $\rho_e$, the minimum is obtained by minimizing each real function $f_{e}(\rho_e) \coloneqq \rho_e \ln ( \rho_e / \widetilde{\rho}_e ) + \widetilde{\rho}_e - (1-\nu^{*})\rho_e$ over the interval $[0,1]$. This yields $\rho^{*}_e = \min \{\exp(-\nu^*)\widetilde{\rho}_e,1\}$ for every $e\in \E$ (Lemma \ref{lem:minimizer_xlogx}). Then, to determine the value of $\nu^{*}$, we use primal feasibility. From the constraint $\sum_{e\in \E} \rho^{*}_e=\rA$ in \eqref{pb:projection_problem_reformulated}, it follows that $\nu^*$ must satisfy
        \begin{align}
            \label{eq:for_nu_star}
            \sum_{e\in \E} \min \{\exp(-\nu^*)\widetilde{\rho}_e,1\} = \rA.
        \end{align}
    
        To solve equation \eqref{eq:for_nu_star}, let us sort the components in $\E$ such that $\widetilde{\rho}_{e_1} \geq \cdots \geq \widetilde{\rho}_{e_{m}}$. Then, by Lemma~\ref{lem:projection_preserves_order}, the projection $\rho^{*}$ satisfies $\rho^{*}_{e_1} \geq \cdots \geq \rho^{*}_{e_m}$. Let $k^*$ be the largest index in $\{1,\ldots,m\}$ such that $\rho^{*}_{e_{k^*}}=1$, if such index exists, and $k^*\coloneqq 0$ otherwise. We note that $k^* \in \{0,\ldots,\rA\}$, as the constraints $\rho^{*} \in [0,1]^{\E}$ and $\sum_{e\in \E} \rho^{*}_e=\rA$  prevent $\rho^{*}$ from having more than $\rA$ entries being equal to $1$.
        Then, equation \eqref{eq:for_nu_star} translates into $k^* + \exp(-\nu^*) \sum_{j=k^*+1}^{m} \widetilde{\rho}_{e_j} = \rA$,
        which yields
        \begin{align*}
            \exp(-\nu^*) = \frac{\rA - k^*}{ \sum_{j=k^*+1}^{m} \widetilde{\rho}_{e_j} }.
        \end{align*}
        Thus, we can write
        \begin{align}
            \label{eq:projection_closed_form}
            \rho^{*}_{e} = \begin{dcases}
                1 & \text{if }  e=e_i,\, i=1,\ldots,k^*,\\
                 \frac{\rA-k^*}{ \sum_{j=k^*+1}^{m} \widetilde{\rho}_{e_j}} \widetilde{\rho}_{e}& \text{if } e=e_i,\, i=k^*+1,\ldots,m.
            \end{dcases}
        \end{align}
        Next, it remains to determine the value of $k^*$. We observe that the following inequalities hold:
        \begin{align}
            \label{ineq:for_k_star}
            \frac{\rA-k^*}{ \sum_{j=k^*+1}^{m} \widetilde{\rho}_{e_{j}} } \widetilde{\rho}_{e_{k^*+1}} < 1  \leq \frac{\rA-k^*}{ \sum_{j=k^*+1}^{m} \widetilde{\rho}_{e_{j}} } \widetilde{\rho}_{e_{k^*}}.
        \end{align}
        Indeed, the first inequality in \eqref{ineq:for_k_star} is equivalent to $\rho^{*}_{e_{k^*+1}} < 1$, which follows by definition of $k^*$. On the other hand, since $\rho^ {*}_{e_{k^*}} = \min \{\exp(-\nu^*)\widetilde{\rho}_{e_{k^ *}},1\}$ and $\rho^ {*}_{e_{k^*}}=1$, we must have $1 \leq \exp(-\nu^*)\widetilde{\rho}_{e_{k^ *}}$, which yields the second inequality in \eqref{ineq:for_k_star}. Rearranging terms in \eqref{ineq:for_k_star}, we obtain the following inequalities for $\rA$:
        \begin{align}
            \label{ineq:for_k_star_2}
             k^* + \frac{1}{\widetilde{\rho}_{e_{k^*}}} \sum_{j=k^*+1}^{m} \widetilde{\rho}_{e_{j}} \leq \rA < k^*+1 + \frac{1}{\widetilde{\rho}_{e_{k^*+1}}} \sum_{j=k^*+2}^{m} \widetilde{\rho}_{e_{j}},
        \end{align}
        where we let $\widetilde{\rho}_{e_0} \coloneqq +\infty$. Let $g: \{0,\ldots,m\} \to \R_{\geq 0}$ be the function defined by $g(k) \coloneqq k + (1/\widetilde{\rho}_{e_{k}}) \sum_{j=k+1}^{m} \widetilde{\rho}_{e_{j}}$ for $k\in \{0,\ldots,m \}$. 
        Therefore, \eqref{ineq:for_k_star_2} is equivalent to the inequalities $g(k^*) \leq \rA < g(k^ *+1)$. Since $g$ is nondecreasing (Lemma~\ref{lem:properties_of_g_k}), we conclude that $k^*$ is the maximum integer in $\{0,\ldots,\rA\}$ whose image under $g$ is at most $\rA$, that is,
        \begin{align*}
            k^* = \max\left\{ k \in \{0,\ldots,\rA\} :\, k + \frac{1}{\widetilde{\rho}_{e_k}} \sum_{j=k+1}^{m} \hspace{-0.3em} \widetilde{\rho}_{e_j}  \leq \rA \right\}.
        \end{align*}
        Finally, we note that $k^*$, as well as the constant $(\rA - k^*) / \sum_{j=1}^{m} \widetilde{\rho}_{e_j}$, do not depend on the tie-breaking rule selected for sorting the entries of $\widetilde{\rho}$. Indeed, if $\pi$ is any other permutation of $\{1,\ldots,m\}$ satisfying $\widetilde{\rho}_{e_{\pi(1)}} \geq \cdots \geq \widetilde{\rho}_{e_{\pi(m)}}$, then $\widetilde{\rho}_{e_{\pi(j)}} = \widetilde{\rho}_{e_j}$ for every $j \in \{1,\ldots,m\}$.
    \end{itemize}
   \hfill\Halmos

\section*{Lemmas for Theorem~\ref{thm:fast_projection}}

Before proceeding with the proof of Theorem~\ref{thm:fast_projection}, we remark that when $\sum_{e\in \E} \min\{ \widetilde{\rho}_e, 1\} \leq \rA$, the fact that Algorithm~\ref{alg:fast_projection} returns the projection of $\widetilde{\rho}$ follows directly from the first case of Theorem~\ref{thm:projection_closed_form}. Therefore, we focus our analysis on the case $\sum_{e\in \E} \min\{ \widetilde{\rho}_e, 1\} > \rA$.

To account for edge cases, throughout this section we assume that both $e_{0}$ and $e_{m+1}$ are two components \emph{not} belonging to $\E$ satisfying $\widetilde{\rho}_{e_0} \coloneqq +\infty$ and $\widetilde{\rho}_{e_{m+1}} \coloneqq 0$. We recall that the value of $k^*$, as well as $\sum_{j=k^*+1}^{m}\widetilde{\rho}_{e_j}$ involved in the closed form of the projection \eqref{eq:projection_closed_form} do not depend on the tie-breaking rule, as noted in the proof of Theorem~\ref{thm:projection_closed_form}.

We denote by $\tau$ the number of iterations of the while loop of Algorithm~\ref{alg:fast_projection} (Lines \ref{alg:while_start}-\ref{alg:while_end}). We use the superscript $(t)$ to denote the iterates generated by the algorithm, which we define as follows.

At the initialization of Algorithm~\ref{alg:fast_projection}, it sets $\F^{(1)} \coloneqq  \E$, $k^{(1)} \coloneqq 0$ and $s^{(1)} \coloneqq 0$. For $t \in \{1,\ldots,\tau\}$, the $t$-th iteration of the while loop of Algorithm~\ref{alg:fast_projection} starts with $\F^{(t)} \neq \emptyset$. Then, the algorithm selects $e^{(t)} \in \F^{(t)}$ corresponding to the $\ceil*{ |\F^{(t)}| / 2 }$-th largest entry of $\left( \widetilde{\rho}_e \right)_{e\in \F^{(t)}}$, breaking ties arbitrarily. 
Then, the algorithm defines the sets $\Fh^{(t)} \coloneqq \left\{ e \in \F^{(t)}:\, \widetilde{\rho}_{e} > \widetilde{\rho}_{e^{(t)}}  \right\}$, $\Fe^{(t)} \coloneqq \left\{ e \in \F^{(t)}:\, \widetilde{\rho}_{e} = \widetilde{\rho}_{e^{(t)}}  \right\}$ and $\Fl^{(t)} \coloneqq \left\{ e \in \F^{(t)}:\, \widetilde{\rho}_{e} < \widetilde{\rho}_{e^{(t)}}  \right\}$. Next, it computes the quantity $\gamma^{(t)} \coloneqq k^{(t)} + |\Fh^{(t)} | + |\Fe^{(t)} | + \left( 1 /\widetilde{\rho}_{e^{(t)}} \right) \left( \sum_{e \in \Fl^{(t)}} \widetilde{\rho}_{e} + s^{(t)} \right)$, and generates the next iterates $\F^{(t+1)}$, $k^{(t+1)}$ and $s^{(t+1)}$ by considering two cases:
\begin{enumerate}
    \item[1.] If $\gamma^{(t)} \leq \rA$, then $\F^{(t+1)} \coloneqq \Fl^{(t)}$, $k^{(t+1)} \coloneqq k^{(t)} + | \Fh^{(t)} | + | \Fe^{(t)} |$, and $s^{(t+1)} \coloneqq s^{(t)}$.
    \item[2.] If $\gamma^{(t)} > \rA$, then $\F^{(t+1)} \coloneqq \Fh^{(t)}$, $k^{(t+1)} \coloneqq k^{(t)}$, and $s^{(t+1)} \coloneqq s^{(t)} + \widetilde{\rho}_{e^{(t)}} | \Fe^{(t)} | + \sum_{e \in \Fl^{(t)}} \widetilde{\rho}_{e}$.
\end{enumerate}
 We note that as a consequence of the final iteration, we have $\F^{(\tau+1)} = \emptyset$, and the algorithm returns the vector $\rho^{*} \in \R^{\E}$ defined as $\rho^{*}_e \coloneqq \min \left\{ \mu \widetilde{\rho}_e ,\, 1 \right\}$ for every $e\in \E$, with $\mu \coloneqq (\rA - k^{(\tau+1)}) / s^{(\tau+1)}$.

First, we show that Algorithm~\ref{alg:fast_projection} runs in linear time.

\begin{lemma}
    \label{lem:alg_linear_time}
    Algorithm~\ref{alg:fast_projection} runs in time $O(m)$.
\end{lemma}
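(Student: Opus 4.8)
The plan is to bound the cost of the two phases of Algorithm~\ref{alg:fast_projection} separately: the preliminary feasibility test and the while loop (Lines~\ref{alg:while_start}--\ref{alg:while_end}). Computing the test quantity $\sum_{e\in\E}\min\{\widetilde{\rho}_e,1\}$, forming the final scaling $\mu$, and constructing the output vector $\rho^*_e=\min\{\mu\widetilde{\rho}_e,1\}$ each require a single pass over $\E$ and therefore cost $O(m)$. Hence it suffices to show that the while loop also runs in $O(m)$ time, and since the loop is only entered when $\sum_{e\in\E}\min\{\widetilde{\rho}_e,1\}>\rA$, I would restrict attention to that case.

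First I would analyze the cost of a single iteration $t$. Selecting the splitting component $e^{(t)}$ as the $\lceil|\F^{(t)}|/2\rceil$-th largest entry of $(\widetilde{\rho}_e)_{e\in\F^{(t)}}$ can be done in $O(|\F^{(t)}|)$ time using the median-of-medians selection algorithm (\citet{blum1973time}). Given $e^{(t)}$, forming the three sets $\Fh^{(t)}$, $\Fe^{(t)}$, $\Fl^{(t)}$ by comparing each $\widetilde{\rho}_e$ with $\widetilde{\rho}_{e^{(t)}}$, computing their cardinalities, and evaluating $\sum_{e\in\Fl^{(t)}}\widetilde{\rho}_e$ all require a single pass over $\F^{(t)}$, hence $O(|\F^{(t)}|)$ time; the auxiliary updates to $k$ and $s$ add only $O(|\F^{(t)}|)$ more. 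Therefore the total cost of iteration $t$ is $O(|\F^{(t)}|)$.

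The crux is to show the sizes $|\F^{(t)}|$ shrink geometrically. Writing $n=|\F^{(t)}|$ and using that $e^{(t)}$ is the $\lceil n/2\rceil$-th largest entry, the set $\Fh^{(t)}$ of strictly larger entries satisfies $|\Fh^{(t)}|\le\lceil n/2\rceil-1\le\lfloor n/2\rfloor$, while the set $\Fl^{(t)}$ of strictly smaller entries satisfies $|\Fl^{(t)}|\le n-\lceil n/2\rceil=\lfloor n/2\rfloor$. Since $\F^{(t+1)}$ equals either $\Fl^{(t)}$ or $\Fh^{(t)}$, in both cases $|\F^{(t+1)}|\le\lfloor|\F^{(t)}|/2\rfloor$, so that $|\F^{(t)}|\le m/2^{t-1}$ for every $t$. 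Summing the per-iteration costs then gives a total loop cost of $\sum_{t=1}^{\tau}O(|\F^{(t)}|)=O\!\left(m\sum_{t\ge1}2^{-(t-1)}\right)=O(m)$, and combining with the $O(m)$ cost of the preliminary and final steps yields the claimed $O(m)$ running time.

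I expect the main obstacle to be the geometric-shrinkage bound: one must verify carefully that the middle block $\Fe^{(t)}$ of ties with $\widetilde{\rho}_{e^{(t)}}$ is never carried into $\F^{(t+1)}$, so that choosing the rank-$\lceil n/2\rceil$ element guarantees that both $\Fh^{(t)}$ and $\Fl^{(t)}$ have size at most $\lfloor n/2\rfloor$. This is precisely what makes the constant-fraction reduction hold even in the presence of repeated values of $\widetilde{\rho}$, and it explains why the next splitting component is never drawn from $\Fe^{(t)}$.
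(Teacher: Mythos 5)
Your proposal is correct and follows essentially the same route as the paper's proof: per-iteration cost $O(|\F^{(t)}|)$ via median-of-medians selection, the halving bound $|\F^{(t+1)}|\leq\max\{|\Fh^{(t)}|,|\Fl^{(t)}|\}\leq|\F^{(t)}|/2$, and a geometric sum giving $O(m)$ for the loop plus $O(m)$ for the initialization and output steps. Your explicit verification that both $|\Fh^{(t)}|$ and $|\Fl^{(t)}|$ are at most $\lfloor|\F^{(t)}|/2\rfloor$ even with ties (because $\Fe^{(t)}$ is never carried forward) is a slightly more careful spelling-out of the inequality the paper states directly.
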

\begin{proof}{Proof of Lemma~\ref{lem:alg_linear_time}.}
    Since $e^{(t)}$ is the index of the $\ceil*{ |\F^{(t)}| / 2 }$-th largest component of $( \widetilde{\rho}_{e} )_{e\in \F^{(t)}}$, from the update rule of Algorithm~\ref{alg:fast_projection}, we have $|\F^{(t+1)}| \leq  \max\left\{  |\Fl^{(t)}|, |\Fh^{(t)}|  \right\} \leq |\F^{(t)}| / 2$. We also note that $e^{(t)}$ can be computed in time $O(m)$ using the median-of-medians algorithm (\citet{blum1973time}).
    
    Furthermore, since the $\tau$-th (\ie, the last) iteration of the while loop generates $\F^{(\tau+1)} = \emptyset$, it follows that the total number of iterations of the while loop is then $\tau \leq \floor*{ \log( |\F^{(1)}| ) + 1}$. By iteratively applying the inequality $|\F^{(t+1)}| \leq |\F^{(t)}| / 2$, we obtain 
    \begin{align*}
        |\F^{(t)}| \leq \frac{ |\F^{(t-1)}| }{ 2 } \leq \cdots \leq \frac{ |\F^{(1)}| }{ 2^{t-1} }, \quad \forall\, t\in \{2,\ldots,\tau\}.
    \end{align*}
    Therefore, the total running time of the while loop in Algorithm~\ref{alg:fast_projection} is given by
    \begin{align*}
        \sum_{t=1}^{\tau} O(|\F^{(t)}|) = \sum_{t=1}^{\floor*{ \log( |\F^{(1)}| ) + 1}} \hspace{-0.5em} O\left( \frac{ |\F^{(1)}| }{2^{t-1}} \right) = O(|\F^{(1)}|) = O(m).
    \end{align*}
    Finally, the initialization and return steps of the algorithm take $O(m)$ as well. Therefore, the overall running time of Algorithm~\ref{alg:fast_projection} is $O(m)$.
   \hfill\Halmos
\end{proof}

Next, we show that when $\sum_{e\in \E} \min\{ \widetilde{\rho}_e, 1\} > \rA$, Algorithm~\ref{alg:fast_projection} satisfies the following invariants throughout the execution of the while loop:

\begin{itemize}
    \item[$\LI^{(t)}$.] The elements of $\F^{(t)}$ are consecutive, that is, for every $i<j<k \in \{1,\ldots,m\}$, $e_{i}, e_{k} \in \F^{(t)}$ implies $e_{j} \in \F^{(t)}$.
    
    \item[$\LII^{(t)}$.] $k^{(t)} = \ell^{(t)}-1$, where $\ell^{(t)} \coloneqq \min \left\{ i\in \{1,\ldots,m\}:\, e_i \in \F^{(t)}  \right\}$.

    \item[$\LIII^{(t)}$.] $s^{(t)} = \sum_{j=k^{(t)}+1}^{m} \widetilde{\rho}_{e_{j}} - \sum_{e\in \F^{(t)}} \widetilde{\rho}_{e}$.

    \item[$\LIV^{(t)}$.] $\F^{(t)}$ contains $e_{k^*}$ or $e_{k^*+1}$. 
\end{itemize}




The following lemma will serve as a useful tool prior to demonstrating the maintenance of these invariants throughout the algorithm.

\begin{lemma}
    \label{lem:alg_gamma_equality}
    Assume $\LI^{(t)}$, $\LII^{(t)}$ and $\LIII^{(t)}$ hold, and let $i_{t} \in \{1,\ldots,m\}$ be such that $e^{(t)} = e_{i_t}$. Then, $\gamma^{(t)} = i_{t} + (1 / \widetilde{\rho}_{e_{i_t}} ) \sum_{j=i_{t}+1}^{m}  \widetilde{\rho}_{e_j}$. Furthermore, $\gamma^{(t)} \leq \rA$ if and only if $i_t \in \{1,\ldots, k^*\}$.
\end{lemma}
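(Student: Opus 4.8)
The plan is to evaluate $\gamma^{(t)}$ in closed form by expressing every quantity in its definition through the global sorted order $\widetilde{\rho}_{e_1}\geq\cdots\geq\widetilde{\rho}_{e_m}$, using the three structural invariants. Invariant $\LII^{(t)}$ identifies $k^{(t)}$ with $\ell^{(t)}-1$, so the $k^{(t)}$ smallest-indexed (largest-valued) components have already been discarded on the high side; together with $\LI^{(t)}$ this gives $\F^{(t)}=\{e_{k^{(t)}+1},\ldots,e_{r^{(t)}}\}$ as a block of consecutive indices, sorted nonincreasingly in $\widetilde{\rho}$. Consequently $\Fh^{(t)}$---the elements of $\F^{(t)}$ strictly exceeding the pivot value $\widetilde{\rho}_{e_{i_t}}$---is an initial segment of this block, namely $\Fh^{(t)}=\{e_{k^{(t)}+1},\ldots,e_{k^{(t)}+h}\}$ with $h\coloneqq|\Fh^{(t)}|$.

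The first identity then follows from a direct but careful substitution. First I would use $\LIII^{(t)}$ to replace $s^{(t)}$ by $\sum_{j=k^{(t)}+1}^{m}\widetilde{\rho}_{e_j}-\sum_{e\in\F^{(t)}}\widetilde{\rho}_e$, and split the latter sum along the partition $\F^{(t)}=\Fh^{(t)}\sqcup\Fe^{(t)}\sqcup\Fl^{(t)}$. The $\Fl^{(t)}$ contributions cancel, leaving $\sum_{e\in\Fl^{(t)}}\widetilde{\rho}_e+s^{(t)}=\sum_{j=k^{(t)}+1}^{m}\widetilde{\rho}_{e_j}-\sum_{e\in\Fh^{(t)}}\widetilde{\rho}_e-|\Fe^{(t)}|\,\widetilde{\rho}_{e_{i_t}}$, where I used that every element of $\Fe^{(t)}$ has value exactly $\widetilde{\rho}_{e_{i_t}}$. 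Dividing by $\widetilde{\rho}_{e_{i_t}}$ produces a term $-|\Fe^{(t)}|$ that cancels the $|\Fe^{(t)}|$ appearing in the definition of $\gamma^{(t)}$, and since $\Fh^{(t)}$ is the prefix $\{e_{k^{(t)}+1},\ldots,e_{k^{(t)}+h}\}$ the remaining tail sum collapses to $\sum_{j=k^{(t)}+h+1}^{m}\widetilde{\rho}_{e_j}$. This yields $\gamma^{(t)}=(k^{(t)}+h)+(1/\widetilde{\rho}_{e_{i_t}})\sum_{j=k^{(t)}+h+1}^{m}\widetilde{\rho}_{e_j}$.

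To finish the first identity I would rewrite this expression as a value of $g$. Since $\Fe^{(t)}\ni e^{(t)}$ is nonempty and sits between $\Fh^{(t)}$ and $\Fl^{(t)}$ in the consecutive block, the index $k^{(t)}+h+1$ is the first index of $\Fe^{(t)}$, so $\widetilde{\rho}_{e_{k^{(t)}+h+1}}=\widetilde{\rho}_{e_{i_t}}$; peeling off this single term turns the formula into $g(k^{(t)}+h+1)$. Because $k^{(t)}+h+1\leq i_t$ and all indices in between lie in the tie block $\{j:\widetilde{\rho}_{e_j}=\widetilde{\rho}_{e_{i_t}}\}$, Lemma~\ref{lem:properties_of_g_k} (that $g$ takes a common value across a run of equal entries) gives $g(k^{(t)}+h+1)=g(i_t)$, which is precisely the claimed identity $\gamma^{(t)}=i_t+(1/\widetilde{\rho}_{e_{i_t}})\sum_{j=i_t+1}^{m}\widetilde{\rho}_{e_j}$.

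The second statement is then immediate: $g$ is nondecreasing (Lemma~\ref{lem:properties_of_g_k}) and satisfies $g(i_t)\geq i_t$ since the tail term is nonnegative, while $k^*=\max\{k\in\{0,\ldots,\rA\}:g(k)\leq\rA\}$. Hence $\gamma^{(t)}=g(i_t)\leq\rA$ forces $i_t\leq\rA$ and then $i_t\leq k^*$ by maximality of $k^*$, and conversely $i_t\leq k^*$ gives $g(i_t)\leq g(k^*)\leq\rA$; since $e_{i_t}\in\E$ always has $i_t\geq 1$, this is exactly $i_t\in\{1,\ldots,k^*\}$. I expect the main obstacle to be the bookkeeping around ties: the cancellation of the $|\Fe^{(t)}|$ term and the reduction $g(k^{(t)}+h+1)=g(i_t)$ both rely on treating the entire equal-value block consistently, which is where the constancy of $g$ on ties and the prefix structure of $\Fh^{(t)}$ extracted from $\LI^{(t)}$--$\LIII^{(t)}$ do the essential work.
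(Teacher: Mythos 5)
Your proof is correct and follows essentially the same route as the paper: expand $\gamma^{(t)}$ using $\LII^{(t)}$ and $\LIII^{(t)}$, exploit the consecutiveness from $\LI^{(t)}$ to identify $\Fh^{(t)}$ as a prefix and $\Fe^{(t)}$ as the tie block containing $e_{i_t}$, and collapse the expression to $g(i_t)$ before invoking the monotonicity of $g$ and the definition of $k^*$. The only cosmetic difference is that you absorb the tie block by first landing on $g(k^{(t)}+h+1)$ and then using the constancy of $g$ across equal entries (from Lemma~\ref{lem:properties_of_g_k}), whereas the paper converts the unit counts of $\{e_{i_t+1},\ldots\}\subseteq\Fe^{(t)}$ directly into ratio terms $\widetilde{\rho}_{e_j}/\widetilde{\rho}_{e_{i_t}}=1$ inside the sum; both handle the ties equivalently, and your explicit observation that $g(i_t)\geq i_t$ forces $i_t\leq\rA$ is a detail the paper leaves implicit.
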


\begin{proof}{Proof of Lemma~\ref{lem:alg_gamma_equality}.}
    Let $r^{(t)}$ be defined as $r^{(t)} \coloneqq \min \left\{ j\in \{1,\ldots,m\}:\, e_j \in \Fl^{(t)}  \right\}$ if $\Fl^{(t)} \neq \emptyset$, and $r^{(t)} \coloneqq m+1$ if  $\Fl^{(t)} = \emptyset$. Since $\LI^{(t)}$ holds and $\ell^{(t)}$ is the index of the first element of $\F^{(t)}$, we can write $\Fh^{(t)}  \cup \Fe^{(t)} = \left\{ e_{\ell^{(t)}}, e_{\ell^{(t)}+1},
 \ldots, e_{r^{(t)} - 1} \right\}$. Moreover, since $e_{i_{t}} \in \Fe^{(t)}$, we have $\{ e_{i_{t}+1},\ldots, e_{r^{(t)}-1} \} \subseteq \Fe^{(t)}$. Then, it follows that
    \begin{align*}
        \gamma^{(t)} &= k^{(t)} + |\Fh^{(t)} | + |\Fe^{(t)} | + \frac{ 1 }{\widetilde{\rho}_{e^{(t)}} } \left( \sum_{e \in \Fl^{(t)}} \widetilde{\rho}_{e} + s^{(t)} \right) &\\
        \overset{ (\LII^{(t)}, \LIII^{(t)} ) }&{=} \ell^{(t)} - 1 + |\Fh^{(t)} | + |\Fe^{(t)} | + \frac{ 1 }{\widetilde{\rho}_{e^{(t)}} } \left( \sum_{e \in \Fl^{(t)}} \widetilde{\rho}_{e} + \sum_{j=\ell^{(t)}}^{m} \widetilde{\rho}_{e_{j}} - \sum_{e\in \F^{(t)}} \widetilde{\rho}_{e} \right) \\
        &= \ell^{(t)} - 1 + |\Fh^{(t)} | + |\Fe^{(t)} | + \frac{ 1 }{\widetilde{\rho}_{e^{(t)}} } \left( \sum_{j=\ell^{(t)}}^{m} \widetilde{\rho}_{e_{j}} - \sum_{e \in \Fh^{(t)} \cup \Fe^{(t)}  } \widetilde{\rho}_{e} \right) \\
        &= | \left\{ e_{1},\ldots, e_{\ell^{(t) - 1}} \right\} | + | \left\{ e_{\ell^{(t)}},\ldots, e_{r^{(t)} - 1} \right\} | + \frac{ 1 }{\widetilde{\rho}_{e^{(t)}} } \left( \sum_{j=\ell^{(t)}}^{m} \widetilde{\rho}_{e_{j}} - \sum_{j=\ell^{(t)}}^{r^{(t)} - 1} \widetilde{\rho}_{e_{j}} \right) \\
        &= | \left\{ e_{1},\ldots, e_{\ell^{(t)}-1} \right\} | + | \left\{ e_{\ell^{(t)}},\ldots, e_{i_{t}} \right\} | + \frac{ 1 }{\widetilde{\rho}_{e_{i_{t}} }} \left( \sum_{j= i_{t}+1 }^{ r^{(t)} - 1 } \widetilde{\rho}_{e_{j}}  +  \sum_{j=r^{(t)}}^{m} \widetilde{\rho}_{e_{j}}  \right) \\
        &= i_{t} + \frac{1}{\widetilde{\rho}_{e_{i_t}} } \sum_{j=i_{t}+1}^{m}  \widetilde{\rho}_{e_j}.
    \end{align*}
    Finally, the fact that $\gamma^{(t)} = i_{t} + (1 / \widetilde{\rho}_{e_{i_t}} ) \sum_{j=i_{t}+1}^{m}  \widetilde{\rho}_{e_j} \leq \rA$ if and only if $i_t \in \{1,\ldots, k^*\}$ follows directly from the definition of $k^*$ and the fact that the function $g(k) \coloneqq k + (1/\widetilde{\rho}_{e_{k}}) \sum_{j=k+1}^{m} \widetilde{\rho}_{e_j}$ (with $\widetilde{\rho}_{e_0} \coloneqq +\infty$) is nondecreasing (Lemma~\ref{lem:properties_of_g_k}).
   \hfill\Halmos
\end{proof}

\begin{lemma}
    \label{lem:alg_invariants}
    Algorithm~\ref{alg:fast_projection} satisfies invariants $\LI^{(t)}$, $\LII^{(t)}$, $\LIII^{(t)}$  and $\LIV^{(t)}$ for every $t \in \{1,\ldots, \tau\}$.
\end{lemma}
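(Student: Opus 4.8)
The plan is to prove all four invariants simultaneously by induction on the iteration counter $t$, since the maintenance of $\LIV^{(t)}$ will rely on the bookkeeping guaranteed by $\LI^{(t)}$--$\LIII^{(t)}$. For the base case $t=1$, recall that $\F^{(1)} = \E$, $k^{(1)} = 0$ and $s^{(1)} = 0$. Invariant $\LI^{(1)}$ is immediate since $\E = \{e_1,\ldots,e_m\}$; $\LII^{(1)}$ holds because $\ell^{(1)} = 1$; and $\LIII^{(1)}$ reduces to $0 = 0$. For $\LIV^{(1)}$, I would first observe that the standing hypothesis $\sum_{e\in\E}\min\{\widetilde{\rho}_e,1\} > \rA$ forces $\rA < m$, whence $k^* \le \rA < m$ and so $e_{k^*+1} \in \E = \F^{(1)}$.

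For the inductive step, assume the four invariants hold at iteration $t$ with $t+1 \le \tau$, so that $\F^{(t+1)} \neq \emptyset$, and write $e^{(t)} = e_{i_t}$. The crucial structural observation is that, because $\E$ is sorted in nonincreasing order of $\widetilde{\rho}$ and $\F^{(t)}$ is consecutive (by $\LI^{(t)}$), the threshold $\widetilde{\rho}_{e_{i_t}}$ splits $\F^{(t)}$ into three consecutive index-blocks: $\Fh^{(t)}$ at the top, $\Fe^{(t)}$ in the middle, and $\Fl^{(t)}$ at the bottom. This immediately yields $\LI^{(t+1)}$ in both branches of the algorithm, since $\Fl^{(t)}$ and $\Fh^{(t)}$ are themselves consecutive. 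For $\LII^{(t+1)}$ and $\LIII^{(t+1)}$ I would carry out a direct computation in each branch: using $\LII^{(t)}$ (that $\ell^{(t)} = k^{(t)}+1$), the fact that $|\Fh^{(t)}| + |\Fe^{(t)}|$ is exactly the number of indices dropped from the front when passing to $\Fl^{(t)}$, and the identity $\sum_{e\in\Fe^{(t)}}\widetilde{\rho}_e = \widetilde{\rho}_{e^{(t)}}|\Fe^{(t)}|$, the update rules for $k$ and $s$ telescope to reproduce the claimed formulas. These three invariants amount to routine bookkeeping.

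The main obstacle is maintaining $\LIV^{(t+1)}$, where I would lean on both parts of Lemma~\ref{lem:alg_gamma_equality} and on Lemma~\ref{lem:properties_of_g_k}. The leverage is twofold. First, by Lemma~\ref{lem:alg_gamma_equality} the branch taken is governed precisely by whether $i_t \le k^*$ (then $\gamma^{(t)} \le \rA$ and we descend to $\Fl^{(t)}$) or $i_t \ge k^*+1$ (then $\gamma^{(t)} > \rA$ and we descend to $\Fh^{(t)}$). Second, by Lemma~\ref{lem:properties_of_g_k}, since $k^* \le m-1$ there is a strict drop $\widetilde{\rho}_{e_{k^*}} > \widetilde{\rho}_{e_{k^*+1}}$. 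In the first branch, $i_t \le k^*$ gives $\widetilde{\rho}_{e_{i_t}} \ge \widetilde{\rho}_{e_{k^*}} > \widetilde{\rho}_{e_{k^*+1}}$, so whichever of $e_{k^*}, e_{k^*+1}$ lies in $\F^{(t)}$ and sits strictly below the threshold is retained in $\Fl^{(t)} = \F^{(t+1)}$; symmetrically, in the second branch $i_t \ge k^*+1$ gives $\widetilde{\rho}_{e_{i_t}} \le \widetilde{\rho}_{e_{k^*+1}} < \widetilde{\rho}_{e_{k^*}}$, keeping the appropriate element in $\Fh^{(t)} = \F^{(t+1)}$.

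The delicate point I anticipate lies in the tie and boundary analysis. When $\LIV^{(t)}$ supplies only one of $e_{k^*}, e_{k^*+1}$ in $\F^{(t)}$, or when $e^{(t)}$ coincides with that boundary element, the retained side may turn out to be empty. I would dispatch these cases by verifying that they force $\F^{(t+1)} = \emptyset$, i.e. $t = \tau$, so that $\LIV^{(t+1)}$ is vacuous; this is exactly where the strict inequality of Lemma~\ref{lem:properties_of_g_k} and the definition of $\Fe^{(t)}$ (values equal to the threshold are excluded from both kept sides) are needed to rule out spurious retention under ties. Combining the two branches closes the induction.
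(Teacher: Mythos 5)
Your proposal is correct and follows essentially the same route as the paper's proof: simultaneous induction on $t$, with $\LI$--$\LIII$ handled by direct bookkeeping and $\LIV$ maintained by combining Lemma~\ref{lem:alg_gamma_equality} (to tie the branch taken to whether $i_t \leq k^*$) with the strict drop $\widetilde{\rho}_{e_{k^*}} > \widetilde{\rho}_{e_{k^*+1}}$ from Lemma~\ref{lem:properties_of_g_k} and the consecutiveness of $\F^{(t)}$. Your framing of the tie case (``if the retained side is empty then $t=\tau$ and $\LIV^{(t+1)}$ is vacuous, otherwise consecutiveness places the other boundary element in the retained side'') is logically equivalent to the paper's, which simply restricts the inductive step to $t<\tau$ so that $\F^{(t+1)}\neq\emptyset$ is available as a hypothesis.
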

\begin{proof}{Proof of Lemma~\ref{lem:alg_invariants}.}
    We prove the lemma by induction on $t$. At the initialization of Algorithm~\ref{alg:fast_projection}, we have $\F^{(1)} = \E$, $k^{(1)} = 0$, and $s^{(1)}= 0$. In particular, invariants $\LI^{(1)}$, $\LII^{(1)}$, and $\LIII^{(1)}$ hold. Moreover, if $k^*>0$, then $\F^{(1)}$ contains $e_{k^*}$, and if $k^*=0$, $\F^{(1)}$ contains $e_1 = e_{k^*+1}$, so $\LIV^{(1)}$ holds as well.

    Now, for $t \in \{1,\ldots,\tau-1\}$, let us suppose invariants $\LI^{(t)}$, $\LII^{(t)}$, $\LIII^{(t)}$  and $\LIV^{(t)}$ hold. At iteration $t$ of the while loop (Lines \ref{alg:while_start}-\ref{alg:while_end}), the algorithm selects a component $e^{(t)} \in \F^{(t)}$, and defines $\Fh^{(t)} = \{ e\in \F^{(t)}:\, \widetilde{\rho}_{e} > \widetilde{\rho}_{e^{(t)}} \}$, $\Fl^{(t)} = \{ e\in \F^{(t)}:\, \widetilde{\rho}_{e} < \widetilde{\rho}_{e^{(t)}} \}$ and $\Fe^{(t)} = \{ e\in \F^{(t)}:\, \widetilde{\rho}_{e} = \widetilde{\rho}_{e^{(t)}} \}$. Let $i_{t} \in \{1,\ldots,m\}$ be such that $e^{(t)}=e_{i_{t}}$. Then, we have two cases:

    \begin{itemize}
        \item[1.] $\gamma^{(t)} \leq \rA$. Then, the algorithm sets $\F^{(t+1)} = \Fl^{(t)}$, $k^{(t+1)} = k^{(t)} + | \Fh^{(t)} | + | \Fe^{(t)} |$, and $s^{(t+1)} = s^{(t)}$. From $\LI^{(t)}$, the elements of $\F^{(t)}$ are consecutive, so the elements of $\F^{(t+1)}$ are consecutive as well, hence $\LI^{(t+1)}$ holds. On the other hand, by definition, $\ell^{(t)}$ is the index of the largest element of $\F^{(t)}$, and $\ell^{(t+1)}$ is the index of the largest element of $\F^{(t+1)} = \Fl^{(t)}$. Since the elements of $\F^{(t)}$ are consecutive, we can write $\Fh^{(t)}  \cup \Fe^{(t)} = \left\{ e_{\ell^{(t)}},e_{\ell^{(t)}+1},\ldots, e_{\ell^{(t+1)} - 1} \right\}$, and thus $\ell^{(t+1)} = \ell^{(t)} + | \Fh^{(t)} | + | \Fe^{(t)} |$. Then, it follows that
        \begin{align}
            \label{eq:for_loop_invariant_case_1}
            k^{(t+1)} = k^{(t)} + | \Fh^{(t)} | + | \Fe^{(t)} | 
            \overset{ (\LII^{(t)}) }{=} \ell^{(t)} - 1 + | \Fh^{(t)} | + | \Fe^{(t)} | = \ell^{(t+1)} - 1.
        \end{align}
        Therefore, $\LII^{(t+1)}$ holds. Moreover, 
        \begin{align*}
            s^{(t+1)} = s^{(t)} \overset{(\LIII^{(t)})}&{=} \sum_{j=k^{(t)}+1}^{m} \widetilde{\rho}_{e_j} - \sum_{e\in \F^{(t)}} \widetilde{\rho}_{e} \\
            \overset{(\LII^{(t)})}&{=} \sum_{j=\ell^{(t)}}^{m} \widetilde{\rho}_{e_j} - \sum_{e\in \F^{(t)}} \widetilde{\rho}_{e} \\
            &= \sum_{j=\ell^{(t)}}^{\ell^{(t+1)}-1} \widetilde{\rho}_{e_j} + \sum_{j=\ell^{(t+1)}}^{m} \widetilde{\rho}_{e_j} - \sum_{e\in \F^{(t)}} \widetilde{\rho}_{e}  \\
            &= \sum_{e \in \Fh^{(t)} \cup \Fe^{(t)}} \widetilde{\rho}_{e} + \sum_{j=\ell^{(t+1)}}^{m} \widetilde{\rho}_{e_j} - \sum_{e\in \F^{(t)}} \widetilde{\rho}_{e} \\
            &= \sum_{j=\ell^{(t+1)}}^{m} \widetilde{\rho}_{e_j} - \sum_{e\in \Fl^{(t)}} \widetilde{\rho}_{e} \\
            \overset{\eqref{eq:for_loop_invariant_case_1}}&{=} \sum_{j=k^{(t+1)}+1}^{m} \widetilde{\rho}_{e_j} - \sum_{e\in \F^{(t+1)}} \widetilde{\rho}_{e}.
        \end{align*}
        Thus, $\LIII^{(t+1)}$ holds. Next, we show that $\LIV^{(t+1)}$ holds. From $\LI^{(t)}$, $\LII^{(t)}$, $\LIII^{(t)}$ and Lemma~\ref{lem:alg_gamma_equality}, we have $\gamma^{(t)} = i_{t} + (1 / \widetilde{\rho}_{e_{i_t}} ) \sum_{j=i_{t}+1}^{m}  \widetilde{\rho}_{e_j} \leq \rA$, and therefore, $i_{t} \in  \{1,\ldots,k^*\}$. Moreover, from Lemma~\ref{lem:properties_of_g_k}, it holds that $\widetilde{\rho}_{e_{k^*}} >  \widetilde{\rho}_{e_{k^*+1}}$. Hence, we have $\widetilde{\rho}_{e_{i_{t}}} \geq \widetilde{\rho}_{e_{k^*}} >  \widetilde{\rho}_{e_{k^*+1}}$. From $\LIV^{(t)}$, we have $e_{k^*} \in \F^{(t)}$ or $e_{k^*+1} \in \F^{(t)}$. We then consider the following three cases. First, if $e_{k^*+1} \in \F^{(t)}$, then the inequality $\widetilde{\rho}_{e_{i_{t}}}  >  \widetilde{\rho}_{e_{k^*+1}}$ implies $e_{k^*+1} \in \Fl^{(t)} = \F^{(t+1)}$. Second, if $e_{k^*}\in \F^{(t)}$ and $\widetilde{\rho}_{e_{i_{t}}} > \widetilde{\rho}_{e_{k^*}}$, then $e_{k^*} \in \Fl^{(t)} = \F^{(t+1)}$. Third, if $e_{k^*}\in \F^{(t)}$ and $\widetilde{\rho}_{e_{i_{t}}} = \widetilde{\rho}_{e_{k^*}}$, we have $e_{k^{*}} \in \Fe^{(t)}$. Moreover, $\Fl^{(t)} \neq \emptyset$, as $t < \tau$ and $\Fl^{(t)} = \F^{(t+1)} \neq \emptyset$. From $\LI^{(t)}$, the elements of $\F^{(t)}$ are consecutive, and $\widetilde{\rho}_{e_{k^*}} > \widetilde{\rho}_{e_{k^*+1}}$ necessarily implies that $e_{k^*+1} \in \Fl^{(t)} = \F^{(t+1)}$. Hence, $\LIV^{(t+1)}$ holds.

        \item[2.] $\gamma^{(t)} > \rA$. Then, $\F^{(t+1)} = \Fh^{(t)}$, $k^{(t+1)} = k^{(t)}$, and $s^{(t+1)} = s^{(t)} + \widetilde{\rho}_{e^{(t)}} | \Fe^{(t)} | + \sum_{e \in \Fl^{(t)}} \widetilde{\rho}_{e}$. From $\LI^{(t)}$, the elements of $\F^{(t)}$ are consecutive, so the elements of $\Fh^{(t)}$ are consecutive as well, so $\LI^{(t+1)}$ holds. On the other hand, by definition, $\ell^{(t)}$ is the index of the largest element of $\F^{(t)}$, and $\ell^{(t+1)}$ is the index of the largest element of $\F^{(t+1)} = \Fh^{(t)}$. Therefore, $\ell^{(t+1)} = \ell^{(t)}$. From, $\LII^{(t)}$, it follows that
        \begin{align*}
            k^{(t+1)} = k^{(t)} \overset{(\LII^{(t)})}{=} \ell^{(t)} - 1 = \ell^{(t+1)} - 1.
        \end{align*}
        Therefore, $\LII^{(t+1)}$ holds. Moreover,
        \begin{align*}
            s^{(t+1)} &= s^{(t)} + \widetilde{\rho}_{e^{(t)}} | \Fe^{(t)} | + \sum_{e \in \Fl^{(t)}} \widetilde{\rho}_{e} \\
            \overset{(\LIII^{(t)})}&{=} \sum_{j=k^{(t)} + 1}^{m} \widetilde{\rho}_{e_{j}} - \sum_{e \in \F^{(t)}} \widetilde{\rho}_{e}     + \widetilde{\rho}_{e^{(t)}} | \Fe^{(t)} | + \sum_{e \in \Fl^{(t)}} \widetilde{\rho}_{e} \\
            &= \sum_{j=k^{(t)} + 1}^{m} \widetilde{\rho}_{e_{j}} - \sum_{e \in \F^{(t)}} \widetilde{\rho}_{e}    + \sum_{e\in \Fe^{(t)}} \widetilde{\rho}_{e}  + \sum_{e \in \Fl^{(t)}} \widetilde{\rho}_{e} \\
            &= \sum_{j=k^{(t)} + 1}^{m} \widetilde{\rho}_{e_{j}} - \sum_{e \in \Fh^{(t)}} \widetilde{\rho}_{e} \\
            &= \sum_{j=k^{(t+1)} + 1}^{m} \widetilde{\rho}_{e_{j}} - \sum_{e \in \F^{(t+1)}} \widetilde{\rho}_{e}.
        \end{align*}
        Thus, $\LIII^{(t)}$ holds. Next, we show that $\LIV^{(t)}$ holds. From $\LI^{(t)}$, $\LII^{(t)}$, $\LIII^{(t)}$ and Lemma~\ref{lem:alg_gamma_equality}, we have $\gamma^{(t)} = i_{t} + (1 / \widetilde{\rho}_{e_{i_t}} ) \sum_{j=i_{t}+1}^{m}  \widetilde{\rho}_{e_j} > \rA$, and therefore, $i_{t} \in  \{k^{*}+1,\ldots,m\}$. This implies $\widetilde{\rho}_{e_{k^*}} > \widetilde{\rho}_{e_{k^*+1}} \geq \widetilde{\rho}_{e_{i_{t}}}$. From $\LIV^{(t)}$, we have $e_{k^*} \in \F^{(t)}$ or $e_{k^*+1} \in \F^{(t)}$. We then consider the following three cases. First, if $e_{k^*} \in \F^{(t)}$, then the inequality $\widetilde{\rho}_{e_{k^*}} > \widetilde{\rho}_{e_{i_{t}}}$ implies $e_{k^*} \in \Fh^{(t)} = \F^{(t+1)}$. Second, if $e_{k^*+1}\in \F^{(t)}$ and $\widetilde{\rho}_{e_{k^*+1}} > \widetilde{\rho}_{e_{i_{t}}}$, then $e_{k^*+1} \in \Fh^{(t)} = \F^{(t+1)}$. Third, if $e_{k^*+1}\in \F^{(t)}$ and $\widetilde{\rho}_{e_{k^*+1}} = \widetilde{\rho}_{e_{i_{t}}}$, we have $e_{k^{*}+1} \in \Fe^{(t)}$. Moreover, $\Fh^{(t)} \neq \emptyset$, as $t < \tau$ and $\Fh^{(t)} = \F^{(t+1)} \neq \emptyset$. From $\LI^{(t)}$, the elements of $\F^{(t)}$ are consecutive, and $\widetilde{\rho}_{e_{k^*}} > \widetilde{\rho}_{e_{k^*+1}}$ necessarily implies that $e_{k^*} \in \Fh^{(t)} = \F^{(t+1)}$. Hence, $\LIV^{(t+1)}$ holds.
    \end{itemize}
   \hfill\Halmos
\end{proof}

Now we are ready to prove Theorem~\ref{thm:fast_projection}.

\section*{Proof of Theorem~\ref{thm:fast_projection}}
    If $\sum_{e\in \E} \min\{ \widetilde{\rho}_e, 1\} \leq \rA$, the projection of $\widetilde{\rho}$ returned by Algorithm~\ref{alg:fast_projection} directly follows from the first case of Theorem~\ref{thm:projection_closed_form}. Therefore, for the rest of the proof, we assume that $\sum_{e\in \E} \min\{ \widetilde{\rho}_e, 1\} > \rA$.

    Let us consider the $\tau$-th (\ie, the last) iteration of the while loop (lines \ref{alg:while_start}-\ref{alg:while_end}). From Lemma~\ref{lem:alg_invariants}, Algorithm~\ref{alg:fast_projection} satisfies invariants $\LI^{(\tau)}$, $\LII^{(\tau)}$, $\LIII^{(\tau)}$, and $\LIV^{(\tau)}$. At iteration $\tau$, the algorithm selects $e^{(\tau)} \in \F^{(\tau)}$, and defines $\Fh^{(\tau)} = \{ e\in \F^{(\tau)}:\, \widetilde{\rho}_{e} > \widetilde{\rho}_{e^{(\tau)}} \}$, $\Fl^{(\tau)} = \{ e\in \F^{(\tau)}:\, \widetilde{\rho}_{e} < \widetilde{\rho}_{e^{(\tau)}} \}$ and $\Fe^{(\tau)} = \{ e\in \F^{(\tau)}:\, \widetilde{\rho}_{e} = \widetilde{\rho}_{e^{(\tau)}} \}$. Let $i_{\tau} \in \{1,\ldots,m\}$ be such that $e^{(\tau)}=e_{i_{\tau}}$. Then, we have the following two cases:

    \begin{itemize}
        \item[1.] $\gamma^{(\tau)} \leq \rA$. Then, the algorithm sets $\F^{(\tau+1)} = \Fl^{(\tau)}$, $k^{(\tau+1)} = k^{(\tau)} + | \Fh^{(\tau)} | + | \Fe^{(\tau)} |$, and $s^{(\tau+1)} = s^{(\tau)}$. Moreover, from Lemma~\ref{lem:alg_gamma_equality}, $\gamma^{(\tau)} = i_{\tau} + (1 / \widetilde{\rho}_{e_{i_{\tau}}} ) \sum_{j=i_{\tau}+1}^{m}  \widetilde{\rho}_{e_j} \leq \rA$, so $i_{\tau} \in \{1,\ldots,k^*\}$, which implies $\widetilde{\rho}_{i_{\tau}} \geq \widetilde{\rho}_{e_{k^*}}$. We also recall that the definition of $k^*$ implies $\widetilde{\rho}_{e_{k^*}} > \widetilde{\rho}_{e_{k^*}+1}$ (Lemma~\ref{lem:properties_of_g_k}). On the other hand, since the $\tau$-th iteration is the last one of the while loop, we must have $\Fl^{(\tau)} = \F^{(\tau+1)} = \emptyset$.

        We next argue that $e_{k^*} \in \F^{(\tau)}$, but $e_{k^*+1} \notin \F^{(\tau)}$. From $\LIV^{(\tau)}$, we know that $\F^{(\tau)}$ contains $e_{k^*}$ or $e_{k^{*}+1}$. Suppose $e_{k^*+1} \in \F^{(\tau)}$. Then, $\Fl^{(\tau)} = \emptyset$ implies $\widetilde{\rho}_{i_{\tau}} \leq \widetilde{\rho}_{e_{k^*+1}} < \widetilde{\rho}_{e_{k^*}}$, which is a contradiction. Thus, $e_{k^*} \in \F^{(\tau)}$. Additionally, $\Fl^{(\tau)} = \emptyset$ implies $\widetilde{\rho}_{i_{\tau}} \leq \widetilde{\rho}_{e_{k^*}}$, so $\widetilde{\rho}_{i_{\tau}} = \widetilde{\rho}_{e_{k^*}}$. In particular, $e_{k^*} \in \Fe^{(\tau)}$. Moreover, due to the facts that $\widetilde{\rho}_{e_{k^*}} > \widetilde{\rho}_{e_{k^*+1}}$, the elements of $\F^{(\tau)}$ are consecutive by $\LI^{(\tau)}$, and $\Fl^{(\tau)}=\emptyset$, it follows that $e_{k^*}$ is the smallest element of $\F^{(\tau)}$. Therefore, we can write $\F^{(\tau)} = \{ e_{\ell^{(\tau)}}, e_{\ell^{(\tau)}+1},\ldots, e_{k^*}\}$ and then $\ell^{(\tau)} = k^* - |\F^{(\tau)}| + 1$. Then,
        \begin{align*}
            k^{(\tau+1)} = k^{(\tau)} + | \Fh^{(\tau)} | + | \Fe^{(\tau)} | = \ell^{(\tau)} - 1 + | \F^{(\tau)} |=  k^{*}.
        \end{align*}
        In addition, 
        \begin{align*}
            s^{(\tau+1)} = s^{(\tau)} \overset{(\LIII^{(\tau)})}&{=} \sum_{j=k^{(\tau)}+1}^{m} \widetilde{\rho}_{e_j} - \sum_{e\in \F^{(\tau)}} \widetilde{\rho}_{e} \\ 
             \overset{(\LII^{(\tau)})}&{=} \sum_{j=\ell^{(\tau)}}^{m} \widetilde{\rho}_{e_j} - \sum_{e\in \F^{(\tau)}} \widetilde{\rho}_{e} \\
             &= \sum_{j=\ell^{(\tau)}}^{k^*} \widetilde{\rho}_{e_j} + \sum_{j=k^*+1}^{m} \widetilde{\rho}_{e_j} - \sum_{e\in \F^{(\tau)}} \widetilde{\rho}_{e} \\
             &= \sum_{j=k^*+1}^{m} \widetilde{\rho}_{e_j}.
        \end{align*}

        \item[2.] $\gamma^{(\tau)} > \rA$. Then, the algorithm sets $\F^{(\tau+1)} = \Fh^{(\tau)}$, $k^{(\tau+1)} = k^{(\tau)}$, and $s^{(\tau+1)} = s^{(\tau)} + \widetilde{\rho}_{e^{(\tau)}} | \Fe^{(\tau)} | + \sum_{e \in \Fl^{(\tau)}} \widetilde{\rho}_{e}$. Moreover, from Lemma~\ref{lem:alg_gamma_equality}, $\gamma^{(\tau)} = i_{\tau} + (1 / \widetilde{\rho}_{e_{i_{\tau}}} ) \sum_{j=i_{\tau}+1}^{m}  \widetilde{\rho}_{e_j} > \rA$, so $i_{\tau} \in \{k^*+1,\ldots,m\}$, which implies $\widetilde{\rho}_{e_{k^*+1}} \geq \widetilde{\rho}_{i_{\tau}}$. On the other hand, since the $\tau$-th iteration is the last one of the while loop, we must have $\Fh^{(\tau)} = \F^{(\tau+1)} = \emptyset$.

        We next argue that $e_{k^*+1} \in \F^{(\tau)}$, but $e_{k^*} \notin \F^{(\tau)}$. From $\LIV^{(\tau)}$, we know that $\F^{(\tau)}$ contains $e_{k^*}$ or $e_{k^{*}+1}$. Suppose $e_{k^*} \in \F^{(\tau)}$. Then, $\Fh^{(\tau)} = \emptyset$ implies $\widetilde{\rho}_{i_{\tau}} \geq \widetilde{\rho}_{e_{k^*}} > \widetilde{\rho}_{e_{k^*+1}}$, which is a contradiction. Thus, $e_{k^*+1} \in \F^{(\tau)}$. Additionally, $\Fh^{(\tau)} = \emptyset$ implies $\widetilde{\rho}_{i_{\tau}} \leq \widetilde{\rho}_{e_{k^*+1}}$, so $\widetilde{\rho}_{i_{\tau}} = \widetilde{\rho}_{e_{k^*+1}}$. In particular, $e_{k^*+1} \in \Fe^{(\tau)}$. Moreover, due to the facts that $\widetilde{\rho}_{e_{k^*}} > \widetilde{\rho}_{e_{k^*+1}}$, the elements of $\F^{(\tau)}$ are consecutive by $\LI^{(\tau)}$, and $\Fh^{(\tau)}=\emptyset$, it follows that $e_{k^*+1}$ is the largest element of $\F^{(\tau)}$, so $\ell^{(\tau)} = k^*+1$. Then,
        \begin{align}
            \label{eq:for_projection_alg_correcteness_cae_2}
            k^{(\tau+1)} = k^{(\tau)} = \ell^{(\tau)} - 1 =  k^{*}.
        \end{align}
        In addition, 
        \begin{align*}
            s^{(\tau+1)} &= s^{(\tau)} + \widetilde{\rho}_{e^{(\tau)}} | \Fe^{(\tau)} | + \sum_{e \in \Fl^{(\tau)}} \widetilde{\rho}_{e} \\
            \overset{(\LIII^{(\tau)})}&{=} \sum_{j=k^{(\tau)}+1}^{m} \widetilde{\rho}_{e_j} - \sum_{e\in \F^{(\tau)}} \widetilde{\rho}_{e} + \sum_{e \in \Fe^{(\tau)}} \widetilde{\rho}_{e} + \sum_{e \in \Fl^{(\tau)}} \widetilde{\rho}_{e} \\ 
            &= \sum_{j=k^{(\tau)}+1}^{m} \widetilde{\rho}_{e_j}    \\
            \overset{\eqref{eq:for_projection_alg_correcteness_cae_2}}&{=} \sum_{j=k^*+1}^{m} \widetilde{\rho}_{e_j}.
        \end{align*}
    \end{itemize}
    Therefore, after the the execution of the while loop, we have $k^{(\tau+1)} = k^*$ and $s^{(\tau+1)} = \sum_{j=k^*+1}^{m} \widetilde{\rho}_{e_j}$. This ensures that $\mu = (\rA - k^*) / \sum_{j=k^*+1}^{m} \widetilde{\rho}_{e_j}$, and by the second case of Theorem~\ref{thm:projection_closed_form}, the vector returned by Algorithm~\ref{alg:fast_projection} is the projection of $\widetilde{\rho}$.
    
    Finally, the fact that Algorithm~\ref{alg:fast_projection} runs in time $O(m)$ is shown in Lemma~\ref{lem:alg_linear_time}.
   \hfill\Halmos

	\end{document}